  \newcommand{\sys}{\mathsf}
\theoremstyle{remark}
\newtheorem{remark}{Remark}
\theoremstyle{}
\newtheorem{theorem}{Theorem}
\theoremstyle{}
\newtheorem{corollary}{Corollary}
\newtheorem{lemma}{Lemma}
\theoremstyle{}
\newtheorem{definition}{Definition}
\theoremstyle{remark}
\newtheorem{example}{Example}
\newtheorem{proposition}{Proposition}
\theoremstyle{definition}
\newcommand{\tabcaption}{\def\@captype{table}\caption}
\renewcommand{\bm}{\mathbf}
\definecolor{newcolor}{rgb}{0.5,0,1}
\newcommand{\mw}[1]{{\color{black}#1}}
\newcommand{\qifa}[1]{{\color{black}#1}}
\newcommand{\sheng}[1]{{\color{black}#1}}
\newcommand{\maw}[1]{{\color{black}#1}}
\newcommand{\new}[1]{{\color{black}#1}}
\begin{document}

\title{A Fundamental Storage-Communication Tradeoff \mw{for} Distributed Computing with Straggling Nodes}

\author{Qifa Yan,
        Mich$\grave{\mbox{e}}$le Wigger,~Sheng Yang, and~Xiaohu Tang 
\thanks{Q.  Yan and M.  Wigger are with  LTCI, T$\acute{\mbox{e}}$l$\acute{\mbox{e}}$com Paris, IP Paris, 91120 Palaiseau, France. E-mails: qifay2014@163.com, michele.wigger@telecom-paristech.fr.}
\thanks{S. Yang is with  L2S,
(UMR CNRS 8506), CentraleSup$\acute{\mbox{e}}$lec-CNRS-Universit$\acute{\mbox{e}}$ Paris-Sud,
91192 Gif-sur-Yvette, France. Email: sheng.yang@centralesupelec.fr. }
\thanks{X. Tang is with  the Information Security and National Computing Grid Laboratory,
Southwest Jiaotong University, 611756, Chengdu, Sichuan, China.
Email:  xhutang@swjtu.edu.cn.}
\thanks{Part of this work has been presented in ISIT 2019 \cite{YanISIT2019}.
 The work of Q. Yan and M. Wigger has been supported by
the ERC under grant agreement 715111. The work of X. Tang was supported in part by the National Natural Science Foundation of China under Grant 61871331.}
}
\maketitle

\IEEEpeerreviewmaketitle
\begin{abstract}
Placement delivery arrays  for distributed
  computing~(Comp-PDAs) have recently been proposed as a framework to construct \mw{universal} computing schemes for
  MapReduce-like systems. In this work, we extend this concept to systems with straggling nodes, i.e., to systems where a subset of the nodes  cannot accomplish the assigned
  map computations in due time. Unlike most previous works
  that focused on computing linear
  functions, our results \mw{are universal and} apply for arbitrary map and reduce functions. Our contributions are as follows. Firstly, we show how to construct a \mw{universal}  \new{coded} computing scheme for MapReduce-like  systems with straggling nodes from any given Comp-PDA. We also  characterize the storage  and communication loads of the resulting scheme in
  terms of the Comp-PDA parameters.  \mw{Then, we prove an
  information-theoretic converse bound on the
  storage-communication (SC) tradeoff achieved by \mw{universal} computing schemes with straggling nodes. We  show that \mw{the information-theoretic bound}
  matches the performance achieved by the coded computing schemes with straggling nodes corresponding to the Maddah-Ali and Niesen (MAN) PDAs, i.e., to the Comp-PDAs describing Maddah-Ali and Niesen's coded caching scheme.} Interestingly, the same Comp-PDAs (the MAN-PDAs)  are optimal for any number of straggling nodes, which implies that the map phase of optimal coded computing schemes does not need to be adapted to the number of stragglers in the system.
  We finally  prove that while the points that lie exactly on the fundamental SC tradeoff cannot be achieved with Comp-PDAs that require smaller number of files than the MAN-PDAs,  this is possible for some of the points that lie close to the SC tradeoff. For these latter points,  the decrease in the requested number of files can be exponential in the number of nodes of the system.
\end{abstract}

\begin{IEEEkeywords}
  Distributed computing, storage, communication, straggler, MapReduce,
placement delivery array.
\end{IEEEkeywords}


\section{Introduction}

Distributed computing has emerged as one of the most important
paradigms to speed up large-scale data analysis tasks.
One of the most popular programming models is
MapReduce~\cite{MapReduce} which has been used to parallelize
computations across distributed computing nodes, e.g., for machine
learning tools~\cite{Liu2015,MachineLearing}.

Consider the task of computing $\sys{D}$ output functions from $\sys{N}$
files through $\sys{K}$ nodes. With MapReduce, each output function
$\phi_d$, \mw{for $1\leq d\leq \sys{D}$,} can be decomposed into
 \begin{itemize}
   \item $\sys{N}$ \emph{map functions} $f_{d,1},\ldots,f_{d,\sys{N}}$,
     each depending on exactly one different file; and
   \item a \emph{reduce function} $h_d$ that combines the outputs of the $\sys{N}$ map functions.
 \end{itemize}
 Each node $k$ is responsible for computing a subset of
 $\sys{\frac{D}{K}}$ output functions  through three phases. In the first \emph{map phase}, \mw{a central server stores a subset of files  $\mathcal{M}_k$ at node $k$, for each $k\in[\sys{K}]$. Each node $k$ then}   computes all the \mw{$\sys{D}$} \emph{intermediate values} (IVAs) $f_{1,n}(w_n),\ldots,f_{\sys{D},n}(w_n)$ \mw{corresponding to each of its}  stored files $w_n\in\mathcal{M}_k$. In the subsequent \emph{shuffle phase}, it creates a signal from its computed IVAs and sends  the signal to all the other nodes. \maw{Based on the received exchanged signals and the locally computed IVAs, in the final reduce phase it reconstructs all the IVAs pertaining to its own output functions and calculates the desired outputs.}

 Recently, Li \emph{et al.} proposed a scheme named coded distributed
 computing (CDC) to reduce the communication load for data shuffling
 between the map  and  reduce phases \cite{Li2018Tradeoff}. The idea is
 to create multicast opportunities by duplicating the files and
 computing the corresponding map functions at different nodes.
 It is shown that the CDC scheme achieves the fundamental
\emph{storage-communication tradeoff}, i.e., it has the lowest
communication load for a given storage constraint. This result has been extended in various directions.
For example,    \cite{Yan2018SCC,Fragouli,YanD3C} account also for the
computation load during the map phase;  \cite{Qian2017How}  studies the computation
 \mw{resource-allocation problem}; \cite{Li2017Framework,Li2017Egdge, Li2018Wireless,Lampiris2018ISIT} consider wireless (noisy)~networks between computation nodes;
\cite{Song2} considers \mw{a} model \mw{where during the shuffle phase each node  broadcasts only to a random subset of the nodes}.



 In this paper, we consider a setup where \mw{during the map phase} each node takes a random
amount of time to compute its desired map functions\cite{Lee2018codes}. In this case,
instead of waiting for all the nodes to finish the assigned
computations, which can cause an intolerable delay, data shuffling starts
as soon as any set of $\sys{Q}$ nodes, $\sys{Q}\in[\sys{K}]$, terminate \mw{their
map procedures}. \mw{The set of $\sys{Q}$ nodes that first terminate the map procedure}  are called \emph{active
nodes}, while the remaining $\sys{K-Q}$ nodes are called
\emph{straggling nodes} or \emph{stragglers}. Note that
\mw{the stragglers are not identified prior to the beginning of the map phase and} the map phase
has to be designed without such knowledge.


Distributed computing systems with straggling nodes  have mainly
been studied in the context of a server-worker framework.  \mw{In this framework, a central server distributes the raw data to the workers like in the above described map phase, but following this map phase the workers directly communicate their  intermediate results to the server, which then produces the final outputs. (Thus, under the server-worker framework, all final outputs   are  calculated at the server and not at the distributed computing nodes as is the case in MapReduce systems.)}  
\mw{Under the server-worker framework, distributed computing systems with straggling nodes have for example been studied in
\cite{Lee2018codes,Reisizadeh2017conf,QYu2017Polynomial,Qyu2018polyminal,LeeMatrix,HierarchicalCoding,Cadambe2018,Baharav2018ISIT,Kiani2018ISIT,Nuwan2018,secretsharing}, which focused on  high-dimensional matrix-by-matrix or matrix-by-vector
multiplications,  and in
\cite{Tandon2017Gradient,Raviv2017,Zachary2018ISIT,Halbawi2018ISIT,Deniz18},
which proposed  codes for gradient computing.}


 \mw{Fewer works studied  MapReduce systems with straggling nodes
(hereafter referred to as straggling systems) which are more relevant for the present article. Specifically, Li \emph{et al.}
\cite{Li2016unified} proposed to incorporate MDS codes into the CDC
scheme to cope with straggler nodes. Their construction however works only when the map functions  accomplish matrix-by-vector
multiplications. Improved constructions were proposed by Zhang \emph{et al.} {\qifa{by choosing the parameters of MDS code and CDC scheme separately in a more flexible way}}
\cite{Zhang2018Improved},  but also their techniques are applicable only for  map functions that are matrix-by-matrix
multiplications. In many practical applications such as
computations in neural networks and machine learning, the map functions
are  non-linear and can be very complicated with little structure.} This motivates us to investigate the MapReduce framework with straggling nodes for general map and reduce functions. \mw{In particular, we will present universal coded computing schemes that can be applied to arbitrary straggling systems, irrespective of the specific map and reduce functions. Moreover, we will show \new{the} optimality of our schemes  among the class of universal schemes that do not rely on special properties of the map and reduce functions.}

More specifically, in this work, we first propose a systematic
construction of \mw{universal} coded computing schemes for straggling systems from any
 \emph{placement delivery array for distributed computing
(Comp-PDA)} \cite{Yan2018SCC}. A placement delivery array (PDA) is an
array  whose entries are either a special symbol $``*"$ or some
integer numbers called ordinary symbols. It was introduced in \cite{Yan2017PDA} to represent
in a single array both the
placement and the delivery of  coded caching schemes
with uncoded prefetching. \qifa{In particular, the coded caching schemes proposed by Maddah-Ali and Niesen in \cite{Maddah2014Fundamental} \mw{can be represented as PDAs \cite{Yan2017PDA}. The corresponding PDAs will be referred to as MAN-PDAs, and, as we will see, they play a fundamental role also in coded computing with stragglers.}}
\qifa{PDAs have further been  generalized  to other coded caching scenarios such as device-to-device models  \cite{d2d2019wang}, combination networks \cite{Yan2018ISIT}, networks with private demands \cite{Subpacket2019Private}, and  medical data sharing problems \cite{Sun2020medical}. Moreover, several different PDA  constructions have been proposed in \cite{Cheng2019, Cheng2019variant, HyperGraph, Cheng2019group, bipartite}.}
\mw{In this paper our focus is on   a
subclass of PDAs, called Comp-PDAs, 
which
were} introduced in \cite{Yan2018SCC} to describe coded computing
schemes for MapReduce systems without straggling nodes. In this paper, we
show that Comp-PDAs  can also be used to construct coded computing schemes for
straggling systems, \mw{and we express the storage
and computation loads of the obtained schemes in terms of the Comp-PDA parameters.}

\mw{We then proceed to characterize  the fundamental
storage-communication (SC) tradeoff for straggling systems by showing
that the SC tradeoff curve achieved by  coded computing schemes
obtained  from the \qifa{MAN-PDAs} matches a new information-theoretic
converse for universal computing schemes with stragglers.
That means, our converse bounds the SC tradeoffs achieved by
coded computing schemes that apply to arbitrary map and reduce
functions. For special map and reduce functions, e.g., linear functions,
it is possible to find tailored coded computing schemes  that
achieve better SC tradeoffs, than \sheng{the one} implied by our information-theoretic converse, see e.g., \cite{Zhang2018Improved}. It is worth pointing out that the MAN-PDA based coded computing schemes adopt a fixed storage strategy irrespective of the active set size $\sys{Q}$. This implies that the fundamental SC-tradeoff curve remains unchanged even if the   active set size $\sys{Q}$ has not yet been determined during the map phase. \maw{The proposed schemes are thus optimal also  in a scenario where the system imposes a strict time constraint  proceeds  to the shuffle and reduce phases with  the random number of nodes that have by then terminated their IVA calculations.}}

 In a final part of the manuscript, we study the complexity of optimal (or near-optimal) coded computing schemes. In fact, a
major practical limitation of  the SC-optimal coded computing schemes based on MAN-PDAs  is that they  can only be
implemented if the number of files \mw{$N$ 
in the system grows
exponentially with the number of nodes $\sys{K}$.  However, as we show in this paper,
in most cases, MAN-PDAs achieve their corresponding fundamental SC pairs with smallest possible number of files, i.e., with smallest \emph{file complexity}, among all Comp-PDA
based coded computing schemes. The mentioned practical limitation   is thus not a weakness of the MAN-PDAs, but seems  inherent to \maw{PDA-based coded computing schemes for stragglers}. Interestingly, the problem can be circumvented by slightly backing off from the SC-optimal tradeoff curve. We show that the coded computing schemes corresponding to \maw{some of}  the  Comp-PDAs  in \cite{Yan2017PDA}  achieve SC pairs close to the fundamental SC tradeoff curve but with
significantly smaller number of files than the optimal MAN-PDAs.
\maw{More precisely, we fix an integer $q$, let the number of nodes $\sys{K}$ be a  multiple of  $q$, and the storage load $r$ be such that $\frac{r}{\sys{K}}\in\{\frac{1}{q},\frac{q-1}{q}\}$ holds.  We compare the Comp-PDA in \cite{Yan2017PDA} and the  MAN-PDA for such pairs $(\sys{K},r)$  while we let  both of them tend to infinity proportionally. This comparison shows that the
 ratio of the minimum required number of files of the Comp-PDA in \cite{Yan2017PDA} and the MAN-PDA vanishes as
$O\left(e^{\sys{K}(1-\frac{1}{q})\ln \frac{q}{q-1}}\right)$,  while the ratio of their communication loads  approaches~$1$.}}

\mw{We summarize the contributions of this paper:
\begin{enumerate}
  \item We establish a general framework for constructing universal coded computing schemes for  straggling systems from Comp-PDAs, and evaluate their SC pairs in terms of the Comp-PDA parameters.
  \item We derive the fundamental SC tradeoff for any universal straggling system by means of an  information theoretic converse that matches the SC pairs achieved by coded computing schemes with stragglers based on the MAN-PDAs.
  \item We prove that, while in most cases points on the fundamental SC tradeoff curve can be achieved only with the same file complexity as MAN-PDA based schemes, points close to the fundamental SC tradeoff curve can be achieved with significantly smaller file complexities. 
\end{enumerate}

}

The remainder of this paper is organized as follows. Section
\ref{sec:model} formally describes our model, and Section~\ref{sec:PDA}
reviews the definitions of PDAs and Comp-PDAs; Section \ref{sec:mainresults} presents the main results of this paper; \mw{Sections \ref{sec:CC-PDA} to \ref{sec:proofofthmF} present the major proofs of our results;} and Section \ref{sec:conclusion} concludes this paper.

\paragraph*{Notations} For positive integers $n,k$ such that $n\geq k$, we use the notations
$[n]\triangleq\{1,2,\ldots,n\}$, and
$[k:n]\triangleq\{k,k+1,\ldots,n\}$. The binomial coefficient is denoted
by $C_n^k\triangleq \frac{n!}{k!(n-k)!}$ for $n\ge k\ge0$; we set $C_n^k =
0$ when $k<0$ or $k>n$. For $k\leq n$, we use $\mathbf{\Omega}_n^k$ to denote the collection 
 of all subsets of $[n]$ of cardinality $k$, i.e., $\mathbf{\Omega}_n^k\triangleq\{\mathcal{T}\subseteq[n]:|\mathcal{T}|=k\}$.  The binary field is denoted by $\mathbb{F}_2$
and the $n$ dimensional vector space over $\mathbb{F}_2$ is denoted by
$\mathbb{F}_2^n$. We use $|\mathcal{A}|$ to denote the
cardinality of the set $\mathcal{A}$, while for a signal $X$, $|X|$ is
the number of bits in $X$. 
\sheng{The order of set operations is from left to right}. \mw{Finally, $\mathbbm{1}(\cdot)$ denotes the indicator function that evaluates to $1$ if the statement in the parenthesis is true and it evaluates to $0$ otherwise.}

\section{System Model}\label{sec:model}

A
$(\sys{K,Q})$ \emph{straggling system} is
parameterized by the positive integers
\begin{IEEEeqnarray}{c}
\sys{K,Q,N,D,U,V,W,}\notag
\end{IEEEeqnarray}
as described in the following.
The system aims  to compute $\sys{D}$ output functions $\phi_1,\ldots,\phi_\sys{D}$ through $\mathsf{K}$ distributed computing nodes from $\sys{N}$ files.
Each output function $\phi_d:\mathbb{F}_{2}^{\mathsf{NW}}\rightarrow
\mathbb{F}_{2}^\mathsf{U}$ ($d\in [\sys{D}]$) takes as inputs the length $\sys{W}$ files in the  library $\mathcal{W}=\{w_1\ldots,w_{\mathsf{N}}\}$,
 and outputs a bit stream of length $\mathsf{U}$, i.e.,
 \begin{IEEEeqnarray}{c}
 u_d=\phi_d(w_1,\ldots,w_\mathsf{N})\in\mathbb{F}_{2}^\mathsf{U}.\notag
 \end{IEEEeqnarray}

Assume that the computation of the output functions $\phi_d$ can be decomposed as:
\begin{IEEEeqnarray}{c}
\phi_d(w_1,\ldots,w_{\mathsf{N}})=h_d(f_{d,1}(w_1),\ldots,f_{d,\mathsf{N}}(w_{\mathsf{N}})),\notag
\end{IEEEeqnarray}
where
\begin{itemize}
  \item the \emph{map function}
$f_{d,n}: \mathbb{F}_2^\mathsf{W}\rightarrow \mathbb{F}_2^\mathsf{V}$
maps the file $w_n$ into a binary stream of length $\mathsf{V}$, called intermediate value~(IVA), i.e.,
 \begin{IEEEeqnarray}{c}
 v_{d,n}\triangleq f_{d,n}(w_n)\in \mathbb{F}_2^\mathsf{V},\quad
 \forall~n\in[\mathsf{N}];\notag
 \end{IEEEeqnarray}
  \item the \emph{reduce function}
  $h_d: \mathbb{F}_2^{\mathsf{NV}}\rightarrow \mathbb{F}_2^\mathsf{U}$, 
  maps the IVAs $\{v_{d,n}\}_{n=1}^{\mathsf{N}}$
 into the output stream
   \begin{IEEEeqnarray}{c}
 u_d=\phi_d(w_1,\ldots,w_{\sys{N}})=h_d(v_{d,1},\ldots,v_{d,{\mathsf{N}}}).\notag
   \end{IEEEeqnarray}
   \end{itemize}

 Notice that a decomposition into map and reduce functions is always possible. In fact, trivially, one can set
   the map and reduce functions to be the identity and output functions respectively, i.e., $f_{d,n}(w_n)=w_n,$ and $h_d=\phi_d,~\forall~n\in [\sys{N}],~d\in[\sys{D}]$, in which case $\sys{V}=\sys{W}$. However, to mitigate the communication cost during the shuffle phase, one would prefer a decomposition such that the length of the IVAs  is  as small as possible  while still allowing the nodes  to compute the final outputs. The computation is carried out  through three phases, namely, the
map, shuffle, and reduce phases.
\begin{enumerate}
  \item \textbf{Map Phase:} Each node $k\in[\sys{K}]$ stores a subset  of files $\mathcal{M}_{k}\subseteq \mathcal{W}$,  
 and tries to compute all the IVAs from the files in $\mathcal{M}_k$, denoted by $\mathcal{C}_k$:
\begin{IEEEeqnarray}{c}
\mathcal{C}_{k}\triangleq\{v_{d,n}:d\in[\mathsf{D}],w_n\in\mathcal{M}_{k}\}.\label{set:computedIVAs}
\end{IEEEeqnarray}
Each node has a random amount of time to compute its corresponding IVAs. To limit
latency of the system, the coded computing scheme proceeds with
the shuffle and reduce phases as soon as a fixed number of
$\mathsf{Q}\in[\mathsf{K}]$ nodes have terminated the map computations.
These nodes are called \emph{active nodes}, and the set of all active nodes is called \emph{active set}, whereas the other
$\sys{K}-\sys{Q}$ nodes are called \emph{straggling nodes}.
 For simplicity,
we consider the symmetric case in which each subset $\mathcal{Q}\subseteq [\sys{K}]$ of size $|\mathcal{Q}|=\mathsf{Q}$ is active with same probability.
Let the random variable $\mathbf{Q}$ denote the random active set.
Then,
\begin{IEEEeqnarray}{c}
  \Pr\left\{\mathbf{Q}= \mathcal{Q} \right\} =  \frac{1}{C_{\mathsf{K}}^{\mathsf{Q}}}, \quad \forall~ \mathcal{Q} \in \mathbf{\Omega}_\sys{K}^{\sys{Q}}.\notag
\end{IEEEeqnarray}
In our model, we also assume that the map phase has been designed in a way that all the files can be
recovered\footnote{In this paper, we thus exclude the ``outage'' event in
which some active sets cannot compute the given function due to missing
files.}\label{footnote1} from any active set of size $\sys{Q}$.
Hence, for any file $w_n\in\mathcal{W}$, the number of nodes  storing this file  \new{$t_n$} must
satisfy
\begin{IEEEeqnarray}{c}
t_n\geq\mathsf{K}-\mathsf{Q}+1,\quad\forall\; n\in[\mathsf{N}].
\label{eq:tn}
\end{IEEEeqnarray}
The output functions $\phi_1,\ldots,\phi_\mathsf{D}$ are then uniformly
assigned\footnote{Here we assume for simplicity that
$\mathsf{Q}$ divides $\mathsf{D}$. Note that otherwise we can always add empty functions for the
assumption to hold.}\label{footnote1} to the nodes in $\mathbf{Q}$. Let
$\mathcal{D}_{k}^\mathbf{Q}$ be the set of indices of output functions
assigned to a given node $k\in\mathbf{Q}$. Thus,
$\mathbf{\Gamma}^{\mathbf{Q}}\triangleq
\big\{\mathcal{D}_{k}^{\mathbf{Q}}\big\}_{k\in\mathbf{Q}}$ forms a
partition of $[\mathsf{D}]$,  and each set $\mathcal{D}_{k}^\mathbf{Q}$
is of cardinality $\frac{\mathsf{D}}{\mathsf{Q}}$. \qifa{Denote
the set of all the partitions of $[\mathsf{D}]$ into \qifa{$\sys{Q}$
equal-sized subsets}  by $\mathbf{\Delta}$.}
  \item \textbf{Shuffle Phase:} The nodes in $\mathbf{Q}$ proceed to exchange their computed IVAs. Each node $k\in\mathbf{Q}$ multicasts a signal
 \begin{IEEEeqnarray}{c}
        X_{k}^{{\mathbf{Q}}}=\varphi^{{\mathbf{Q}}}_{k}\left(\mathcal{C}_{ k}, \mathbf{\Gamma}^{\mathbf{Q}}\right)\notag
\end{IEEEeqnarray}
 to all  the other nodes in $\mathbf{Q}$. \mw{For each $k\in\mathbf{Q}$, here
$ \varphi_{k}^{{\mathbf{Q}}}
 \;\colon \;\mathbb{F}_{2}^{|\mathcal{C}_{k}|\mathsf{V}}\times\mathbf{\Delta}\rightarrow
  \mathbb{F}_2^{|X_k^{\mathbf{Q}}|}$
denotes the encoding function of node $k$.}

We assume a perfect multicast channel, i.e.,
each active node $k\in \mathbf{Q}$ receives perfectly all the transmitted signals
\begin{IEEEeqnarray}{c}
X^{{\mathbf{Q}}}\triangleq\big\{ X_{k}^{{\mathbf{Q}}}:
k\in\mathbf{Q}\big\}.\notag
\end{IEEEeqnarray}

  \item \textbf{Reduce Phase:} Using the received signals
    $X^{{\mathbf{Q}}}$ from the shuffle phase and the local IVAs
    $\mathcal{C}_{k}$ computed in the map phase, node $k$ has to be able to compute all the IVAs
 \begin{IEEEeqnarray}{c}
       \left\{(v_{d,1},\ldots,v_{d,\mathsf{N}})\right\}_{d\in\mathcal{D}_{k}^\mathbf{Q}}=\psi^{
    {\mathbf{Q}}}_{k}\left(X^{{\mathbf{Q}}},\mathcal{C}_{ k}, \mathbf{\Gamma}^{\mathbf{Q}}\right),\notag
\end{IEEEeqnarray}
where
$       \psi^{{\mathbf{Q}}}_{k}: \mathbb{F}_{2}^{\sum_{k\in\mathbf{Q}}|X_k^{{\mathbf{Q}}}|}\times \mathbb{F}_{2}^{|\mathcal{C}_{k}|
       \mathsf{V}}\times\mathbf{\Delta}\rightarrow
       \mathbb{F}_2^{\frac{\mathsf{NDV}}{\mathsf{Q}}}$.
Finally, with the restored IVAs, it computes each assigned function via
the reduce function, namely,
 \begin{IEEEeqnarray}{c}
   u_{d}=h_{d}(v_{d,1},\ldots,v_{d,\sys{N}}),\quad \forall~d\in\mathcal{D}_{ k}^\mathbf{Q}.\notag
\end{IEEEeqnarray}

\end{enumerate}

To measure the storage and communication costs, we introduce the following
definitions.

\begin{definition}[Storage Load] \emph{Storage load} $\overline{r}$ is
  defined as the total number of files stored across the $\mathsf{K}$
  nodes normalized by the total number of files $\mathsf{N}$, i.e.,
\begin{IEEEeqnarray}{c}
\overline{r}\triangleq\frac{\sum_{k=1}^K|\mathcal{M}_k|}{\mathsf{N}}.\notag
\end{IEEEeqnarray}
\end{definition}

\begin{definition}[Communication Load] \emph{Communication load} $\overline{L}$ is defined as the average total number of bits sent in the shuffle phase, normalized by the total number of bits of all intermediate values, i.e.,
\begin{IEEEeqnarray}{c}
\overline{L}=\mathbf{E}\left[\frac{\sum_{k\in\mathbf{Q}}|X_k^{\mathbf{Q}}|}{\mathsf{NDV}}\right],\label{communication_load}
\end{IEEEeqnarray}
where the expectation is taken over  the \mw{random} active set $\mathbf{Q}$.
\end{definition}

\begin{definition} [Storage-Communication (SC) Tradeoff] \label{def:SC}A pair of real
  numbers  $(r,L)$ is achievable if for any $\epsilon>0$,  there exist
  positive integers $\sys{N,D,U,V,W}$, a storage design $\{\mathcal{M}_k\}_{k=1}^{\mathsf{K}}$ of storage load less than $r+\epsilon$, a set of uniform assignments of output functions $\left\{\mathbf{\Gamma}^{\mathcal{Q}}\right\}_{\mathcal{Q}\in\mathbf{\Omega}_\sys{K}^{\sys{Q}}}$, and a collection of encoding functions $\big\{ \big\{\varphi^{\mathcal{Q}}_{k}\big\}_{k\in\mathcal{Q}}\big\}_{\mathcal{Q} \in \mathbf{\Omega}_{\sys{K}}^{\sys{Q}}}$ with communication load less than $L+\epsilon$,
	such that all the output functions $\phi_1,\ldots, \phi_\mathsf{D}$ can be  computed successfully.
For a fixed $\mathsf{Q}\in[\mathsf{K}]$, we define the fundamental storage-communication (SC) tradeoff  as
\begin{IEEEeqnarray}{c}
L_{\mathsf{K},\mathsf{Q}}^*(r)\triangleq\inf\left\{L:(r,L)~\text{is achievable}\right\}.\notag
\end{IEEEeqnarray}
\end{definition}

Note that the non-trivial interval for the values of
 $r$ is $[\mathsf{K}-\mathsf{Q}+1,\mathsf{K}]$. Indeed,
if $r>\mathsf{K}$, then each node can store all the files and compute any
function locally. On the other hand, from the
assumption~\eqref{eq:tn}, we have for any feasible scheme
\begin{IEEEeqnarray}{rCl}
\overline{r}
&=&\frac{\sum_{k=1}^\mathsf{K}|\mathcal{M}_k|}{\mathsf{N}}\notag\\
&=&\frac{\sum_{n=1}^{\mathsf{N}}t_n}{\mathsf{N}}\notag\\
&\geq&\mathsf{K}-\mathsf{Q}+1.\notag
\end{IEEEeqnarray}

Therefore, throughout the paper, we only focus on the interval
$r\in[\mathsf{K}-\mathsf{Q}+1,\mathsf{K}]$ for any given
$\sys{Q}\in[\sys{K}]$.
Further,  for a given storage design $\{\mathcal{M}_k\}_{k=1}^{\mathsf{K}}$, by the symmetry assumption of the reduce functions and the fact that each node has all the IVAs of all $\mathsf{D}$ output functions in the files it has stored,   the optimal communication load is independent  of  the reduce function assignment.
This is similar to the case without straggling nodes (see \cite[Remark 3]{Li2018Tradeoff}).


\begin{definition}[File Complexity] The smallest number of files $\sys{N}$ required to implement a given scheme is called the file complexity
of this scheme.
\end{definition}

\mw{In above problem definition,  the various nodes store entire  files during the map phase  and during the shuffle phase they reconstruct all the  IVAs corresponding to their output functions. This system definition does not allow to reduce  storage or communication loads by exploiting special structures of the map or reduce functions as proposed for example in \cite{Li2016unified,Zhang2018Improved}. As a consequence, all the coded computing schemes presented in this paper universally apply to arbitrary map and reduce functions and the SC tradeoff in Definition~\ref{def:SC} applies only to such universal schemes. In fact, as we will explain, for linear reduce functions the SC tradeoff derived in \cite{Zhang2018Improved} improves over the one  in Definition~\ref{def:SC}, since it was derived for a system where nodes  do not have to store  individual files and reconstruct all the required IVAs, but linear combinations of them suffice.
}
\section{Placement Delivery Arrays for Straggling
Systems}\label{sec:PDA}


\subsection{Definitions}
Placement delivery arrays (PDA) introduced in \cite{Yan2017PDA} are the main tool of this paper. To adapt to our setup, we use the \qifa{following} definition \mw{from} \cite{Yan2018SCC}.
\begin{definition}[Placement Delivery Array (PDA)]\label{def:PDA} For positive
  integers $\sys{K,F,T}$ and a nonnegative integer $\sys{S}$,  an $\mathsf{F}\times \mathsf{K}$ array
  $\bm{A}=[a_{j,k}]$, $j\in [\mathsf{F}], k\in[\mathsf{K}]$, composed of
  $\mathsf{T}$ special symbols $``*"$  and some ordinary symbols $1,\ldots, \mathsf{S}$, each occurring at least once,  is called   a $(\mathsf{K,F,T,S})$ PDA, if for any two distinct entries $a_{j_1,k_1}$ and $a_{j_2,k_2}$,   we have $a_{j_1,k_1}=a_{j_2,k_2}=s$ \mw{with $s$} an ordinary symbol  only if
  \begin{enumerate}
     \item [a)] $j_1\ne j_2$, $k_1\ne k_2$, i.e., they lie in distinct rows and distinct columns; and
     \item [b)] $a_{j_1,k_2}=a_{j_2,k_1}=*$,
     \end{enumerate}
     \mw{ i.e., the corresponding $2\times 2$  subarray formed by rows $j_1,j_2$ and columns $k_1,k_2$ must be of the following form}
  \begin{IEEEeqnarray}{c}
    \left[\begin{array}{cc}
      s & *\\
      * & s
    \end{array}\right]~\textrm{or}~
    \left[\begin{array}{cc}
      * & s\\
      s & *
    \end{array}\right].\notag
  \end{IEEEeqnarray}

   A PDA with all ``$*$'' entries is called trivial. Notice that in this case $\sys{S} = 0$ and $\sys{KF} = \sys{T}$. A PDA  is \qifa{called a $g$-regular PDA} if each ordinary symbol occurs exactly $g$ times.
\end{definition}
\begin{example}\label{exam:PDA} The following array is a $3$-regular $(4,6,12,4)$ PDA.
\begin{IEEEeqnarray}{c}
\bm{A}=\left[\begin{array}{cccc}
        * & * & 1 & 2 \\
        * & 1 & * & 3 \\
        * & 2 & 3& * \\
        1 & * & * & 4 \\
        2 & * & 4 & * \\
        3 & 4 & * & *
      \end{array}
\right].\notag
\end{IEEEeqnarray}
\end{example}
For our purpose, we introduce the following definitions similarly to the ones in \cite{Yan2018SCC}.
\begin{definition}[PDA for Distributed Computing (Comp-PDA)]
A Comp-PDA is a PDA with at least one $``*"$-\mw{symbol} in each  row.
\end{definition}

\begin{definition}[Minimum Storage Number]
Given a  Comp-PDA $\bm A$, its \emph{minimum storage number} $\tau$ is
defined as the minimum number of $``*"$-symbols in any of the rows of $\bm A$.
\end{definition}

\begin{definition}[Symbol Frequencies]
 For a given nontrivial $(\sys{K}, \sys{F}, \sys{T}, \sys{S})$ Comp-PDA, let $\sys{S}_t$ denote the number of ordinary symbols that occur
exactly $t$ times, for $t \in[\sys{K}]$. The symbol frequencies $\theta_1, \theta_2, \ldots,\theta_\sys{K}$ of the Comp-PDA are then defined as
\begin{IEEEeqnarray}{c}
\theta_t\triangleq\frac{\sys{S}_tt}{\sys{K}\sys{F}-\sys{T}},\quad t\in[\sys{K}].\notag
\end{IEEEeqnarray}
They indicate the fractions of ordinary entries of the Comp-PDA that occur exactly $1,2,\ldots,\sys{K}$ times, respectively. For completeness, we also define $\theta_t\triangleq 0$ for $t>K$.
\end{definition}

\subsection{Constructing a Coded Computing Scheme \sheng{from} a Comp-PDA: A Toy Example}\label{CCS:example}

\mw{In this subsection we illustrate the connection between
Comp-PDAs and  coded computing schemes with stragglers at hand of  \mw{a} toy example. Section  \ref{sec:CC-PDA} ahead describes a general procedure to obtain a coded computing scheme with stragglers from any Comp-PDA, and it  presents a performance
analysis for the obtained scheme.}

Consider the  $(4,6,12,4)$ Comp-PDA $\bm A$ in Example \ref{exam:PDA},
and assume a $(\sys{K},\sys{Q})=(4,3)$ straggling system with
$\sys{N}=6$ files and $\sys{D}=3$ output functions. The scheme is
illustrated in Fig.~\ref{fig:example} for the case that node $3$ is
straggling. \mw{In this Fig.~\ref{fig:example}, the  line ``files'' in each of the four boxes indicates the  files stored at the nodes. The remaining lines in the boxes illustrate \mw{the computed IVAs, where red circles, green triangles, and blue squares depict IVAs pertaining to output functions $\phi_1$, $\phi_2$, and $\phi_{3}$, respectively. More specifically, a red circle with the number $i$ in the middle stands for IVA $v_{i,1}$, and so on.} The lines below the boxes of the active nodes 1, 2, and 4 indicate the IVAs that  the nodes have to learn to be able to compute their output functions. In this example it is assumed that  node 1 computes $\phi_1$, node~2 computes $\phi_2$, and node $4$ computes $\phi_3$.   
The signals on the left/right side of the boxes indicate the signals sent by the nodes. Here,  splitting of IVAs indicates that the IVA is decomposed into a substring consisting of the first half of the bits and a substring consisting of the second half of the bits, and  the plus symbol stands for a  bit-wise XOR-operation on the substrings.}

\mw{We now explain the distributed coding scheme associated with the PDA
${\bm A}$ more formally.  We start by associating   column $k$ of ${\bm
A}$  with node $k$  in the system, ($k\in[4]$), and row $j$ of ${\bm A}$
with file $w_j$ in the system, ($j\in[6]$). In the map phase,  node $k$
stores  file $w_j$  if the row-$k$ and column-$j$ entry of ${\bm A}$ is
a  $``*"$-symbol. For example, node $1$, which is associated with the first column of the Comp-PDA, stores  files $w_1, w_2$ and $w_3$.
Each node then computes all the IVAs corresponding to the files it has stored.}

\begin{figure}
  \centering
  \includegraphics[width=0.48\textwidth]{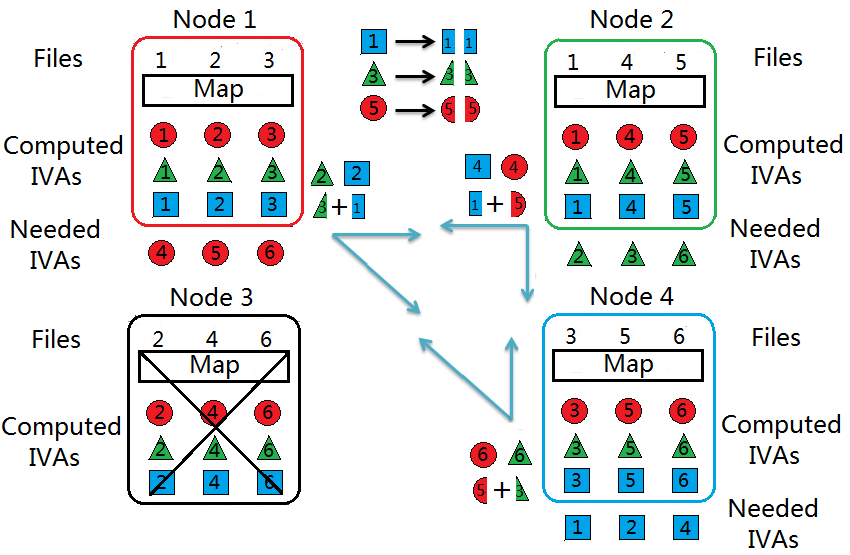}\\
  \caption{An example of CCS scheme for a system with $\mathsf{K}=4$, $\mathsf{N}=6$ and  $\mathsf{Q}=3$, where the third node is a straggling node.}\label{fig:example}
\end{figure}

\mw{In our example, we assume that  node $3$ is the only straggler. Nodes 1, 2, and 4 thus form the active set and as such continue with the shuffle and reduce procedures. Accordingly, we  extract from the PDA ${\bm A}$ the subarray ${\bm A}^{\{1,2,4\}}$  consisting of columns $1,2$ and $4$ (the columns corresponding to the active set):
	\begin{IEEEeqnarray}{c}
{\bm{A}}^{\{1,2,4\}}=\left[\begin{array}{ccc}
        * & * & 2 \\
        * & 1 &   3 \\
        * & 2 &  * \\
        1 & * &   4 \\
        2 & * &   * \\
        3 & 4 &   *
      \end{array}
\right].\notag
\end{IEEEeqnarray}}

\mw{Notice that  ${\bm A}^{\{1,2,4\}}$ is also a Comp-PDA \maw{(in particular it has at least one ``$*$" symbol in each row)} and the node
corresponding to a given column has stored all the files indicated by
the $``*"$-symbols in this column. \maw{The same statement applies also to the subarrays associated with any other possible active set of size $3$. After the shuffling phase, we are thus in the same situation as  described in
 \cite{Yan2018SCC,Tang2018ITW} when a coded computing scheme without stragglers is to be constructed from a Comp-PDA, and as a consequence,  the same shuffle and reduce procedures  can be applied.} We described these procedures here in detail for completeness.

 The shuffle phase is as follows. For each $s\in\{1,2,3,4\}$ occuring $g$ times ($g=2$ or $3$), pick out the $g\times g$ array containing $s$. For example, symbol  $s=2$ is associated with the following 3-by-3 subarray:}
\begin{IEEEeqnarray}{c}
\begin{array}{c|ccc}\bottomrule
&1&2&4\\\hline
       w_1& * & * & 2 \\
       w_3& * & 2 & * \\
       w_5& 2 & * & *\\\toprule
      \end{array}\label{sub:array}
\end{IEEEeqnarray}
Each \mw{occurence of the symbol $``2"$ in this subarray stands for an
IVA desired by the node in the corresponding column and computed at the
other nodes in this subarray. The row of the symbol indicates to which
file this IVA pertains, and the $``*"$ symbols in this row indicate
that the IVA can indeed be computed by all nodes \maw{in the active set} except for the one
corresponding to the column of the $``2"$ symbol. In the above example,
the three  $``2"$ symbols from top to down represent the IVAs
$v_{3,1}$, $v_{2,3}$, and $v_{1,5}$, respectively. These IVAs are
shuffled in a coded manner. To this end, they are first split into
$g-1=2$ equally-large sub-IVAs, and each of these sub-IVAs is labeled by
one of the nodes where the IVA has been computed (i.e,. by the columns
with $``*"$ symbols). In our example, we split $v_{3,1}=(v_{3,1}^1,v_{3,1}^2)$, $v_{2,3}=(v_{2,3}^1,v_{2,3}^4)$ and $v_{1,5}=(v_{1,5}^2,v_{1,5}^4)$. The signal sent by a given node $i$ is then simply the componentwise XOR of the sub-IVAs with superscript $i$. So,
  nodes $1$, $2$, $4$ send $v_{2,3}^1\oplus v_{3,1}^1$, $v_{3,1}^2\oplus v_{1,5}^2$ and $v_{1,5}^4\oplus v_{2,3}^4$, respectively. The same procedure is applied for  all other ordinary symbols $1$, $3$, and $4$ in subarray ${\bm A}^{\{1,2,4\}}$. The following table \qifa{lists} all the signals sent at the 4 nodes, \qifa{where the first line lists their associated \maw{ordinary} symbols}:}
  \qifa{
  \begin{IEEEeqnarray}{c}
\begin{array}{c|c|c|c|c}\bottomrule
\textnormal{Symbol}&1&2&3&4\\\hline
\textnormal{Node } 1&v_{2,2}& v_{2,3}^1\oplus v_{3,1}^1 &v_{3,2}& \\\hline
   \textnormal{Node }  2   &v_{1,4}&  v_{3,1}^2\oplus v_{1,5}^2&  &v_{3,4}\\\hline
       \textnormal{Node } 3 &\multicolumn{4}{c}{\textnormal{(straggling)}} \\\hline
       \textnormal{Node } 4 &&v_{1,5}^4\oplus v_{2,3}^4 &v_{1,6}  & v_{2,6}\\\toprule
      \end{array}
\end{IEEEeqnarray}
}
\mw{We now explain how the nodes extract their missing IVAs from the
shuffled signals.} Since node $1$ has computed $v_{1,1},v_{1,2}$ and
$v_{1,3}$ in the map phase,  it still needs to decode
$v_{1,4},v_{1,5},v_{1,6}$. \qifa{It \maw{directly  obtains} the IVAs $v_{1,4}$ and
$v_{1,6}$ from the uncoded signals sent by nodes $2$ and $4$
respectively.} \maw{Moreover, it reconstructs the two sub-IVAs $v_{1,5}^{2}$ and $v_{1,5}^{4}$, by
XORing the  signals $v_{3,1}^{2}\oplus v_{1,5}^{2}$ and
$v_{1,5}^{4}\oplus v_{2,3}^{4}$ shuffled by nodes $2$ and $4$ with its locally stored sub-IVAs $v_{3,1}^{2}$ and $v_{2,3}^{4}$.}
\maw{Nodes $2$ and $4$ reconstruct their missing IVAs in a similar way.}
%

\qifa{The total number of stored files at the nodes is $3\times 4=12$, thus the storage load is $r=\frac{3\times K}{N}=2$.} The total length of the transmitted signals is $7.5\mathsf{V}$, which remains unchanged also when any of the other nodes straggles. 
The communication load is thus $L=\frac{7.5\mathsf{V}}{6\times 3\times \mathsf{V}}=\frac{5}{12}$.

\section{Main Results}\label{sec:mainresults}
In this section, we present our main results. Details and proofs are deferred to Sections \ref{sec:CC-PDA}--\ref{sec:proofofthmF}.
\subsection{Coded Computing Schemes for Straggling Systems from Comp-PDAs }
In Section \ref{sec:CC-PDA}, we propose  a coded computing scheme for a $(\sys{K},\sys{Q})$ straggling system based on any Comp-PDA with $\sys{K}$ columns and minimum storage number $\tau\geq\sys{K}-\sys{Q}+1$. Theorem \ref{thm:PDA} is proved by analyzing the coded computing scheme, which is deferred to Section \ref{subsec:performancePDA}.

\begin{theorem}\label{thm:PDA}
  From any given $\mathsf{(K,F,T,S)}$ Comp-PDA $\bm A$ with symbol \qifa{frequencies} $\{\theta_t\}_{t=1}^{\sys{K}}$ and minimum storage number
 $\tau\in[\sys{K-Q+1}:\sys{K}]$, one can construct a coded computing scheme for a $(\sys{K},\sys{Q})$ straggling system achieving
 the SC pair
\begin{IEEEeqnarray}{rCl}
r_{\bm A}&=&\mathsf{\frac{T}{F}},\notag\\
L_{\bm A}&=&\left(1-\frac{\sys{T}}{\mathsf{FK}}\right)\cdot\frac{1}{C_{\mathsf{K}-1}^{\mathsf{Q}-1}}\cdot\sum_{t=1}^\mathsf{K} \theta_t\left(C_{\mathsf{K}-t}^{\mathsf{Q}-1}+\sum_{l=\max\{1,t-\mathsf{K+Q}-1\}}^{\min\{t,\mathsf{Q}\}-1}\frac{1}{l}\cdot C_{t-1}^l\cdot C_{\mathsf{K}-t}^{\mathsf{Q}-l-1}\right),\label{LA:PDA}
\end{IEEEeqnarray}
with file complexity $\sys{F}$.
\end{theorem}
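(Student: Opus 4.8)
The plan is to read the array $\bm A$ as a blueprint for storage, shuffle and reduce, exactly in the spirit of the toy example in Section~\ref{CCS:example}, and then to count bits. First I would identify column $k$ of $\bm A$ with node $k$ and row $j$ with file $w_j$, set $\mathsf N=\mathsf F$, let node $k$ store $\mathcal M_k=\{w_j:a_{j,k}=*\}$ and compute the IVAs $\mathcal C_k$ of~\eqref{set:computedIVAs}. Then $\sum_k|\mathcal M_k|$ is the total number of $*$'s, namely $\mathsf T$, so $r_{\bm A}=\mathsf T/\mathsf F$ follows at once, and since the scheme uses one file per row its file complexity is $\mathsf F$. I would take $\mathsf D$ to be any multiple of $\mathsf Q$, assign the $\mathsf D$ output functions to the $\mathsf Q$ active nodes in equal shares $\mathcal D_k^{\mathbf Q}$, and choose $\mathsf V$ divisible by $\mathrm{lcm}\{1,\dots,\mathsf Q-1\}$ so that every IVA-splitting below is integral --- all admissible by Definition~\ref{def:SC}.

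The single structural fact needed is that \emph{for every active set $\mathcal Q$ with $|\mathcal Q|=\mathsf Q$, the $\mathsf F\times\mathsf Q$ subarray ${\bm A}^{\mathcal Q}$ keeping only the columns in $\mathcal Q$ still has at least one $*$ in each row}; this is immediate since every row of $\bm A$ carries at least $\tau\ge\mathsf K-\mathsf Q+1$ stars while only $\mathsf K-\mathsf Q$ columns are removed (this is also what makes the storage design compatible with~\eqref{eq:tn}). Given this, the shuffle phase for active set $\mathcal Q$ is run symbol by symbol. Fix an ordinary symbol $s$ occurring $t$ times in $\bm A$, and let $g$ be the number of its occurrences lying in columns of $\mathcal Q$. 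If $g\ge2$, the $g$ rows and $g$ columns carrying these occurrences form, by condition~b) of Definition~\ref{def:PDA}, a $g\times g$ subarray with $s$ on a diagonal and $*$ off it; I apply the usual PDA coded-multicast step --- split each of the $g$ IVA-bundles (each of $\frac{\mathsf D}{\mathsf Q}\mathsf V$ bits) into $g-1$ equal parts indexed by the other $g-1$ columns, and have each of the $g$ nodes transmit the XOR of the parts it holds --- so every one of the $g$ nodes recovers its missing bundle, at a total cost of $\frac{g}{g-1}\cdot\frac{\mathsf D}{\mathsf Q}\mathsf V$ bits. If $g=1$, the one active node needing that bundle receives it uncoded from the lowest-indexed active node storing the corresponding file (which exists by the structural fact), at cost $\frac{\mathsf D}{\mathsf Q}\mathsf V$ bits; if $g=0$, nothing is sent because only stragglers want those IVAs. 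Decodability, and hence correctness of the reduce phase, holds because in the coded step a participating node stores every file marked $*$ in its column of the $g\times g$ subarray and can thus cancel all interfering parts --- the same argument as for Comp-PDAs without stragglers.

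It remains to evaluate $\overline L$ for this scheme and match it with~\eqref{LA:PDA}. I would do the accounting per ordinary \emph{entry}: an ordinary entry $(j,k)$ of $\bm A$ represents node $k$'s need for file $w_j$, it participates in exactly one transmission group, and I charge it its equal share of that group's cost, i.e.\ $\frac1l\cdot\frac{\mathsf D}{\mathsf Q}\mathsf V$ when $l\ge1$ other occurrences of its symbol are active (group size $l+1$) and $\frac{\mathsf D}{\mathsf Q}\mathsf V$ when the entry is a singleton ($l=0$). For an entry whose symbol occurs $t$ times, $k$ is active in $C_{\mathsf K-1}^{\mathsf Q-1}$ of the $C_{\mathsf K}^{\mathsf Q}$ equiprobable active sets, and among those exactly $C_{t-1}^l C_{\mathsf K-t}^{\mathsf Q-1-l}$ have precisely $l$ of the $t-1$ other $s$-columns active; the range $\max\{1,\,t+\mathsf Q-\mathsf K-1\}\le l\le\min\{t,\mathsf Q\}-1$ appearing in~\eqref{LA:PDA} is exactly the set of $l$ for which this count is nonzero (from $0\le l\le t-1$ and $0\le\mathsf Q-1-l\le\mathsf K-t$). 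Summing the expected per-entry cost over all $\mathsf F\mathsf K-\mathsf T$ ordinary entries, grouped into the $\mathsf S_t t$ entries of each frequency $t$, normalizing by $\mathsf N\mathsf D\mathsf V=\mathsf F\mathsf D\mathsf V$, and using $\mathsf Q\,C_{\mathsf K}^{\mathsf Q}=\mathsf K\,C_{\mathsf K-1}^{\mathsf Q-1}$ together with $\frac{\mathsf S_t t}{\mathsf F\mathsf K}=\theta_t(1-\mathsf T/(\mathsf F\mathsf K))$, yields precisely~\eqref{LA:PDA}. I expect the main obstacles to be (i) the structural fact and the accompanying observation that deleting columns never removes a $*$ that a participating node actually relies on for cancellation, and (ii) the bookkeeping --- making sure each coded group is charged exactly once with the right equal shares and the right hypergeometric multiplicities, and lining up the summation limits; both are routine once the setup is fixed, with the index matching being the fiddliest point.
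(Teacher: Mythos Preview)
Your proposal is correct and follows essentially the same construction as the paper: same storage design from the $*$-pattern, same observation that the minimum storage number $\tau\ge\mathsf K-\mathsf Q+1$ guarantees $\bm A^{\mathcal Q}$ is again a Comp-PDA, and the same shuffle rule (coded XOR when the symbol survives $\ge2$ times in $\mathcal Q$, uncoded when once, nothing when zero).

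The only noteworthy difference is in the communication-load bookkeeping. The paper fixes $\mathcal Q$, writes the total load as a sum over symbols $s$ with per-symbol cost depending on $g_s^{\mathcal Q}$, then averages over $\mathcal Q$ via a long chain of identities (steps $(a)$--$(h)$) that swap sums, insert indicators $\mathbbm 1(g_s=g)\mathbbm 1(g_s^{\mathcal Q}=l)$, and use $C_g^l=\frac{g}{l}C_{g-1}^{l-1}$ to get to the stated form. You instead allocate the group cost in equal shares to the participating \emph{entries} and compute, for a single ordinary entry whose symbol has global multiplicity $t$, the hypergeometric probability that exactly $l$ of the other $t-1$ symbol-columns land in $\mathcal Q$ given that the entry's own column does; this yields the term $C_{t-1}^{l}C_{\mathsf K-t}^{\mathsf Q-1-l}$ directly, with the $l=0$ case producing $C_{\mathsf K-t}^{\mathsf Q-1}$. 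Your route is a bit more direct and makes the summation limits in~\eqref{LA:PDA} appear naturally from the support of the hypergeometric count, whereas the paper's route is more mechanical but perhaps easier to verify line by line. Both are valid and lead to the same expression after the identities $\mathsf Q\,C_{\mathsf K}^{\mathsf Q}=\mathsf K\,C_{\mathsf K-1}^{\mathsf Q-1}$ and $\mathsf S_t t=\theta_t(\mathsf{FK}-\mathsf T)$.
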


 Theorem \ref{thm:PDA} characterizes the performance of the coded computing scheme obtained from a Comp-PDA as described in Section \ref{sec:CC-PDA} in terms of the Comp-PDA parameters. In the following, we will simply say that a Comp-PDA achieves this performance.

 Notice that the file complexity of any Comp-PDA based scheme coincides with the number of rows $\sys{F}$ of the Comp-PDA. We shall therefore call the parameter $\sys{F}$ of a Comp-PDA its file complexity.

 As we show in the following, Theorem \ref{thm:PDA} can be simplified for regular Comp-PDAs.

\begin{corollary}\label{corollary:regular} From any given $g$-regular $\sys{(K,F,T,S)}$ Comp-PDA $\bm A$, with
   $g\in[\sys{K}]$ and  minimum storage number  $\tau\in[\sys{K-Q+1}:\sys{K}]$, one can construct a coded computing scheme for a $(\sys{K,Q})$
  straggling system achieving thre SC pair
\begin{IEEEeqnarray}{rCl}
r_{\bm A}&=&\sys{\frac{T}{F}},\notag\\
L_{\bm A}&=&\left(1-\sys{\frac{T}{KF}}\right)\cdot\left(\frac{C_{\sys{K}-g}^{\sys{Q}-1}}{C_{\sys{K}-1}^{\sys{Q}-1}}+\sum_{l=\max\{1,g-\sys{K+Q}-1\}}^{\min\{g,\sys{Q}\}-1}\frac{1}{l}\cdot\frac{C_{g-1}^l\cdot C_{\sys{K}-g}^{\sys{Q}-l-1}}{C_{\sys{K}-1}^{\sys{Q}-1}}\right),\notag
\end{IEEEeqnarray}
with file complexity $\sys{F}$.
\end{corollary}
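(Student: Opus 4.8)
The plan is to obtain Corollary~\ref{corollary:regular} as an immediate specialization of Theorem~\ref{thm:PDA}, the only real content being the identification of the symbol frequencies of a $g$-regular Comp-PDA. So the first step I would carry out is to show that a $g$-regular $(\sys{K},\sys{F},\sys{T},\sys{S})$ Comp-PDA has the degenerate frequency profile $\theta_g=1$ and $\theta_t=0$ for all $t\neq g$. Since by definition of $g$-regularity every ordinary symbol occurs exactly $g$ times, the count $\sys{S}_t$ of symbols occurring exactly $t$ times is $0$ for $t\neq g$, hence $\theta_t=\sys{S}_t\,t/(\sys{KF}-\sys{T})=0$ there; and for $t=g$ I would count the ordinary entries of $\bm A$ two ways — there are $\sys{KF}-\sys{T}$ of them (all cells minus the $\sys{T}$ stars) and also $g\,\sys{S}$ of them (each of the $\sys{S}$ ordinary symbols contributing $g$ occurrences) — so $g\,\sys{S}=\sys{KF}-\sys{T}$ and therefore $\theta_g=\sys{S}_g\,g/(\sys{KF}-\sys{T})=g\,\sys{S}/(\sys{KF}-\sys{T})=1$. (Nontriviality of the array guarantees $\sys{KF}>\sys{T}$, so this normalization is well defined.)

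With this profile in hand, the remaining steps are pure bookkeeping. Because the hypothesis $\tau\in[\sys{K-Q+1}:\sys{K}]$ is exactly the one required by Theorem~\ref{thm:PDA}, that theorem applies and already gives $r_{\bm A}=\sys{T}/\sys{F}$ and file complexity $\sys{F}$ in the claimed form. For the communication load I would substitute $\theta_t=\mathbbm{1}(t=g)$ into \eqref{LA:PDA}, so that the sum over $t\in[\sys{K}]$ collapses to its single $t=g$ term with coefficient $1$:
\begin{IEEEeqnarray}{c}
L_{\bm A}=\left(1-\frac{\sys{T}}{\sys{FK}}\right)\cdot\frac{1}{C_{\sys{K}-1}^{\sys{Q}-1}}\left(C_{\sys{K}-g}^{\sys{Q}-1}+\sum_{l=\max\{1,g-\sys{K+Q}-1\}}^{\min\{g,\sys{Q}\}-1}\frac{1}{l}\,C_{g-1}^{l}\,C_{\sys{K}-g}^{\sys{Q}-l-1}\right).\notag
\end{IEEEeqnarray}
Distributing the factor $1/C_{\sys{K}-1}^{\sys{Q}-1}$ through the parenthesis — writing the first summand as $C_{\sys{K}-g}^{\sys{Q}-1}/C_{\sys{K}-1}^{\sys{Q}-1}$ and each inner term as $\frac{1}{l}\,C_{g-1}^{l}C_{\sys{K}-g}^{\sys{Q}-l-1}/C_{\sys{K}-1}^{\sys{Q}-1}$ — reproduces verbatim the expression for $L_{\bm A}$ in the statement, completing the proof.

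I do not expect a genuine obstacle here: the argument is a one-line reduction once the frequency profile is pinned down. The only points that warrant mild care are verifying the two-way entry count (which relies on the array being nontrivial) and transcribing the summation limits $\max\{1,t-\sys{K+Q}-1\}$ and $\min\{t,\sys{Q}\}-1$ with $t$ replaced by $g$ without sign slips; edge cases such as $g=1$, where the inner sum is empty and $L_{\bm A}$ correctly reduces to the uncoded value, can be noted in passing but require no separate treatment.
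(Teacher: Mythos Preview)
Your proposal is correct and mirrors the paper's own proof: both simply observe that a $g$-regular Comp-PDA has $\theta_g=1$ and $\theta_t=0$ for $t\neq g$, then substitute this into Theorem~\ref{thm:PDA}. Your additional two-way entry count justifying $\theta_g=1$ is a welcome clarification but adds no new idea beyond the paper's one-line argument.
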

\begin{IEEEproof} From Theorem \ref{thm:PDA}, we only need to evaluate $L_{\bm A}$ when $\bm A$ is a $g$-$\sys{(K,F,T,S)}$ Comp-PDA. In this case,  all the $\sys{S}$ symbols occur $g$ times,  i.e.,
\begin{IEEEeqnarray}{c}
\theta_g=1,\quad\textnormal{and}\quad\theta_{t}=0,\quad\forall~t\in[\sys{K}]\backslash\{g\}.\notag
\end{IEEEeqnarray}
Then the conclusion directly follows from Theorem \ref{thm:PDA}.
\end{IEEEproof}
Corollary \ref{corollary:regular} is of particular interest since there
are several explicit regular PDA constructions for coded caching in the
literature, such as \cite{Yan2017PDA,HyperGraph,bipartite}, which are
also Comp-PDAs. 
 In particular, the following PDAs obtained from the coded
caching scheme proposed by Maddah-Ali and Niesen
\cite{Maddah2014Fundamental} are important.

\begin{definition}[Maddah-Ali Niesen PDA (MAN-PDA)]\label{def:MNPDA}
	Fix  any integer  $i\in[\sys{K}]$, and
	let \new{ $\{\mathcal{T}_j\}_{j=1}^{ C_\sys{K}^i}$} denote  all subsets of $[\sys{K}]$ of size~$i$. Also, choose an arbitrary bijective function $\kappa$ from the collection of all subsets of $[\sys{K}]$ with cardinality $i+1$ 
to the set \new{$\big[C_\sys{K}^{i+1}\big]$}.
Then, define the array $\bm
	P_i=[p_{j,k}]$ as
	\begin{IEEEeqnarray}{c}
		p_{j,k}\triangleq \left\{\begin{array}{ll}
			*, &\textnormal{if}~k\in\mathcal{T}_j \\
			\kappa(\{k\} \cup \mathcal{T}_{j}), &\textnormal{if}~k\notin\mathcal{T}_j
		\end{array}
		\right..\notag
	\end{IEEEeqnarray}
\end{definition}

We observe that for any $i \in [\sys{K}-1]$, the array $\bm P_i$ is an $(i+1)$-regular
\new{$\big(\sys{K},C_\sys{K}^i, \sys{K}{C_{\sys{K}-1}^{i-1}},{C_\sys{K}^{i+1}}\big)$} Comp-PDA (see \cite{Yan2017PDA} for
details). For $i=\sys{K}$, the Comp-PDA $\bm P_i$ consists only of
``$*$''-entries and is thus a trivial PDA. By Corollary~\ref{corollary:regular}, we directly obtain the following result.

\begin{corollary}\label{corollary:MANPDA} For a $(\sys{K},\sys{Q})$ straggling system, and each \mw{$r$ in the discrete set} $[\sys{K}-\sys{Q}+1:\sys{K}]$, the MAN-PDA $\bm P_r$ achieves the SC pair $(r,L_{\bm P_r})$, where
\begin{IEEEeqnarray}{c}
L_{\bm P_r}\triangleq\left(1-\frac{r}{\mathsf{K}}\right)\cdot\sum_{l=r+\mathsf{Q}-\mathsf{K}}^{\min\{r,\mathsf{Q}-1\}}\frac{1}{l}\cdot\frac{C_{r}^l\cdot C_{\mathsf{K}-r-1}^{\mathsf{Q}-l-1}}{C_{\mathsf{K}-1}^{\mathsf{Q}-1}}.\notag
\end{IEEEeqnarray}
\end{corollary}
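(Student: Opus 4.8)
The plan is to obtain this corollary as a direct specialization of Corollary~\ref{corollary:regular} to the Comp-PDA $\bm P_r$, followed by a short simplification. First I would dispose of the boundary case $r=\sys{K}$: there $\bm P_\sys{K}$ is the trivial PDA, the induced scheme has every node store every file, so $r_{\bm P_\sys{K}}=\sys{K}$ and $L_{\bm P_\sys{K}}=0$, which agrees with the claimed formula because the prefactor $1-\sys{K}/\sys{K}$ is zero. For the remaining values $r\in[\sys{K}-\sys{Q}+1:\sys{K}-1]$ I would work with $\bm P_r$ as a genuine regular Comp-PDA.

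Next I would record the structural facts. By Definition~\ref{def:MNPDA} each row of $\bm P_r$ is indexed by a size-$r$ set $\mathcal{T}_j\subseteq[\sys{K}]$ and carries a $``*"$ exactly in the columns of $\mathcal{T}_j$; hence every row has precisely $r$ stars and the minimum storage number is $\tau=r$. The excerpt already states that $\bm P_r$ is an $(r+1)$-regular $\bigl(\sys{K},\,C_\sys{K}^r,\,\sys{K}C_{\sys{K}-1}^{r-1},\,C_\sys{K}^{r+1}\bigr)$ Comp-PDA, and since $r\in[\sys{K}-\sys{Q}+1:\sys{K}-1]$ we have $\tau=r\in[\sys{K}-\sys{Q}+1:\sys{K}]$, so the hypotheses of Corollary~\ref{corollary:regular} hold with $g=r+1$, $\sys{F}=C_\sys{K}^r$, $\sys{T}=\sys{K}C_{\sys{K}-1}^{r-1}$. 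I would then substitute: using the identity $C_{\sys{K}-1}^{r-1}/C_\sys{K}^r=r/\sys{K}$ gives at once $r_{\bm P_r}=\sys{T}/\sys{F}=r$ and $1-\sys{T}/(\sys{K}\sys{F})=1-r/\sys{K}$, producing the claimed prefactor, and the file complexity is $\sys{F}=C_\sys{K}^r$.

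It remains to match the bracketed factor. Plugging $g=r+1$ turns the leading term into $C_{\sys{K}-r-1}^{\sys{Q}-1}/C_{\sys{K}-1}^{\sys{Q}-1}$, each summand into $\frac1l\,C_r^l C_{\sys{K}-r-1}^{\sys{Q}-l-1}/C_{\sys{K}-1}^{\sys{Q}-1}$, the lower summation limit into $\max\{1,\,r+\sys{Q}-\sys{K}\}$, and the upper limit into $\min\{r+1,\sys{Q}\}-1$. The one observation that actually requires a moment's thought is that the constraint $r\ge\sys{K}-\sys{Q}+1$ makes the leading term vanish identically over the whole range considered: it gives $\sys{Q}-1\ge\sys{K}-r>\sys{K}-r-1\ge0$, so by the convention $C_n^k=0$ for $k>n$ we get $C_{\sys{K}-r-1}^{\sys{Q}-1}=0$. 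The same inequality gives $r+\sys{Q}-\sys{K}\ge1$, collapsing the lower limit to $r+\sys{Q}-\sys{K}$, and a one-line split on whether $r+1\le\sys{Q}$ shows $\min\{r+1,\sys{Q}\}-1=\min\{r,\sys{Q}-1\}$. What survives is exactly $\bigl(1-\tfrac r\sys{K}\bigr)\sum_{l=r+\sys{Q}-\sys{K}}^{\min\{r,\sys{Q}-1\}}\tfrac1l\,\tfrac{C_r^l C_{\sys{K}-r-1}^{\sys{Q}-l-1}}{C_{\sys{K}-1}^{\sys{Q}-1}}=L_{\bm P_r}$, which finishes the proof. I do not expect any real obstacle here beyond careful bookkeeping of the binomial-coefficient conventions at the endpoints $l=\sys{Q}-1$ and $\sys{Q}-l-1=\sys{K}-r-1$; the argument is otherwise a mechanical substitution into an already-proved corollary.
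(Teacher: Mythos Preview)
Your proposal is correct and follows exactly the approach the paper indicates: the paper simply states ``By Corollary~\ref{corollary:regular}, we directly obtain the following result'' without spelling out the substitution, whereas you carry out the specialization $g=r+1$, $\sys{T}/\sys{F}=r$ explicitly and carefully justify the collapse of the leading term and the summation limits. The only difference is that you provide the bookkeeping (the vanishing of $C_{\sys{K}-r-1}^{\sys{Q}-1}$ and the identity $\min\{r+1,\sys{Q}\}-1=\min\{r,\sys{Q}-1\}$) that the paper leaves implicit.
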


The coded computing scheme associated to $\bm P_r$ is
equivalent to our proposed \emph{coded computing for straggling systems
(CCS)} in \cite{YanISIT2019}. Here, we present it as a
special case of the more general Comp-PDA framework. As we shall
see, the Comp-PDA framework allows us to design new coded computing
schemes with lower file complexity.

\subsection{The Fundamental Storage-Communication Tradeoff}

We are ready to present  our  result on the fundamental SC tradeoff, which is  proved in Section \ref{sec:SCtradeoff}.

\begin{theorem}\label{thm:limits}
  For a $(\sys{K,Q})$ straggling system, with a given integer storage
  load
  \mw{$r$ in the discrete set} $[\mathsf{K}-\mathsf{Q}+1:\mathsf{K}]$, the fundamental
SC tradeoff is
\begin{IEEEeqnarray}{c}
L_{\mathsf{K},\mathsf{Q}}^*(r)=\left(1-\frac{r}{\mathsf{K}}\right)\cdot\sum_{l=r+\mathsf{Q}-\mathsf{K}}^{\min\{r,\mathsf{Q}-1\}}\frac{1}{l}\cdot\frac{C_{r}^l\cdot C_{\mathsf{K}-r-1}^{\mathsf{Q}-l-1}}{C_{\mathsf{K}-1}^{\mathsf{Q}-1}}, \qquad r\in[\mathsf{K}-\mathsf{Q}+1:\mathsf{K}], \label{optimal:LQ}
\end{IEEEeqnarray}
which is achievable with a scheme of file complexity $C_{\sys{K}}^r$. \mw{For a general $r$ in the interval $[\mathsf{K}-\mathsf{Q}+1,\mathsf{K}]$,} the fundamental SC tradeoff $L_{\mathsf{K},\mathsf{Q}}^*(r)$ is given by the lower convex envelope formed by the above points in \eqref{optimal:LQ}.
\end{theorem}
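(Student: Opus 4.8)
The plan is to prove the two inequalities $L^*_{\mathsf K,\mathsf Q}(r)\le(\text{the displayed expression})$ and $L^*_{\mathsf K,\mathsf Q}(r)\ge(\text{the displayed expression})$ separately, and the bulk of the work lies in the converse. For \emph{achievability} at integer $r\in[\mathsf K-\mathsf Q+1:\mathsf K]$, the pair $(r,L_{\bm P_r})$ together with its file complexity $C_{\mathsf K}^r$ (the number of rows of $\bm P_r$) is handed to us directly by Corollary~\ref{corollary:MANPDA}, using the observation following Theorem~\ref{thm:PDA} that a Comp-PDA scheme has file complexity equal to the number of rows of the array. For a non-integer $r$ I would use memory-sharing: split the file library into two parts whose sizes are in the ratio $(\lceil r\rceil-r):(r-\lfloor r\rfloor)$ and run the scheme of $\bm P_{\lfloor r\rfloor}$ on the first part and that of $\bm P_{\lceil r\rceil}$ on the second; this realizes every point of the segment joining $(\lfloor r\rfloor,L_{\bm P_{\lfloor r\rfloor}})$ to $(\lceil r\rceil,L_{\bm P_{\lceil r\rceil}})$, hence every point of the lower convex envelope of the integer points in \eqref{optimal:LQ}.

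For the \emph{converse}, I would first fix an adversarial instance in which all intermediate values $\{v_{d,n}\}_{d\in[\mathsf D],n\in[\mathsf N]}$ are i.i.d.\ uniform over $\mathbb F_2^{\mathsf V}$; this is a legitimate choice of map functions (the file length $\mathsf W$ is a free parameter), and a lower bound for it lower-bounds $L^*_{\mathsf K,\mathsf Q}(r)$ since the schemes in question must be universal. Fix any feasible scheme, let $\mathcal S_n\subseteq[\mathsf K]$ be the set of nodes storing $w_n$, put $t_n\triangleq|\mathcal S_n|$ and $a_t\triangleq|\{n:t_n=t\}|$, so that $\sum_t a_t=\mathsf N$, $\overline r=\tfrac1{\mathsf N}\sum_t t\,a_t$, and, by the no-outage requirement~\eqref{eq:tn}, $a_t=0$ for $t<\mathsf K-\mathsf Q+1$. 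The key observation is that, for every realization $\mathcal Q\in\mathbf{\Omega}_{\mathsf K}^{\mathsf Q}$ of the active set, the active nodes carry out exactly an ordinary (non-straggling) MapReduce computation over $\mathsf Q$ nodes, $\mathsf N$ files and $\mathsf D$ output functions, with node $k\in\mathcal Q$ storing $\mathcal M_k$ and sending $X^{\mathcal Q}_k$, and in which $w_n$ is stored by $l_n(\mathcal Q)\triangleq|\mathcal S_n\cap\mathcal Q|\ge 1$ nodes. Applying the genie-aided cut-set converse of \cite{Li2018Tradeoff} to this induced system — which bounds $\sum_{k\in\mathcal Q}|X^{\mathcal Q}_k|\ge H(\{X^{\mathcal Q}_k\}_{k\in\mathcal Q})$ from below through a chain of entropy inequalities obtained by revealing local IVAs and already-decoded IVAs in a carefully chosen order, crucially exploiting that $X^{\mathcal Q}_k$ is a deterministic function of the IVAs computable from $\mathcal M_k$ alone — yields
\begin{IEEEeqnarray}{c}
\sum_{k\in\mathcal Q}|X^{\mathcal Q}_k|\;\ge\;\mathsf D\mathsf V\sum_{n=1}^{\mathsf N}\frac1{l_n(\mathcal Q)}\Big(1-\frac{l_n(\mathcal Q)}{\mathsf Q}\Big).\notag
\end{IEEEeqnarray}

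It then remains to average over $\mathbf Q$ and optimize. Since $\mathbf Q$ is uniform on $\mathbf{\Omega}_{\mathsf K}^{\mathsf Q}$ and independent of the fixed sets $\mathcal S_n$, the quantity $l_n(\mathbf Q)$ is hypergeometric, and a direct counting identity — the same one underlying Corollary~\ref{corollary:MANPDA} — gives $\mathbf E\big[\tfrac1{l_n(\mathbf Q)}(1-\tfrac{l_n(\mathbf Q)}{\mathsf Q})\big]=L_{\bm P_{t_n}}$, whence
\begin{IEEEeqnarray}{c}
\overline L=\frac1{\mathsf N\mathsf D\mathsf V}\,\mathbf E\Big[\sum_{k\in\mathbf Q}|X^{\mathbf Q}_k|\Big]\;\ge\;\frac1{\mathsf N}\sum_{n=1}^{\mathsf N}L_{\bm P_{t_n}}=\sum_{t}\frac{a_t}{\mathsf N}\,L_{\bm P_t}.\notag
\end{IEEEeqnarray}
Minimizing $\sum_t\tfrac{a_t}{\mathsf N}L_{\bm P_t}$ over all probability vectors $\{a_t/\mathsf N\}$ supported on $[\mathsf K-\mathsf Q+1:\mathsf K]$ with mean $\overline r$ is a linear program with two equality constraints, so the optimum is attained by a vector supported on at most two consecutive integers, i.e.\ it equals the lower convex envelope of $\{(t,L_{\bm P_t})\}_{t=\mathsf K-\mathsf Q+1}^{\mathsf K}$ at $\overline r$. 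Combining this with the monotonicity of $L^*_{\mathsf K,\mathsf Q}$ in $r$ (storing extra files never increases the communication load), with the short lemma that $t\mapsto L_{\bm P_t}$ is convex on $[\mathsf K-\mathsf Q+1:\mathsf K]$ (so that each integer point lies on the envelope — the analogue of the convexity of $t\mapsto\frac{\mathsf K-t}{t+1}$ in coded caching), and with the achievability above, gives $L^*_{\mathsf K,\mathsf Q}(r)$ equal to that lower convex envelope, and in particular equal to $L_{\bm P_r}=(\text{the expression in }\eqref{optimal:LQ})$ at integer $r$.

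The step I expect to be the main obstacle is the per-active-set bound, where one must extract a multicast gain of exactly $l_n(\mathcal Q)$. A naive genie argument that simply reveals the entire local IVA set $\mathcal C_{k_j}$ at step $j$ and counts ``fresh'' IVAs only yields gain $l_n(\mathcal Q)+1$, which is strictly too weak to match the achievable scheme; obtaining the tight factor $1/l_n(\mathcal Q)$ requires ordering the entropy chain correctly and exploiting the locality of the transmissions, exactly as in the converse of \cite{Li2018Tradeoff}. Once that bound is in hand, the remaining pieces — the hypergeometric identity and the linear-programming/convex-envelope optimization — are routine.
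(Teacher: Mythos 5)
Your proposal is correct and takes essentially the same route as the paper: the per-active-set lower bound you invoke (after specializing to i.i.d. uniform IVAs) is exactly \cite[Lemma 1]{Li2018Tradeoff} applied to the induced $\mathsf Q$-node system, which is what the paper cites; your hypergeometric averaging identity $\mathbf E[\tfrac{\mathsf Q-l_n(\mathbf Q)}{\mathsf Q\,l_n(\mathbf Q)}]=L_{\bm P_{t_n}}$ is the paper's equation \eqref{eqn:sumequality} recast per-file rather than per-storage-multiplicity, and your LP/basic-feasible-solution step plus the convexity of $t\mapsto L_{\bm P_t}$ is the paper's Jensen step together with Lemma~\ref{lemma:Z}. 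The achievability half (MAN-PDA at integer $r$ plus memory/time-sharing for non-integer $r$) is identical.
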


Fig.~\ref{fig:tradeoff} shows the fundamental SC tradeoff curves for $\sys{K}=10$ and different
 values of $\sys{Q}$. When $\mathsf{Q}=1$, the curve reduces to a single point $(\mathsf{K},0)$,
while when $\mathsf{Q}=\mathsf{K}$, the curve corresponds to the fundamental tradeoff
without straggling nodes~(cf.~\cite[Fig. 1]{Li2018Tradeoff}).  In this latter case without stragglers, the fundamental SC tradeoff curve is achieved by the CDC scheme in
\cite{Li2018Tradeoff}.
\mw{For a general value of $\mathsf{Q}$ and   integer storage $r\in[\sys{K}-\sys{Q}+1:\sys{K}]$, the fundamental SC tradeoff pair $(r,L_{\sys{K},\sys{Q}}^*(r))$ is  achieved by the MAN-PDA $\mathbf{P}_r$, see
 Corollary~\ref{corollary:MANPDA}.
This implies that for a fixed integer storage load $r\in[1:\sys{K}]$,
the SC pairs $\{(r,L_{\sys{K},\sys{Q}}^*(r))\}_{\sys{Q}=\sys{K}-r+1}^K$
are all achieved by the same PDA  $\mathbf{P}_r$, irrespective of the
size of the active set $\sys{Q}$. As we show in
Section~\ref{subsec:connection}, the map procedures  of the  coded
computing scheme corresponding to a given Comp-PDA at a given node $k$
only depends  on the $``*"$-symbols in the $k$-th column of the PDA. Therefore, all the points on the fundamental SC tradeoff curve with same  integer storage load $r$ can be attained with the same  map procedures described by the MAN-PDA $\mathbf{P}_r$. (See also Remark~\ref{remark:PDAmap} in Section~\ref{subsec:connection}.)}


\maw{As a consequence, the fundamental SC-tradeoff points that have integer storage load  $r\in[1,\sys{K}]$ remain achievable (and optimal) also  in a related setup where the size of the active set $\sys{Q}$ is unknown  during the map procedure. By simple time and memory-sharing arguments, this conclusion extends to all points on the fundamental SC tradeoff curve with arbitrary real-valued storage loads $r\in[1,\sys{K}]$. Related is also the scenario where instead of fixing the size of the active set $\sys{Q}$, the system  imposes a hard time-limit for the map phase and proceeds to the shuffle and reduce phases with the (random) number of nodes that have terminated within due time. For given storage load $r$, the MAN-PDA based coded computing scheme promises that  when $\sys Q \geq \lceil \sys{K}+1-r \rceil$ nodes have terminated during the map phase, all IVAs are computed at least once and thus the system can proceed to data shuffling, and achieves the minimum required communication load  $L_{\sys{K},\sys{Q}}^*(r)$. When only $\sys Q < \lceil \sys{K}+1-r \rceil$ nodes have terminated, some IVAs are not computed, and hence the system can not proceed. }


 \begin{figure}[htbp]
   \centering
   \includegraphics[width=0.49\textwidth]{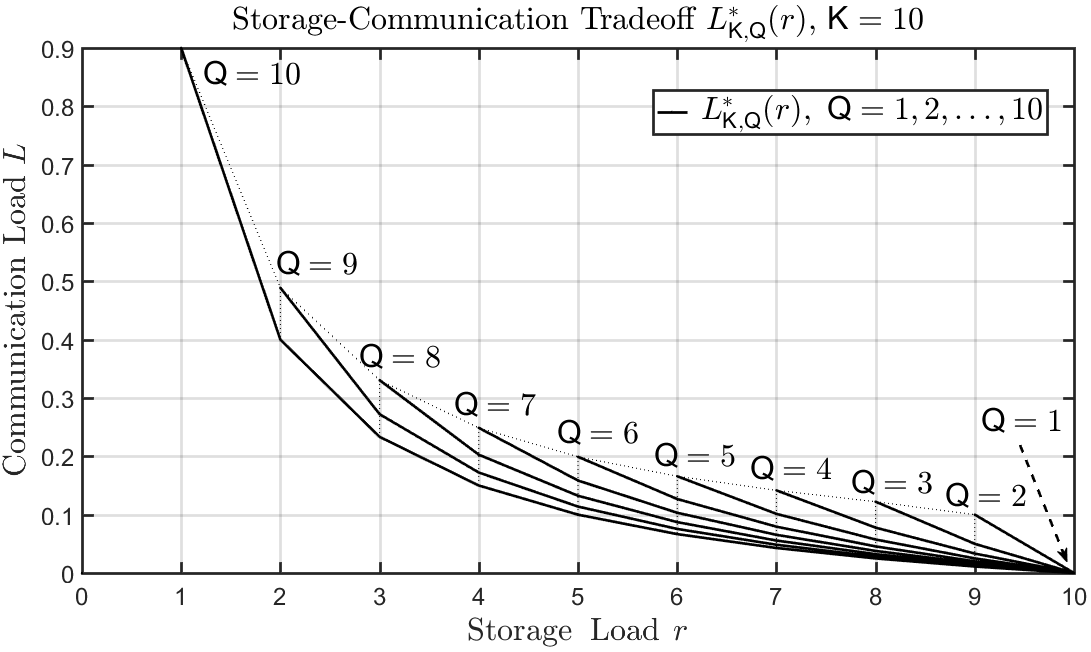}
   \caption{Storage-Communication Tradeoff  $L_{\mathsf{K},\mathsf{Q}}^*(r)$ for $\mathsf{Q}\in[\mathsf{K}]$
   when  $\mathsf{K}=10$. }
   \label{fig:tradeoff}
   \end{figure}

It is \maw{further} worth pointing out that all our PDA based coded computing schemes \mw{are universal and achieve the same performance for any choice of  map and reduce functions.
No structure is assumed on these functions. 
Similarly, our information-theoretic converse applies only to such universal coded computing schemes.} If the map  or reduce
functions have certain properties, for example, linearity,
  it is possible to achieve
better SC tradeoffs \mw{by storing combinations of files instead of each file separately \cite{Li2016unified,Zhang2018Improved}.} Fig.
\ref{fig:comparison} compares Theorem \ref{thm:limits} to the results in
\cite{Li2016unified, Zhang2018Improved}. It can be observed that the MAN-PDA based
scheme outperforms the scheme in \cite{Li2016unified} but is inferior to
the improved version in \cite{Zhang2018Improved}. \mw{As already mentioned, the scheme in \cite{Zhang2018Improved} however works only for linear map functions, and not for arbitrary functions as our schemes. Another advantage of our schemes is that they work over the binary field, and are thus easier to implement than the MDS-based schemes in
 \cite{Li2016unified, Zhang2018Improved}. 
which require a large enough field size.}
\begin{figure}[htbp]
   \centering
   \includegraphics[width=0.49\textwidth]{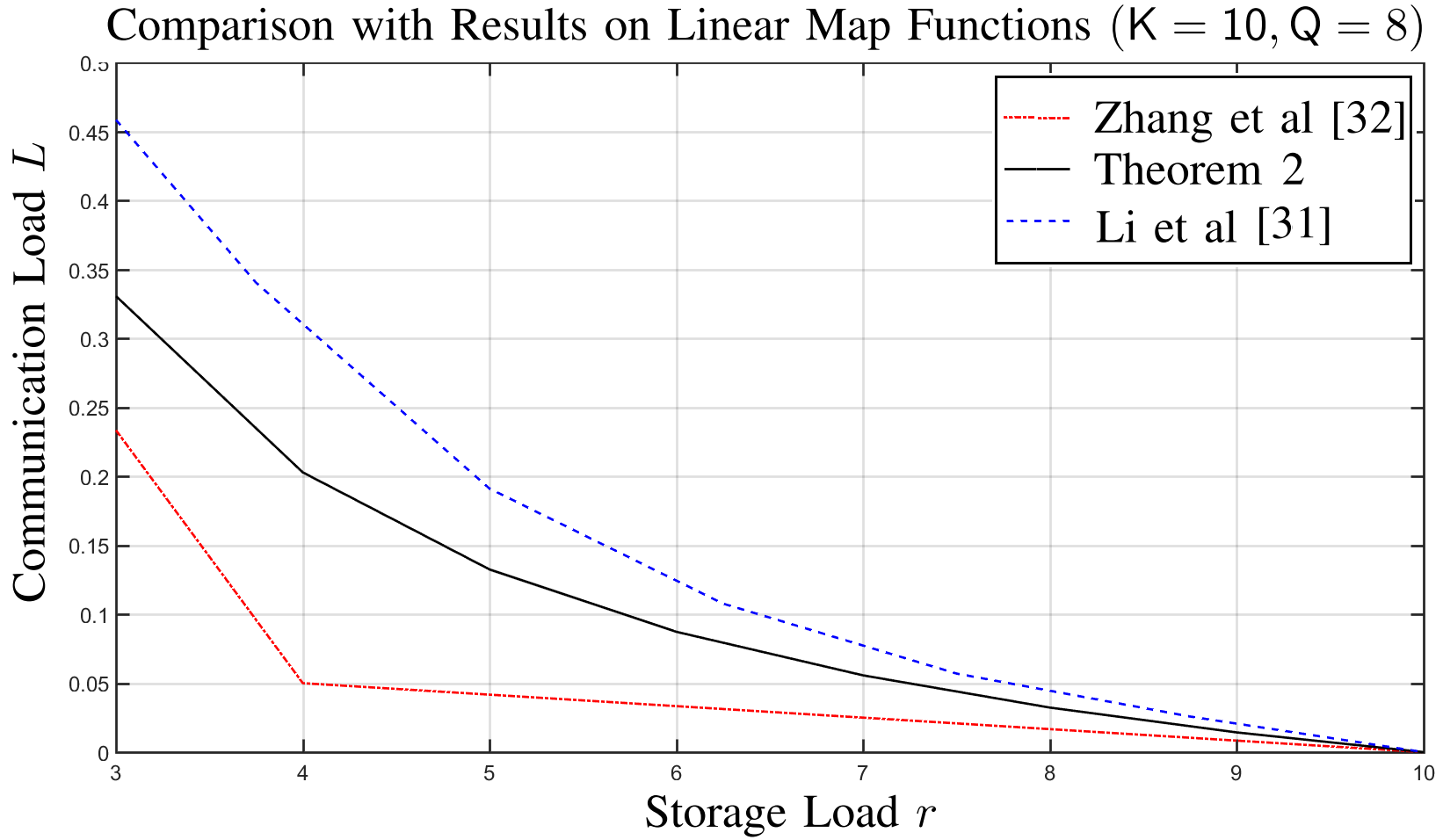}
   \caption{Comparison with known results when applied to linear map  functions, $\mathsf{K}=10, \mathsf{Q}=8$.  }
   \label{fig:comparison}
   \end{figure}

%
%
%
%
\subsection{Optimality and Reduction of File Complexity}
From Theorem \ref{thm:PDA} and Corollary \ref{corollary:MANPDA}, the
coded computing scheme based on the MAN-PDA $\bm P_r$, for
$r\in[\sys{K-Q+1}:\sys{K}]$, has file complexity $\new{\sys{F}=C_\sys{K}^r}$ and achieves the fundamental SC tradeoff.  The following theorem indicates that, this is the smallest file complexity to achieve the same tradeoff point in most cases. The proof is deferred to Section~\ref{sec:proofofthmF}.

\begin{theorem}\label{thm:optimalF}
  For a $(\sys{K,Q})$ straggling system, if
 a Comp-PDA based scheme achieves the fundamental SC tradeoff
 $\big(r,L_{\sys{K,Q}}^*(r)\big)$ for some
 $r\in[\sys{K-Q}+1:\sys{K}]$, if $\sys{Q}\notin\{2,\sys{K}\}$ or $r\neq\sys{K}-\sys{Q}+1$,  then the file complexity
 $\sys{F}\geq  C_\sys{K}^r$.
\end{theorem}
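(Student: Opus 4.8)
The plan is to show that any Comp-PDA $\bm A$ attaining the optimal pair $\big(r,L_{\sys K,\sys Q}^*(r)\big)$ must be $(r+1)$-regular with exactly $r$ stars per row, and then to bound the number of rows of such a PDA from below by $C_{\sys K}^r$ through a purely combinatorial argument. Write $g(t)$ for the bracketed factor multiplying $\theta_t$ in \eqref{LA:PDA}, so Theorem~\ref{thm:PDA} reads $L_{\bm A}=\big(1-\tfrac{\sys T}{\sys F\sys K}\big)\tfrac1{C_{\sys K-1}^{\sys Q-1}}\sum_{t=1}^{\sys K}\theta_t\,g(t)$. First I would discard the trivial case $r=\sys K$ (the only such Comp-PDA is the all-$``*"$ array, with $\sys F=1=C_{\sys K}^{\sys K}$) and then normalize: since $L_{\sys K,\sys Q}^*(\cdot)$ is strictly decreasing, a scheme attaining the point has storage load exactly $r=\sys T/\sys F$, so $1-\tfrac{\sys T}{\sys F\sys K}=1-\tfrac r{\sys K}$; moreover $r\ge\sys K-\sys Q+1$ makes $C_{\sys K-r-1}^{\sys Q-1}=0$, so comparing with Corollary~\ref{corollary:MANPDA} gives $L_{\sys K,\sys Q}^*(r)=\big(1-\tfrac r{\sys K}\big)g(r+1)/C_{\sys K-1}^{\sys Q-1}$. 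Hence the optimality hypothesis is equivalent to $\sum_{t=1}^{\sys K}\theta_t\,g(t)=g(r+1)$, where $\theta_t\ge0$ and $\sum_t\theta_t=1$.

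Next I would establish a storage inequality on the symbol frequencies alone. If an ordinary symbol occurs $t_s$ times and lies in row $j$, PDA property~b) puts $``*"$-symbols in $t_s-1$ distinct columns of row $j$, so $Z_j\ge t_s-1$, where $Z_j$ denotes the number of $``*"$'s in row $j$. Summing $t_s-1$ over all symbol--row incidences and using that row $j$ has $\sys K-Z_j$ ordinary entries, $\sum_s t_s(t_s-1)\le\sum_{j}Z_j(\sys K-Z_j)\le\sys F\,r(\sys K-r)$, the last bound by concavity of $x\mapsto x(\sys K-x)$ together with $\tfrac1{\sys F}\sum_jZ_j=r$. Dividing by $\sum_s t_s=\sys F(\sys K-r)$ and using $\theta_t=\sys S_t\,t/(\sys F(\sys K-r))$, this becomes $\sum_t\theta_t\,t\le r+1$, with equality if and only if all $Z_j=r$ and every ordinary symbol occurs exactly $r+1$ times, i.e.\ iff $\bm A$ is $(r+1)$-regular.

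The step I expect to be the main obstacle is a shape analysis of $g$. The claim is: whenever $\sys Q\notin\{2,\sys K\}$ or $r\ne\sys K-\sys Q+1$, the point $(r+1,g(r+1))$ is a vertex of the lower convex hull of $\{(t,g(t)):t\in[\sys K]\}$ that is not shared with any point to its left; equivalently, there is a non-increasing affine minorant $\ell(t)=\mu+\nu t$ (with $\nu\le0$) of $g$ on $[\sys K]$ with $\ell(r+1)=g(r+1)$ and $\ell(t)<g(t)$ for all $t\in[1:r]$. I would prove this by showing that $g$ is non-increasing on $[1:\sys K]$, convex on $[2:\sys K]$, and strictly decreasing up to $t=r+1$, and then checking that the admissible slope window $\big(\max\{g(r+1)-g(r),\,(g(r+1)-g(1))/r\},\ \min\{0,\,g(r+2)-g(r+1)\}\,\big]$ is nonempty --- a finite computation with the binomial sum defining $g$. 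The two excluded parameter regimes are exactly those where $g$ degenerates ($g\equiv\sys K-1$ when $\sys Q=2$; $g(1)=g(2)$ when $\sys Q=\sys K$, $r=1$), so that the minorant unavoidably touches a point to the left of $r+1$. Granting the claim, I combine $g(r+1)=\sum_t\theta_t g(t)\ge\sum_t\theta_t\ell(t)=\mu+\nu\sum_t\theta_t t\ge\mu+\nu(r+1)=g(r+1)$ --- the last step using $\nu\le0$ and $\sum_t\theta_t t\le r+1$ from the previous step --- so every inequality is tight; thus $\theta$ is supported on $\{t:g(t)=\ell(t)\}\subseteq\{r+1\}\cup[r+2:\sys K]$ and $\sum_t\theta_t t=r+1$, and since any mass in $[r+2:\sys K]$ would force $\sum_t\theta_t t>r+1$, we get $\theta_{r+1}=1$. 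Hence $\bm A$ is $(r+1)$-regular and, by the equality case above, has exactly $r$ stars in every row.

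Finally I would convert $(r+1)$-regularity into the file-complexity bound. Identify row $j$ with its set $\mathcal S_j\in\mathbf{\Omega}_{\sys K}^r$ of $``*"$-columns. For an ordinary symbol $s$ with occurrences $(j_1,k_1),\dots,(j_{r+1},k_{r+1})$, property~b) gives $\mathcal S_{j_i}\supseteq\{k_{i'}:i'\ne i\}$, which has size $r$; since $|\mathcal S_{j_i}|=r$, in fact $\mathcal S_{j_i}=\mathcal K_s\setminus\{k_i\}$ with $\mathcal K_s\triangleq\{k_1,\dots,k_{r+1}\}$. So every ordinary symbol carries an $(r+1)$-set $\mathcal K_s$ whose $r+1$ size-$r$ subsets are precisely the star-sets of its rows, and in any row with star-set $\mathcal S$ the entry in a column $k\notin\mathcal S$ is a symbol with $\mathcal K_s=\mathcal S\cup\{k\}$. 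Let $\mathcal R=\{\mathcal S_j:j\in[\sys F]\}$. If $\mathcal R\ne\mathbf{\Omega}_{\sys K}^r$, pick a path in the Johnson graph $J(\sys K,r)$ (connected since $1\le r\le\sys K-1$) from a set in $\mathcal R$ to one outside it, and a consecutive pair $B\cup\{a\}\in\mathcal R$, $B\cup\{b\}\notin\mathcal R$ (with $|B|=r-1$, $a,b\notin B$, $a\ne b$) that crosses out. The row with star-set $B\cup\{a\}$ has, in column $b$, a symbol $s$ with $\mathcal K_s=B\cup\{a,b\}$, which forces some row to have star-set $\mathcal K_s\setminus\{a\}=B\cup\{b\}$ --- contradicting $B\cup\{b\}\notin\mathcal R$. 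Hence $\mathcal R=\mathbf{\Omega}_{\sys K}^r$, and $\sys F\ge|\mathcal R|=C_{\sys K}^r$.
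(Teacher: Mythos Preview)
Your overall plan---show that any optimal Comp-PDA must be $(r+1)$-regular with exactly $r$ stars per row, and then bound $\sys F$ from below---matches the paper's, and your storage inequality $\sum_t\theta_t\,t\le r+1$ and your Johnson-graph argument for $\sys F\ge C_{\sys K}^r$ are both correct and self-contained. The gap is in the ``shape analysis of $g$'': the function $g(t)=U_{\sys K}^{\sys Q}(t)$ is \emph{not} convex on $[2:\sys K]$, so the affine-minorant step fails.

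Concretely, take $\sys K=4$, $\sys Q=3$. Then $g(1)=g(2)=3$, $g(3)=\tfrac52$, $g(4)=\tfrac32$, so the first differences are $0,-\tfrac12,-1$: strictly decreasing, hence $g$ is \emph{concave} on $[2:4]$. For $r=2$ (which is not an excluded case, since $\sys Q=3\notin\{2,\sys K\}$), the point $(r+1,g(r+1))=(3,\tfrac52)$ lies strictly above the lower convex hull of the graph of $g$, so no affine minorant through it exists; your slope window is empty. The same phenomenon persists for other parameters; e.g.\ for $\sys K=5$, $\sys Q=4$ one gets $g=(4,4,3,\tfrac{11}{6},\tfrac43)$, where again no affine minorant through $(3,g(3))$ exists (the required interval is $[-\tfrac12,-\tfrac76]$, which is empty). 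Without the minorant you cannot rule out a distribution $\theta$ that places some mass at $t>r+1$ (where $g(t)<g(r+1)$) and compensates with mass at $t\le r$ (where $g(t)>g(r+1)$), while still satisfying $\sum_t\theta_t t\le r+1$.

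The paper circumvents this entirely by invoking the converse: strict convexity of the sequence $Z_{\sys K}^{\sys Q}(u)$ in the lower bound (Lemma~\ref{lemma:Z}) forces any optimal scheme to have every file stored \emph{exactly} $r$ times, i.e.\ every row of $\bm A$ has exactly $r$ stars (Lemma~\ref{cor:rtime}). This immediately kills $\theta_t$ for $t>r+1$, so the optimality equation $\sum_{t\le r+1}\theta_t\,g(t)=g(r+1)$ together with the strict \emph{monotonicity} of $g$ on $[2:\sys K]$ (Lemma~\ref{lem:UKQ}, which is all that is actually true) and $g(1)=g(2)$ yields $\theta_{r+1}=1$ directly, without any convexity of $g$ or any use of your storage inequality. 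The crucial point you are missing is that ``storage load exactly $r$'' in the sense $\sys T/\sys F=r$ (average) is much weaker than ``each row has exactly $r$ stars''; the latter comes from the matched converse, not from strict monotonicity of $L_{\sys K,\sys Q}^*$.
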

\begin{remark} It is easy to verify that, in the case $\sys{Q}\in\{2,\sys{K}\}$ and $r=\sys{K}-\sys{Q}+1$, the fundamental SC tradeoff can be achieved with $\sys{F}=1$ with the  Comp-PDAs $[*,*,\ldots,*,1]$ and $[*,1,2,\ldots,\sys{K}-1]$, respectively.
\end{remark}

We next present Comp-PDAs with lower file complexity $\sys{F}$ that achieve SC tradeoffs close to the optimal ones.
 We consider two existing PDA constructions originally proposed for coded
caching in \cite[Theorems 4 and 5]{Yan2017PDA}.
 Let $q\in[2:\sys{K}-1]$ be such that $q\,|\,\sys{K}$, and $m=\frac{\sys{K}}{q}$. There exists
\begin{enumerate}
  \item [P$1$)]an $m$-regular $\big(mq,q^{m-1},mq^{m-1},(q-1)q^{m-1}\big)$ Comp-PDA with minimum storage number $m$;
  \item [P$2$)]an $m(q-1)$-regular $\big(mq,(q-1)q^{m-1},m(q-1)^2q^{m-1},q^{m-1}\big)$ Comp-PDA with minimum storage number $m(q-1)$.
\end{enumerate}

\begin{corollary}\label{cor}
  For any integer $r\in[\sys{K-Q}+1:\sys{K}-1]$, such that either $r\,|\,\sys{K}$ or $\qifa{(\sys{K}-r)}\,|\,\sys{K}$, the communication load
\begin{IEEEeqnarray}{c}
L_{\sys{K},\sys{Q}}(r)=\left(1-\frac{r}{\sys{K}}\right)\cdot\left(\frac{C_{\sys{K}-r}^{\sys{Q}-1}}{C_{\sys{K}-1}^{\sys{Q}-1}}+\sum_{l=\max\{1,r+\sys{Q-K}-1\}}^{\min\{r,\sys{Q}\}-1}\frac{1}{l}\cdot\frac{C_{r-1}^lC_{\sys{K}-r}^{\sys{Q}-l-1}}{C_{\sys{K}-1}^{\sys{Q}-1}}\right),\label{LQ}
\end{IEEEeqnarray}
can be achieved with file complexity  $\sys{F} = \frac{r}{\sys{K}}\cdot\left(\frac{\sys{K}}{\min\{r,\sys{K}-r\}}\right)^{\min\{r,\sys{K}-r\}}$.
\end{corollary}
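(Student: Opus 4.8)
The plan is to derive this result as a direct application of Corollary~\ref{corollary:regular} to the two regular Comp-PDA families P1 and P2 of \cite{Yan2017PDA} recalled just above, selecting the family according to which divisibility condition holds. First I would record a small bookkeeping fact: since $1\le r\le \sys{K}-1$, if $r\,|\,\sys{K}$ then $\sys{K}/r$ is an integer strictly larger than $1$, hence $\sys{K}/r\ge 2$ and $r\le\sys{K}/2$; symmetrically, if $(\sys{K}-r)\,|\,\sys{K}$ then $r\ge\sys{K}/2$. Thus the hypothesis ``$r\,|\,\sys{K}$ or $(\sys{K}-r)\,|\,\sys{K}$'' unambiguously selects the construction (P1 when $r\le\sys{K}/2$, P2 when $r\ge\sys{K}/2$, with $r=\sys{K}/2$ lying in both), and in either case $\min\{r,\sys{K}-r\}$ equals the exponent that will appear in the file complexity.

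In the case $r\,|\,\sys{K}$, set $q=\sys{K}/r$ and $m=r$, so $\sys{K}=mq$, and invoke P1 to obtain an $m$-regular $(mq,q^{m-1},mq^{m-1},(q-1)q^{m-1})$ Comp-PDA with minimum storage number $\tau=m=r$. Since $r\ge\sys{K}-\sys{Q}+1$ by assumption and $r\le\sys{K}-1$, we have $\tau\in[\sys{K}-\sys{Q}+1:\sys{K}]$ and $g=m=r\in[\sys{K}]$, so Corollary~\ref{corollary:regular} applies with $g=r$. It yields storage load $\sys{T}/\sys{F}=mq^{m-1}/q^{m-1}=m=r$, file complexity $\sys{F}=q^{m-1}=\big(\tfrac{\sys{K}}{r}\big)^{r-1}=\tfrac{r}{\sys{K}}\big(\tfrac{\sys{K}}{r}\big)^{r}$, and, because $\sys{T}/(\sys{K}\sys{F})=1/q=r/\sys{K}$, a communication load that is exactly the right-hand side of \eqref{LQ} (this is just the expression in Corollary~\ref{corollary:regular} with $g$ replaced by $r$). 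In the case $(\sys{K}-r)\,|\,\sys{K}$, set instead $q=\sys{K}/(\sys{K}-r)$ and $m=\sys{K}-r$, so again $\sys{K}=mq$, and invoke P2 to obtain an $m(q-1)$-regular $(mq,(q-1)q^{m-1},m(q-1)^2q^{m-1},q^{m-1})$ Comp-PDA with minimum storage number $\tau=m(q-1)$. One checks $m(q-1)=\sys{K}-m=r$, so $\tau=r\in[\sys{K}-\sys{Q}+1:\sys{K}]$ and $g=r$; Corollary~\ref{corollary:regular} then gives storage load $\sys{T}/\sys{F}=m(q-1)^2q^{m-1}/((q-1)q^{m-1})=m(q-1)=r$, file complexity $\sys{F}=(q-1)q^{m-1}=\tfrac{r}{\sys{K}-r}\big(\tfrac{\sys{K}}{\sys{K}-r}\big)^{\sys{K}-r-1}=\tfrac{r}{\sys{K}}\big(\tfrac{\sys{K}}{\sys{K}-r}\big)^{\sys{K}-r}$, and, since $\sys{T}/(\sys{K}\sys{F})=(q-1)/q=r/\sys{K}$, again the communication load \eqref{LQ}. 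Merging the two cases and using the first-paragraph identification of $\min\{r,\sys{K}-r\}$ with the exponent gives the claimed closed form for $\sys{F}$.

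The bulk of this is routine: once $g=r$ and $\sys{T}/(\sys{K}\sys{F})=r/\sys{K}$ are substituted, \eqref{LQ} is literally the formula of Corollary~\ref{corollary:regular}, so no manipulation of the binomial sums is needed. The only two points that actually require care are (i) confirming from \cite{Yan2017PDA} that P1 and P2 have minimum storage number exactly $m$ and $m(q-1)$, which equal $r$ under the chosen parameter dictionary, so that the straggler-feasibility condition $\tau\ge\sys{K}-\sys{Q}+1$ needed to apply Corollary~\ref{corollary:regular} is met; and (ii) the short algebraic check that the two superficially different file complexities $q^{m-1}$ and $(q-1)q^{m-1}$ both collapse to $\tfrac{r}{\sys{K}}\big(\tfrac{\sys{K}}{\min\{r,\sys{K}-r\}}\big)^{\min\{r,\sys{K}-r\}}$. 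I do not expect any substantive obstacle beyond this; the degenerate endpoints $r=1$ and $r=\sys{K}-1$, where the parameter $q$ would fall outside the admissible range of P1/P2, are handled separately by the elementary single-row Comp-PDAs $[*,1,2,\ldots,\sys{K}-1]$ and its transpose-type analogue, for which $\sys{F}=1=\tfrac{r}{\sys{K}}\big(\tfrac{\sys{K}}{\min\{r,\sys{K}-r\}}\big)^{\min\{r,\sys{K}-r\}}$ when $\min\{r,\sys{K}-r\}=1$ and Corollary~\ref{corollary:regular} again reproduces \eqref{LQ}.
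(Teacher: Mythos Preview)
Your core argument is identical to the paper's: specialize P1 to $q=\sys{K}/r$ when $r\mid\sys{K}$, specialize P2 to $q=\sys{K}/(\sys{K}-r)$ when $(\sys{K}-r)\mid\sys{K}$, and in each case invoke Corollary~\ref{corollary:regular} with $g=r$. The paper's proof is simply a terser version of what you wrote and does not separately discuss the endpoints $r\in\{1,\sys{K}-1\}$.

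One small slip in your endpoint discussion: when $\min\{r,\sys{K}-r\}=1$ the claimed file-complexity formula evaluates to $\tfrac{r}{\sys{K}}\cdot\sys{K}=r$, not to $1$; so at $r=\sys{K}-1$ it gives $\sys{F}=\sys{K}-1$, and the relevant array is not a single-row Comp-PDA but the $(\sys{K}-1)$-regular $(\sys{K},\sys{K}-1,(\sys{K}-1)^2,1)$ Comp-PDA (which is exactly P2 evaluated at $q=\sys{K}$, $m=1$). With that correction your endpoint handling is fine and indeed more careful than the paper's own proof.
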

\begin{IEEEproof}  If $r\,|\,\sys{K}$, then specialize the Comp-PDA in P1) to parameter $q=\frac{\sys{K}}{r}$. This results in a $r$-regular $\left(\sys{K},q^{r-1},\sys{K}q^{r-2},(q-1)q^{r-1}\right)$ Comp-PDA with minimum storage number $r$, and the proof is then immediate from Corollary \ref{corollary:regular}. If $\sys{K}-r|\sys{K}$, then  specialize the Comp-PDA in P2) to parameter $q=\frac{\sys{K}}{\sys{K}-r}$. This results in a $r$-regular $\big(\sys{K},(q-1)q^{\sys{K}-r-1},$ $\sys{K}(q-1)^2q^{\sys{K}-r-2},$ $q^{\sys{K}-r-1}\big)$ Comp-PDA, and the proof again follows from Corollary \ref{corollary:regular}.
\end{IEEEproof}

In the following proposition, we quantify how close the above SC
tradeoff point is to the optimal, and \mw{by} how much we can reduce the
file complexity.
\begin{proposition}\label{proposition}
  Consider a $(\sys{K},\sys{Q})$ straggling system and an integer $r\in[\sys{K}-\sys{Q}+1:\sys{K}]$ such that
  $\frac{r}{\sys{K}} = c \in\big\{
  \frac{1}{q}, \frac{q-1}{q}\big\}$ for some integer
  $q\in[2:\sys{K}-1]$. There exist
  $\alpha\in[0, 2]$ and $\beta\in[0,\sqrt{2\pi}e^2]$, such that the SC tradeoff $L_{\sys{K},\sys{Q}}(r)$ and the file complexity $\sys{F}$ achieved by  constructions P1) or P2)  above satisfy
  \begin{IEEEeqnarray}{c}
    \frac{L_{\sys{K},\sys{Q}}(r)}{L_{\sys{K},\sys{Q}}^*(r)} = 1+\frac{\alpha}{r},\notag
  \end{IEEEeqnarray}%
   and
  \begin{IEEEeqnarray}{c}
    \frac{\sys{F}}{\sys{F}^*} = \beta A_q \sys{K}^{\frac{1}{2}} B_q^{-\sys{K}},\notag
  \end{IEEEeqnarray}%
  where $A_q \triangleq \frac{\sqrt{q-1}}{cq}$ and $B_q \triangleq
    \left(\frac{q}{q-1}\right)^{\frac{q-1}{q}}$.
\end{proposition}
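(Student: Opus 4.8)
The plan is to prove the two claimed ratios separately, treating the communication-load ratio first (the easy part) and then the file-complexity ratio (the part requiring Stirling-type estimates). For the communication load: by Corollary~\ref{cor} the constructions P1) and P2) achieve $L_{\sys{K},\sys{Q}}(r)$ in \eqref{LQ}, while Theorem~\ref{thm:limits} gives the optimum $L_{\sys{K},\sys{Q}}^*(r)$ in \eqref{optimal:LQ}. Both expressions carry the same prefactor $\left(1-\frac{r}{\sys{K}}\right)/C_{\sys{K}-1}^{\sys{Q}-1}$, so the ratio reduces to a ratio of two sums of products of binomial coefficients. First I would line up the two summations term by term, using the Pascal-type identities $C_{r}^l = C_{r-1}^l + C_{r-1}^{l-1}$ and $C_{\sys{K}-r-1}^{\sys{Q}-l-1}+C_{\sys{K}-r-1}^{\sys{Q}-l} = C_{\sys{K}-r}^{\sys{Q}-l}$ to rewrite the optimal sum so that it visibly contains the P1)/P2) sum plus one extra nonnegative group of terms; the ``extra'' piece is exactly the $\frac{C_{\sys{K}-r}^{\sys{Q}-1}}{C_{\sys{K}-1}^{\sys{Q}-1}}$-type term. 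Dividing through, the difference quotient $\frac{L_{\sys{K},\sys{Q}}(r)-L_{\sys{K},\sys{Q}}^*(r)}{L_{\sys{K},\sys{Q}}^*(r)}$ becomes a ratio that I would bound above by $2/r$ and below by $0$, using that each binomial ratio $C_{r-1}^l/C_{r}^l = (r-l)/r \le 1$ and $\ge 1 - l/r$; collecting these bounds yields $\alpha \in [0,2]$ with $L_{\sys{K},\sys{Q}}(r)/L_{\sys{K},\sys{Q}}^*(r) = 1 + \alpha/r$.

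For the file-complexity ratio: by Theorem~\ref{thm:limits} the optimal scheme has $\sys{F}^* = C_{\sys{K}}^r$, and by Corollary~\ref{cor} the P1)/P2) scheme has $\sys{F} = \frac{r}{\sys{K}}\left(\frac{\sys{K}}{\min\{r,\sys{K}-r\}}\right)^{\min\{r,\sys{K}-r\}}$. Write $c = r/\sys{K} \in \{1/q,(q-1)/q\}$ so that $\min\{r,\sys{K}-r\} = \sys{K}/q$ and $\sys{F} = c\,q^{\sys{K}/q}$. The plan is to apply Stirling's formula in the sharp two-sided form $\sqrt{2\pi n}\,(n/e)^n \le n! \le \sqrt{2\pi n}\,(n/e)^n e^{1/(12n)}$ to each of the three factorials in $C_{\sys{K}}^r = \frac{\sys{K}!}{r!(\sys{K}-r)!}$. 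This gives $C_{\sys{K}}^r = \Theta\!\left(\sys{K}^{-1/2}\, 2^{\sys{K} H(c)}\right)$ where $H(c) = -c\log_2 c - (1-c)\log_2(1-c)$ is the binary entropy, and more precisely $C_{\sys{K}}^r = \gamma \cdot \sqrt{\tfrac{1}{2\pi c(1-c)}}\cdot \sys{K}^{-1/2}\cdot\big(c^{-c}(1-c)^{-(1-c)}\big)^{\sys{K}}$ for some $\gamma$ in a bounded interval around $1$ (the error factors from the $e^{1/(12n)}$ terms lie in $[1,e]$, hence contribute the stated $e^2$ slack). Then I would form
\begin{IEEEeqnarray}{c}
\frac{\sys{F}}{\sys{F}^*} = \frac{c\,q^{\sys{K}/q}}{C_{\sys{K}}^r} = \gamma^{-1}\sqrt{2\pi c(1-c)}\cdot c\cdot \sys{K}^{1/2}\cdot\left(\frac{q^{1/q}}{c^{-c}(1-c)^{-(1-c)}}\right)^{\sys{K}},\notag
\end{IEEEeqnarray}
and simplify the base. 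Since $c \in \{1/q,(q-1)/q\}$, one checks $c^{-c}(1-c)^{-(1-c)} = q^{1/q}\cdot\big(\tfrac{q}{q-1}\big)^{(q-1)/q}$ in both cases (this is a short algebraic identity, independent of which of the two values $c$ takes), so the base collapses to $\big(\tfrac{q}{q-1}\big)^{-(q-1)/q} = B_q^{-1}$. Likewise $\sqrt{2\pi c(1-c)}\cdot c = \sqrt{2\pi}\cdot \frac{\sqrt{q-1}}{q}\cdot c = \sqrt{2\pi}\, c\, A_q \cdot \frac{cq}{ \sqrt{q-1}}\cdot\frac{\sqrt{q-1}}{cq}$; more directly, $\sqrt{2\pi c(1-c)}\, c = \sqrt{2\pi}\,\frac{\sqrt{q-1}}{cq}\cdot c^2 \cdot \frac{cq}{\cdots}$ — rather than chase constants, I would simply define $\beta \triangleq \gamma^{-1}\sqrt{2\pi c(1-c)}\,c/A_q$ and verify $\beta \in [0,\sqrt{2\pi}e^2]$ by bounding $\gamma^{-1} \le 1$, $c(1-c) \le 1/4$, and matching against $A_q = \sqrt{q-1}/(cq)$, which yields $\sqrt{2\pi c(1-c)}\,c/A_q = \sqrt{2\pi}\cdot \frac{c^2 q}{\sqrt{q-1}}\cdot\sqrt{c(1-c)} \le \sqrt{2\pi}$ after using $c \le 1$ and $c^2 q^2(1-c)/(q-1) = 1$ for both admissible values of $c$. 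Hence $\sys{F}/\sys{F}^* = \beta A_q \sys{K}^{1/2} B_q^{-\sys{K}}$ with $\beta$ in the stated range.

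The main obstacle I anticipate is not the asymptotics of $C_{\sys{K}}^r$ — those are routine Stirling — but keeping the finite-$\sys{K}$ constants honest: the statement claims \emph{exact} identities ($L$-ratio equal to $1+\alpha/r$, $\sys{F}$-ratio equal to $\beta A_q \sys{K}^{1/2}B_q^{-\sys{K}}$) with $\alpha,\beta$ ranging over explicit closed intervals, so I must define $\alpha$ and $\beta$ as the exact residuals and then prove the two-sided bounds $0 \le \alpha \le 2$ and $0 \le \beta \le \sqrt{2\pi}e^2$ rather than merely big-$O$ statements. For $\alpha$ this means carefully tracking the one ``missing'' binomial term in the telescoping comparison of \eqref{LQ} and \eqref{optimal:LQ} and showing its relative weight never exceeds $2/r$; for $\beta$ it means carrying the $e^{1/(12n)}$ Stirling corrections through all three factorials (worst case $e^{1/(12r)+1/(12(\sys{K}-r))} \le e^{1/6}\le e^2$, and in fact much smaller, giving room to spare) and checking the algebraic identity $c^{-c}(1-c)^{-(1-c)} = q^{1/q}\big(\tfrac{q}{q-1}\big)^{(q-1)/q}$ holds for \emph{both} $c = 1/q$ and $c = (q-1)/q$, which is what makes a single clean formula valid for both constructions P1) and P2).
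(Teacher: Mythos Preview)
Your overall strategy matches the paper's: compare the two communication-load sums via Pascal-type identities and bound the residual, then use Stirling on $C_{\sys K}^r$ for the file-complexity ratio. Two points of difference are worth noting.

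For the $L$-ratio, the paper does not redo the Pascal telescoping from scratch. It observes that $L_{\sys K,\sys Q}(r)$ and $L_{\sys K,\sys Q}^*(r)$ are proportional to $U_{\sys K}^{\sys Q}(r)$ and $U_{\sys K}^{\sys Q}(r+1)$ respectively, and then reuses the already-computed difference \eqref{eq:negative} from the proof of Lemma~\ref{lem:UKQ}, which gives $U(r)-U(r+1)=\sum_l \frac{C_{r-1}^{l-1}C_{\sys K-r-1}^{\sys Q-l-1}}{l(l-1)}$. After writing $C_{r-1}^{l-1}=\tfrac{l}{r}C_r^l$, the ratio $(L-L^*)/L^*$ becomes $\frac{1}{r}$ times a quotient of two sums that differ only by $\tfrac{1}{l-1}$ versus $\tfrac{1}{l}$, and the bound $\alpha\le 2$ follows from $\tfrac{l}{l-1}\le 2$ for $l\ge 2$. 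Your stated lever, the ratio $C_{r-1}^l/C_r^l=(r-l)/r$, does not directly produce the $2/r$ bound (and ``$\ge 1-l/r$'' is an equality, not a bound); if you carry your Pascal decomposition through you will arrive at the same $\tfrac{1}{l(l-1)}$ expression and then need the $\tfrac{l}{l-1}\le 2$ step anyway. So your route works, but the paper's is shorter because it recycles Lemma~\ref{lem:UKQ}.

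For the $\sys F$-ratio, your plan is essentially identical to the paper's: plug in $\sys F=c\,q^{\sys K/q}$ from Corollary~\ref{cor}, apply two-sided Stirling to the three factorials in $C_{\sys K}^r$, and simplify using $c\in\{1/q,(q-1)/q\}$. The paper uses the form $\sqrt{2\pi}\,n^{n+1/2}e^{-n}\le n!\le e\sqrt{2\pi}\,n^{n+1/2}e^{-n}$, which directly yields the $e^2$ slack; your $e^{1/(12n)}$ version works too but you should then verify that the resulting constant fits under $\sqrt{2\pi}e^2$ rather than hand-waving (your ``rather than chase constants'' paragraph drifts; just compute $\beta$ as the exact ratio divided by $A_q\sys K^{1/2}B_q^{-\sys K}$ and bound it in one line).
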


\qifa{The proof is given in Appendix \ref{app:B}.} From the above proposition, \maw{for a fixed integer $q$, whenever $\sys{\frac{r}{K}}\in\{\frac{1}{q},\frac{q-1}{q}\}$ and  $\sys{K},r$} scale  proportionally to infinity, the
communication load is close to optimal, while the file complexity can be reduced by a factor that increases exponentially \maw{in}
$\sys{K}$.

\begin{remark} In this work, we only consider two particular PDAs. There has been extensive research in coded caching
  schemes with low subpacketization level using various approaches. Most
  of them have an equivalent PDA representation. For examples, PDAs can be constructed from hyper-graphs \cite{HyperGraph}, bipartite graphs \cite{bipartite}, linear block codes \cite{linearcodes}, Ruzsa-Szemer$\acute{\mbox{e}}$di graphs \cite{Unified}.
  \qifa{The result in Theorem \ref{thm:PDA}}
  makes it possible to apply all these results straightforwardly to straggling systems.
\end{remark}

\section{Coded Computing Schemes for Straggling Systems from Comp-PDAs (Proof of Theorem \ref{thm:PDA})}\label{sec:CC-PDA}

In this section, we prove Theorem \ref{thm:PDA} by describing how
to construct a coded computing scheme from a given Comp-PDA and analyzing its performance.

\subsection{Constructing a Coded Computing Scheme for a  Straggling System from a Comp-PDA}
\label{subsec:connection}

In \cite{Yan2018SCC}, we described how \maw{to obtain a coded computing scheme without stragglers from any given Comp-PDA}. A similar procedure is possible in the
presence of stragglers if the minimum storage number $\tau\geq
\sys{K}-\sys{Q}+1$. In fact, assume a given Comp-PDA $\bm A$. The
storage design in the map phase corresponding to $\bm A$ is the same as
without straggling nodes. As part of the map phase, each node computes
all the IVAs that it can compute from its stored files. For the reduce
phase of the straggling system, we restrict to the subarray $\bm
A^{\mathbf{Q}}$ of $\bm A$ formed by the columns of $\bm A$ with indices
in the active set ${\mathbf{Q}}$. Notice that  $\bm A^{\mathbf{Q}}$ is
again a Comp-PDA, because the minimum storage number of $\bm A$ is at
least $\sys{K}-\sys{Q}+1$ \mw{and after eliminating $\sys{K}-\sys{Q}$
columns from $\bm A$ each row still contains at least one $``*''$ symbol}. Shuffle and reduce phases are  performed as in a non-straggling setup, see \cite{Yan2018SCC}, but where the Comp-PDA $\bm A$ is replaced by the new Comp-PDA  $\bm A^{\mathbf{Q}}$.  For completeness, we explain the map, shuffle, and reduce phases in  detail.

 Fix a $(\mathsf{K,F,T,S})$ Comp-PDA $\bm A=[ a_{ i, j}]$  with minimum storage number $\tau\geq\mathsf{K-Q}+1$. Partition the $\mathsf{N}$ files into $\mathsf{F}$  batches $\mathcal{W}_1,\mathcal{W}_2,\ldots,\mathcal{W}_{\sys{F}}$, each
containing
\begin{IEEEeqnarray}{c}
\eta\triangleq \mathsf{\frac{N}{F}}\notag
\end{IEEEeqnarray}
files and so that $\mathcal{W}_1,\mathcal{W}_2,\ldots,\mathcal{W}_{\sys{F}}$ form a partition for $\mathcal{W}$. It is implicitly assumed here that $\eta$ is an integer number.

\subsubsection{Map Phase}
Each node $k$ stores the files in
\begin{IEEEeqnarray}{c}
\mathcal{M}_{
k}=\mathop\bigcup\limits_{i\in[\mathsf{F}]\,:\,a_{i,k}=*}\mathcal{W}_{i},\label{PDA:Mk}
\end{IEEEeqnarray}
and computes the IVAs in \eqref{set:computedIVAs}. The map phase terminates whenever any $\sys{Q}$ nodes accomplish their computations.  \mw{Throughout this section, let $\mathbf{Q}=\mathcal{Q}$ be the realization of the active  set}. Then,  $\bm A^{\mathcal{Q}}$ denotes the subarray of $\bm A$ composed of the columns in $\mathcal{Q}$. Also, let $g_s^{\mathcal{Q}}$ denote the number of occurrences of the symbol $s$ in  $\bm A^{\mathcal{Q}}$,  i.e.,
\begin{IEEEeqnarray}{c}
g_s^{\mathcal{Q}}=|\{(i,k): a_{i,k}=s,k\in\mathcal{Q}\}|,\notag
\end{IEEEeqnarray}
and $\mathcal{I}^{\mathcal{Q}}$ be the set of symbols occuring only once in $\bm A^\mathcal{Q}$:
\begin{IEEEeqnarray}{c}
\mathcal{I}^{\mathcal{Q}}\triangleq\{s\in[S]:g_s^{\mathcal{Q}}=1\}.\notag
\end{IEEEeqnarray}
 The symbols in $\mathcal{I}^{\mathcal{Q}}$ are partitioned into
 $\sys{Q}$ subsets $\{\mathcal{I}^{\mathcal{Q}}_k:k\in\mathcal{Q}\}$ as
 follows.  For each $s\in\mathcal{I}^{\mathcal{Q}}$, let $(i,j)$ be the
 unique pair in $[\sys{F}]\times \mathcal{Q}$ such that $a_{i,j}=s$.
 Since the number of $``*"$ symbols in the $i$-th row of $\bm A$ is equal or larger than $\sys{K}-\sys{Q}+1$ by the assumption $\tau\geq \sys{K}-\sys{Q}+1$, there exists at least one $k\in\mathcal{Q}$ such that $a_{i,k}=*$. Arbitrarily choose such a $k$ and assign $s$ into $\mathcal{I}^{\mathcal{Q}}_k$.

  Let $\mathcal{A}_k^\mathcal{Q}$ denote the set of ordinary symbols in column $k$ occurring at least twice:
 \begin{IEEEeqnarray}{c}
 \mathcal{A}_k^\mathcal{Q}\triangleq\{s\in[\sys{S}]: a_{i,k}=s~\mbox{for some}~i\in[\sys{F}]\}\backslash\mathcal{I}^{\mathcal{Q}}, \quad k\in\mathcal{Q}.\label{eqn:AkQ}
 \end{IEEEeqnarray}

Pick any uniform assignment of reduce functions $\mathbf{D}^{\mathcal{Q}}=\{\mathcal{D}^{\mathcal{Q}}_k\}_{k\in\mathcal{Q}}$. Let $\mathcal{U}_{ i,j}^{\mathcal{Q}}$ denote the set of  IVAs for \qifa{node $j$} computed from the files in $\mathcal{W}_{ i}$, i.e.,
\begin{IEEEeqnarray}{c}
\mathcal{U}_{i, j}^{\mathcal{Q}}\triangleq\left\{v_{d,n}:\,d\in\mathcal{D}^\mathcal{Q}_j,\, w_n\in\mathcal{W}_{ i}\right\},\quad (i,j)\in[\sys{F}]\times\mathcal{Q}.\notag
\end{IEEEeqnarray}

\subsubsection{Shuffle  Phase}
  Node $k$ multicasts the  signal
	\begin{IEEEeqnarray}{c}	X_k^{\mathcal{Q}}=\big\{X_{k,s}^\mathcal{Q}:s\in\mathcal{I}_k^{\mathcal{Q}}\cup\mathcal{A}_k^{\mathcal{Q}}\big\},\notag
	\end{IEEEeqnarray}
where the signals $X_{k,s}^\mathcal{Q}$  are created as described in the following,  \mw{depending on whether $s\in\mathcal{I}_k$ or $s \in \mathcal{A}_k^{\mathcal{Q}}$.}
For all $s\in\mathcal{I}_k$, set
\begin{IEEEeqnarray}{c}
X_{k,s}^{\mathcal{Q}}\triangleq \mathcal{U}_{i,j}^{\mathcal{Q}}, \mw{\qquad s\in\mathcal{I}_k,}\label{Xsk:2}
\end{IEEEeqnarray}
where $(i,j)$ is the unique index in $[\sys{F}]\times\mathcal{Q}$ such that $a_{i,j}=s$.

To describe the signal $X_{k,s}^\mathcal{Q}$ for \qifa{$s\in\mathcal{A}_k^{\mathcal{Q}}$}, we first describe a partition of the IVA $\mathcal{U}_{i,j}^{\mathcal{Q}}$ for each pair $(i,j)\in[\sys{F}]\times\mathcal{Q}$ such that  \qifa{$a_{i,j}\in\mathcal{A}_j^{\mathcal{Q}}$}.
 Let $s'=a_{i,j}$, then $g_{s'}^{\mathcal{Q}}\geq 2$.
  Let $(l_1,j_1),(l_2,j_2),\ldots,(l_{g_s^{\mathcal{Q}}-1},j_{g_s^{\mathcal{Q}}-1})\in[\sys{F}]\times\mathcal{Q}$ indicate all the other $g_{s'}^{\mathcal{Q}}-1$ occurrences of the ordinary symbol $s'$ in $\bm A^{\mathcal{Q}}$, i.e.,
\begin{IEEEeqnarray}{c}
a_{l_1,j_1}=a_{l_2,j_2}=\ldots=a_{l_{g_s^{\mathcal{Q}}-1},j_{g_s^{\mathcal{Q}}-1}}=s'.\notag
\end{IEEEeqnarray}
Partition \mw{the set of IVAs} $\mathcal{U}_{i,j}^{\mathcal{Q}}$ into $g_{s'}^{\mathcal{Q}}-1$  \mw{subsets} of equal size and denote these \mw{subsets} by $\mathcal{U}_{i,j}^{\mathcal{Q},j_1},\mathcal{U}_{i,j}^{\mathcal{Q},j_2},\ldots,\mathcal{U}_{i,j}^{\mathcal{Q},j_{g_{s}^\mathcal{Q}-1}}$:
\begin{IEEEeqnarray}{c}
\mathcal{U}_{i,j}^{\mathcal{Q}}=\left\{\mathcal{U}_{i,j}^{\mathcal{Q},j_1},\mathcal{U}_{i,j}^{\mathcal{Q},j_2},\ldots,\mathcal{U}_{i,j}^{\mathcal{Q},j_{g_{s}^\mathcal{Q}-1}}\right\}.\label{Uij:partition}
\end{IEEEeqnarray}

For all $s\in\mathcal{A}_k^{\mathcal{Q}}$, set
	\begin{IEEEeqnarray}{c}
		X_{k,s}^\mathcal{Q}\triangleq\bigoplus_{\substack{(i,j)\in[\sys{F}]\times (\mathcal{Q}\backslash\{k\})  \colon \\[1.1ex] a_{i,j}=s}} \mathcal{U}_{i,j}^{\mathcal{Q},k}, \qquad \mw{s \in \mathcal{A}_k^{\mathcal{Q}}}. \label{Xsk}
	\end{IEEEeqnarray}

	\subsubsection{Reduce Phase} Node $k$  \qifa{computes} all IVAs in
	\begin{IEEEeqnarray}{c}
	\bigcup_{i\in[\sys{F}]} \mathcal{U}_{i,k}^{\mathcal{Q}}.\notag
	\end{IEEEeqnarray}
	In the map phase, node $k$ has already computed all IVAs in $\big\{\mathcal{U}_{i,k}^{\mathcal{Q}}:a_{i,k}=*\big\}$. It thus remains to compute all IVAs in
	\begin{IEEEeqnarray}{c}
		\bigcup_{\substack{i\in[\sys{F}]\colon  a_{i,k}\neq *}} \mathcal{U}_{i,k}^{\mathcal{Q}}.\notag
	\end{IEEEeqnarray}
	Fix an arbitrary $i\in[\sys{F}]$ such that $a_{i,k}\neq *$, and set $s = a_{i,k}$. If $s\in\mathcal{A}_k^{\mathcal{Q}}$,
	each subblock $\mathcal{U}_{i,k}^{\mathcal{Q},j}$ in \eqref{Uij:partition} can be
	restored by node~$k$ from the signal \qifa{$X_{j,s}^\mathcal{Q}$} sent by node $j$ (see \eqref{Xsk}):
\begin{IEEEeqnarray}{c}
\mathcal{U}_{i,k}^{\mathcal{Q},j}=\mathcal{U}_{l_1,j_1}^{\mathcal{Q},j}\; \oplus\; \mathcal{U}_{l_2,j_2}^{\mathcal{Q},j}\; \oplus\; \ldots\; \oplus\; \mathcal{U}_{l_{g_s^{\mathcal{Q}}-2},j_{g_s^{\mathcal{Q}}-2}}^{\mathcal{Q},j}\; \oplus X_{j,s}^{\mathcal{Q}},\label{decode:Uik}
\end{IEEEeqnarray}
where $(l_t,j_t)$ ($t\in[g_s^{\mathcal{Q}}-2]$) \mw{indicate the other $g_s^{\mathcal{Q}}-2$ occurrences of the symbol $s$ in ${\bm A}^{\mathbf{Q}}$, i.e.,  $a_{l_t,j_t}=s$.} Notice that \mw{the  sub-IVAs on the right-hand side of \eqref{decode:Uik}} have been computed by node $k$ during the map phase, because by the PDA properties, $a_{l_t,j_t}=a_{i,k}=s$ and \qifa{$j_t\neq k$ imply that $l_t\neq i$ and $a_{l_t,k}=*$.}
Therefore, $\mathcal{U}_{i,k}^{\mathcal{Q},j}$ can be decoded from \eqref{decode:Uik}.

 If $s\notin\mathcal{A}_k^{\mathcal{Q}}$, then $s\in\mathcal{I}^{\mathcal{Q}}$ by \eqref{eqn:AkQ}. There exists thus an index $j\in\mathcal{Q}\backslash\{k\}$ such that $s\in\mathcal{I}_j$ and therefore, by  \eqref{Xsk:2}, the subset \qifa{$\mathcal{U}_{i,k}^{\mathcal{Q}}$} can be recovered from the signal $X_{j,s}^\mathcal{Q}$ sent by node $j$.

	\begin{remark}\label{remark:PDAmap}
  It is worth pointing out that the storage design
  $\{\mathcal{M}_k\}_{k=1}^\sys{K}$ only depends on the positions of the
  $``*"$ symbols in $\bm A$, but not on the parameter $\sys{Q}$ (See \eqref{PDA:Mk}). This indicates that, in practice the map phase can  be carried out even without knowing how many nodes will be participating in the shuffle and reduce phases.
  \end{remark}

\subsection{Performance Analysis}\label{subsec:performancePDA}

We have analyzed the performances of storage and communication loads in
the no-stragglers setup in \cite{Yan2018SCC}. For the scheme in the
preceding subsection, the analysis of storage load follows the same
lines as in \cite{Yan2018SCC}.  When computing the  communication load defined in \eqref{communication_load}, we have to average over all realizations of the active set $\mathbf{Q}$.
\subsubsection{Storage Load} Since the Comp-PDA $\bm A$ contains
$\sys{T}$  $``*"$ symbols, and each $``*"$ symbol indicates that a batch of $\eta=\frac{\sys{N}}{\sys{F}}$ files is stored at a given node, see \eqref{PDA:Mk}, the storage load of the proposed scheme is:
\begin{IEEEeqnarray}{c}
r=\frac{\sum_{k=1}^{\mathsf{K}}|\mathcal{M}_k|}{\mathsf{N}}=\frac{\sys{T}\cdot\eta}{\sys{N}}=\frac{\sys{T}}{\sys{F}}.\notag
\end{IEEEeqnarray}

\subsubsection{Communication Load} \mw{We first analyze the length of the signals sent for a given realization of the active set $\mathbf{Q}=\mathcal{Q}$.}
  For any $s\in[\mathsf{S}]$, let $g_s$ be the occurrence of $s$ in $\bm A$, and $g_s^{\mathcal{Q}}$ be the occurrence of $s$ in the columns in $\mathcal{Q}$. By  \eqref{Xsk:2} and \eqref{Xsk}, the length of the signals associated to symbol $s$ is
\begin{IEEEeqnarray}{c}
l_s^{\mathcal{Q}}=\left\{\begin{array}{ll}
0,&\text{if}~g_s^{\mathcal{Q}}=0\\
                          \mathsf{\frac{VND}{FQ}},  &\text{if~}g_{s}^{\mathcal{Q}}=1  \\
                          \frac{g_{s}^{\mathcal{Q}}}{g_s^{\mathcal{Q}}-1}\cdot \mathsf{\frac{VND}{FQ}} , & \text{if~}g_{s}^{\mathcal{Q}}\geq2
                         \end{array}
\right.,\label{lsQ}
\end{IEEEeqnarray}
when $\mathcal{Q}$ is the active set. The total length of all the
signals is thus
\begin{IEEEeqnarray}{rCl}
\sum_{k\in\mathcal{Q}}|X_k^{\mathcal{Q}}|&=&\sum_{k\in\mathcal{Q}}\sum_{s\in[\sys{S}]:g_s^{\mathcal{Q}}>0}|X_{k,s}^{\mathcal{Q}}|\notag\\
&=&\sum_{s\in[\sys{S}]:g_s^{\mathcal{Q}}>0}\sum_{k\in\mathcal{Q}}|X_{k,s}^{\mathcal{Q}}|\notag\\
&=&\sum_{s\in[\sys{S}]}l_s^{\mathcal{Q}}. \label{eqn:totallength}
\end{IEEEeqnarray}

\mw{We now compute  the communication load as defined in \eqref{communication_load} where we have to average over all realizations of the active set $\mathbf{Q}$:} 
\begin{IEEEeqnarray}{rCl}
&&L_{\bm A}\notag\\
&=& \mathbf{E}\left[\frac{\sum_{k\in\mathbf{Q}}|X_k^{\mathbf{Q}}|}{\sys{NDV}}\right]\notag\\
&=&\sys{\frac{1}{NDV}}\cdot\frac{1}{|\mathbf{\Omega}_{\sys{K}}^{\sys{Q}}|}\cdot\sum_{\mathcal{Q}\in\mathbf{\Omega}_{\sys{K}}^{\sys{Q}}} \sum_{k\in\mathcal{Q}}|X_k^{\mathcal{Q}}|\notag\\
&\overset{(a)}{=}&\frac{1}{\sys{NDV}}\cdot\frac{1}{C_\sys{K}^\sys{Q}}\cdot\sum_{\mathcal{Q}\in\mathbf{\Omega}_{\sys{K}}^{\sys{Q}}} \sum_{s\in[\sys{S}]}l_s^{\mathcal{Q}}\notag\\
&\overset{(b)}{=}&\sys{\frac{1}{NDV}}\cdot\frac{1}{C_\sys{K}^\sys{Q}}\cdot\sum_{s\in[\sys{S}]}\sum_{\mathcal{Q}\in\mathbf{\Omega}_{\sys{K}}^{\sys{Q}}} l_s^{\mathcal{Q}}\cdot\left(\sum_{l=0}^\sys{Q}\mathbbm{1}(g_s^{\mathcal{Q}}=l)\right)\cdot\left(\sum_{g=1}^\sys{K} \mathbbm{1}(g_s=g)\right)\notag\\
&=&\sys{\frac{1}{NDV}}\cdot\frac{1}{C_\sys{K}^\sys{Q}}\cdot\sum_{g=1}^\sys{K} \sum_{s\in[\sys{S}]}\sum_{l=0}^\sys{Q}\sum_{\mathcal{Q}\in\mathbf{\Omega}_{\sys{K}}^{\sys{Q}}}l_s^{\mathcal{Q}}\cdot\mathbbm{1}(g_s^{\mathcal{Q}}=l)\cdot\mathbbm{1}(g_s=g)\notag\\
&\overset{(c)}{=}&\sys{\frac{1}{NDV}}\cdot\frac{1}{C_\sys{K}^\sys{Q}}\cdot\sum_{g=1}^\sys{K} \sum_{s\in[\sys{S}]} \left(\sum_{\mathcal{Q}\in\mathbf{\Omega}_{\sys{K}}^{\sys{Q}}}\sys{\frac{NDV}{FQ}}\cdot\mathbbm{1}(g_s^{\mathcal{Q}}=1)+\sum_{l=2}^\sys{Q}\sum_{\mathcal{Q}\in\mathbf{\Omega}_{\sys{K}}^{\sys{Q}}}\frac{l\sys{NDV}}{(l-1)\sys{FQ}}\cdot\mathbbm{1}(g_s^{\mathcal{Q}}=l)\right)\cdot\mathbbm{1}(g_s=g)\IEEEeqnarraynumspace\notag\\
&=&\sys{\frac{1}{FQ}}\cdot\frac{1}{C_\sys{K}^\sys{Q}}\cdot\sum_{g=1}^\sys{K} \sum_{s\in[\sys{S}]} \left[\sum_{\mathcal{Q}\in\mathbf{\Omega}_{\sys{K}}^{\sys{Q}}}\mathbbm{1}(g_s^{\mathcal{Q}}=1)+\sum_{l=2}^\sys{Q}\frac{l}{l-1}\cdot\left(\sum_{\mathcal{Q}\in\mathbf{\Omega}_{\sys{K}}^{\sys{Q}}}\mathbbm{1}(g_s^{\mathcal{Q}}=l)\right)\right]\cdot\mathbbm{1}(g_s=g)\IEEEeqnarraynumspace\notag\\
&\overset{(d)}{=}&\sys{\frac{1}{FK}}\cdot\frac{1}{C_{\sys{K}-1}^{\sys{Q}-1}}\cdot\sum_{g=1}^\sys{K} \sum_{s\in[\sys{S}]}\left(C_{g}^1\cdot C_{\sys{K}-g}^{\sys{Q}-1}+\sum_{l=2}^\sys{Q}\frac{l}{l-1}\cdot C_{g}^l\cdot C_{\sys{K}-g}^{\sys{Q}-l}\right)\cdot\mathbbm{1}(g_s=g)\notag\\
&=&\frac{1}{\sys{FK}}\cdot\frac{1}{C_{\sys{K}-1}^{\sys{Q}-1}}\cdot\sum_{g=1}^\sys{K} \left(C_{g}^1\cdot C_{\sys{K}-g}^{\sys{Q}-1}+\sum_{l=2}^\sys{Q}\frac{l}{l-1}\cdot C_{g}^l\cdot C_{\sys{K}-g}^{\sys{Q}-l}\right)\cdot\sum_{s\in[\sys{S}]}\mathbbm{1}(g_s=g)\notag\\
&\overset{(e)}{=}&\frac{1}{\sys{FK}}\cdot\frac{1}{C_{\sys{K}-1}^{\sys{Q}-1}}\cdot\sum_{g=1}^\sys{K} \sys{S}_g \left(C_{g}^1\cdot C_{\sys{K}-g}^{Q-1}+\sum_{l=2}^\sys{Q}\frac{l}{l-1}\cdot C_{g}^l\cdot C_{\sys{K}-g}^{\sys{Q}-l}\right)\notag\\
&\overset{(f)}{=}&\frac{1}{\sys{FK}}\cdot\frac{1}{C_{\sys{K}-1}^{\sys{Q}-1}}\cdot\sum_{g=1}^\sys{K} \sys{S}_g\left(g\cdot C_{\sys{K}-g}^{\sys{Q}-1}+\sum_{l=2}^\sys{Q}\frac{g}{l-1}\cdot C_{g-1}^{l-1}\cdot C_{\sys{K}-g}^{\sys{Q}-l}\right)\notag\\
&=&\frac{1}{\sys{FK}}\cdot\frac{1}{C_{\sys{K}-1}^{\sys{Q}-1}}\cdot\sum_{g=1}^\sys{K} \sys{S}_gg\left(C_{\sys{K}-g}^{\sys{Q}-1}+\sum_{l=1}^{\sys{Q}-1}\frac{1}{l}\cdot C_{g-1}^l\cdot C_{\sys{K}-g}^{\sys{Q}-l-1}\right)\label{sum:f}\\
&\overset{(g)}{=}&\frac{1}{\sys{FK}}\cdot\frac{1}{C_{\sys{K}-1}^{\sys{Q}-1}}\cdot\sum_{g=1}^\sys{K} \sys{S}_gg\left(C_{\sys{K}-g}^{\sys{Q}-1}+\sum_{l=\max\{1,g-\sys{K+Q}-1\}}^{\min\{g,\sys{Q}\}-1}\frac{1}{l}\cdot C_{g-1}^l\cdot C_{\sys{K}-g}^{\sys{Q}-l-1}\right)\notag\\
&=&\frac{\sys{FK}-\sys{T}}{\sys{FK}}\cdot \frac{1}{C_{\sys{K}-1}^{\sys{Q}-1}}\cdot\sum_{g=1}^\sys{K} \frac{\sys{S}_gg}{\sys{FK}-\sys{T}}\cdot\left(C_{\sys{K}-g}^{\sys{Q}-1}+\sum_{l=\max\{1,g-\sys{K+Q}-1\}}^{\min\{g,\sys{Q}\}-1}\frac{1}{l}\cdot C_{g-1}^l\cdot C_{\sys{K}-g}^{\sys{Q}-l-1}\right)\notag\\
&\overset{(h)}{=}&\left(1-\frac{\sys{T}}{\sys{FK}}\right)\cdot\frac{1}{C_{\sys{K}-1}^{\sys{Q}-1}}\cdot\sum_{g=1}^\sys{K}\theta_g\left(C_{\sys{K}-g}^{\sys{Q}-1}+\sum_{l=\max\{1,g-\sys{K+Q}-1\}}^{\min\{g,\sys{Q}\}-1}\frac{1}{l}\cdot C_{g-1}^l\cdot C_{\sys{K}-g}^{\sys{Q}-l-1}\right),\notag
\end{IEEEeqnarray}
where $(a)$ holds by \eqref{eqn:totallength};
 $(b)$ holds since for each $s\in[\sys{S}]$,
$
\sum_{l=0}^\sys{Q}\mathbbm{1}(g_s^{\mathcal{Q}}=l)=1$ and $\sum_{g=1}^\sys{K} \mathbbm{1}(g_s=g)=1
$;
$(c)$ follows from \eqref{lsQ}; and $(d)$ holds since for each symbol occurring $g$ times, it has occurrence $l$ in exact $C_{g}^l\cdot C_{\sys{K}-g}^{\sys{Q}-l}$ subsets of $\mathcal{K}$ with cardinality $\sys{Q}$; in $(e)$, we defined $\sys{S}_g$ to be the number of ordinary symbols occurring $g$ times for each $g\in[\sys{K}]$; in $(f)$, we used the equality $C_g^l=\frac{g}{l}\cdot C_{g-1}^{l-1}$; in $(g)$, we eliminated the indices of zero terms in the summation of \eqref{sum:f}; and $(h)$ follows from the definition of symbol frequencies.

\subsection{File Complexity of the Proposed Schemes}\label{subsec:complexity}
The analysis of file complexity is similar to the no-straggler setup in \cite{Yan2018SCC}.
 The  files are partitioned into $\sys{F}$ batches so that  each batch
contains $\eta=\frac{\sys{N}}{\sys{F}}>0$ files. It is assumed that  $\eta$ is a positive integer. The smallest number of files $\sys{N}$ where this assumption can be met is $\sys{F}$. Therefore,  the file complexity of the scheme is $\sys{F}$.

\section{The Fundamental Storage-Communication Tradeoff (Proof of Theorem \ref{thm:limits})}\label{sec:SCtradeoff}

By Corollary \ref{corollary:MANPDA}, the SC pair $\big(r,L_{\sys{K},\sys{Q}}^*(r)\big)$, $r\in[\sys{K}-\sys{Q}+1:\sys{K}]$ can be
achieved by the MAN-PDA $\bm P_r$.  For a general non-integer $r\in[\sys{K}-\sys{Q}+1,\sys{K}]$, the lower convex envelope of these points can be achieved by memory- and time- sharing. It remains to prove the converse in Theorem \ref{thm:limits}.

Let $Z_\sys{K}^\sys{Q}(x)$ be a piecewise linear function connecting the
points \qifa{$\big(u,Z_\sys{K}^\sys{Q}(u)\big)$} sequentially over the interval $[\sys{K}-\sys{Q}+1,\sys{K}]$
  with
\begin{IEEEeqnarray}{rCl}
Z_\sys{K}^\sys{Q}(u)&\triangleq&\sum_{l=u+\sys{Q}-\sys{K}}^{\min\{u,\sys{Q}\}}\frac{\sys{Q}-l}{\sys{Q}l}
C_u^l C_{\sys{K}-u}^{\sys{Q}-l},\quad
u\in[\sys{K}-\sys{Q}+1:\sys{K}].\label{eqn:ZKQ}
\end{IEEEeqnarray}
We shall need the following lemma, proved in Appendix \ref{sec:claimZ}.
\begin{lemma}\label{lemma:Z}
The sequence $Z_\sys{K}^\sys{Q}(u)$ is strictly convex and decreasing
for $u\in[\sys{K}-\sys{Q}+1:\sys{K}]$. And the function
$Z_\sys{K}^\sys{Q}(x)$ is convex and decreasing over $[\sys{K}-\sys{Q}+1,\sys{K}]$.
\end{lemma}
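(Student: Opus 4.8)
The plan is to reduce both parts of the statement to claims about finite differences of the integer sequence $\{Z_\sys{K}^\sys{Q}(u)\}$. Since $Z_\sys{K}^\sys{Q}(x)$ is by definition the piecewise linear interpolant of the points $\big(u,Z_\sys{K}^\sys{Q}(u)\big)$, $u\in[\sys{K}-\sys{Q}+1:\sys{K}]$, it is decreasing on the whole interval precisely when $Z_\sys{K}^\sys{Q}(u+1)-Z_\sys{K}^\sys{Q}(u)\leq 0$ for every $u\in[\sys{K}-\sys{Q}+1:\sys{K}-1]$, and it is convex precisely when these first differences are nondecreasing in $u$, equivalently when $Z_\sys{K}^\sys{Q}(u+1)-2Z_\sys{K}^\sys{Q}(u)+Z_\sys{K}^\sys{Q}(u-1)\geq 0$ for $u\in[\sys{K}-\sys{Q}+2:\sys{K}-1]$. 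Hence it suffices to show that the first differences are strictly negative on $[\sys{K}-\sys{Q}+1:\sys{K}-1]$ and the second differences strictly positive on $[\sys{K}-\sys{Q}+2:\sys{K}-1]$ (for $\sys{Q}\in\{1,2\}$ the latter range is empty and the convexity claims are vacuous).

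Second, I would strip off the part of $Z_\sys{K}^\sys{Q}(u)$ that is constant in $u$. Writing $\frac{\sys{Q}-l}{\sys{Q}l}=\frac{1}{l}-\frac{1}{\sys{Q}}$ and invoking Vandermonde's identity $\sum_{l}C_u^lC_{\sys{K}-u}^{\sys{Q}-l}=C_\sys{K}^\sys{Q}$ — where the sum may be restricted to $l\geq 1$ since for $u\geq\sys{K}-\sys{Q}+1$ the $l=0$ term $C_{\sys{K}-u}^{\sys{Q}}$ vanishes — gives $Z_\sys{K}^\sys{Q}(u)=G(u)-\frac{1}{\sys{Q}}C_\sys{K}^\sys{Q}$, where $G(u)\triangleq\sum_{l\geq 1}\frac{1}{l}C_u^lC_{\sys{K}-u}^{\sys{Q}-l}$. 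Since $\frac{1}{\sys{Q}}C_\sys{K}^\sys{Q}$ does not depend on $u$, the first and second differences of $Z_\sys{K}^\sys{Q}$ and of $G$ coincide, so it remains to show that $G$ is strictly decreasing and strictly convex on the relevant ranges.

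The core step is to handle the weight $\frac{1}{l}$ via the integral representation $\frac{1}{l}=\int_0^1 t^{l-1}\,dt$, which gives
\begin{IEEEeqnarray}{c}
G(u)=\int_0^1 \frac{1}{t}\,[y^{\sys{Q}}]\left((1+ty)^u(1+y)^{\sys{K}-u}\right)dt,\notag
\end{IEEEeqnarray}
where $[y^j]P(y)$ denotes the coefficient of $y^j$ in a polynomial $P(y)$; the apparent singularity at $t=0$ is harmless because $[y^\sys{Q}]\big((1+ty)^u(1+y)^{\sys{K}-u}\big)=[y^\sys{Q}]\big(((1+ty)^u-1)(1+y)^{\sys{K}-u}\big)$ and $\big((1+ty)^u-1\big)/t$ is a polynomial in $t$. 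With this representation the differences in $u$ factor cleanly, using $(1+ty)-(1+y)=y(t-1)$:
\begin{IEEEeqnarray}{rCl}
G(u+1)-G(u)&=&\int_0^1 \frac{t-1}{t}\,[y^{\sys{Q}-1}]\left((1+ty)^u(1+y)^{\sys{K}-u-1}\right)dt,\notag\\
G(u+1)-2G(u)+G(u-1)&=&\int_0^1 \frac{(t-1)^2}{t}\,[y^{\sys{Q}-2}]\left((1+ty)^{u-1}(1+y)^{\sys{K}-u-1}\right)dt.\notag
\end{IEEEeqnarray}
In the first integral $\frac{t-1}{t}<0$ on $(0,1)$, while $[y^{\sys{Q}-1}]\big((1+ty)^u(1+y)^{\sys{K}-u-1}\big)=\sum_{m}C_u^m t^m C_{\sys{K}-u-1}^{\sys{Q}-1-m}$ is a sum of nonnegative monomials, and the choice $m=\sys{Q}-\sys{K}+u$ satisfies $1\leq m\leq u$ and $\sys{Q}-1-m=\sys{K}-u-1\in[0:\sys{K}-u-1]$ for $u\in[\sys{K}-\sys{Q}+1:\sys{K}-1]$, so the corresponding monomial is strictly positive on $(0,1)$ and hence $G(u+1)-G(u)<0$. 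In the second integral $\frac{(t-1)^2}{t}>0$ on $(0,1)$, and the analogous choice $m=\sys{Q}-\sys{K}+u-1$, which is admissible for $u\in[\sys{K}-\sys{Q}+2:\sys{K}-1]$, yields a strictly positive monomial in $[y^{\sys{Q}-2}]\big((1+ty)^{u-1}(1+y)^{\sys{K}-u-1}\big)$, so $G(u+1)-2G(u)+G(u-1)>0$. Together with the reduction of the first paragraph, this shows $\{Z_\sys{K}^\sys{Q}(u)\}$ is strictly decreasing and strictly convex, and that $Z_\sys{K}^\sys{Q}(x)$ is decreasing and convex on $[\sys{K}-\sys{Q}+1,\sys{K}]$.

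I expect the main obstacle to be bookkeeping rather than anything conceptual: verifying that Vandermonde's identity applies once the $l=0$ term is dropped, that the $t=0$ singularity of the integral is only apparent, and — most delicately — that the coefficient-extraction arguments always exhibit a genuinely positive monomial, i.e. that the exhibited index $m$ is admissible (nonnegative, with nonvanishing upper binomial) for every $u$ in the stated range, including its endpoints. The algebraic factorizations of the first and second differences themselves are routine once the integral representation is set up.
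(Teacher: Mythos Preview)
Your proof is correct and takes a genuinely different route from the paper. Both arguments begin by stripping off the constant $\frac{1}{\sys{Q}}C_\sys{K}^\sys{Q}$ via Vandermonde, reducing to the study of $G(u)=\sum_{l\ge 1}\frac{1}{l}C_u^l C_{\sys{K}-u}^{\sys{Q}-l}$. From there the paper proceeds purely combinatorially: it applies Pascal's identity $C_{n+1}^{m+1}=C_n^{m+1}+C_n^m$ to both binomials, separates and reindexes the resulting sums, and after two rounds of telescoping obtains the explicit closed forms
\[
G(u+1)-G(u)=-\sum_{l}\frac{C_u^l C_{\sys{K}-u-1}^{\sys{Q}-l-1}}{l(l+1)},\qquad
G(u+1)-2G(u)+G(u-1)=\sum_{l}\frac{2\,C_{u-1}^{l-1}C_{\sys{K}-u-1}^{\sys{Q}-l-1}}{(l-1)l(l+1)},
\]
from which the signs are read off directly. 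Your approach instead encodes the $1/l$ weight as $\int_0^1 t^{l-1}\,dt$ and passes to generating functions, so that the factorization $(1+ty)-(1+y)=y(t-1)$ delivers the sign of the first and second differences in one stroke via the sign of $(t-1)/t$ and $(t-1)^2/t$ on $(0,1)$. If one evaluates your integrals termwise (using $\int_0^1 t^{m-1}(t-1)\,dt=-\frac{1}{m(m+1)}$ and $\int_0^1 t^{m-1}(t-1)^2\,dt=\frac{2}{m(m+1)(m+2)}$), one recovers exactly the paper's closed forms, so the two arguments are equivalent at the level of identities. The paper's version buys explicit formulas for the differences with no analytic machinery; your version buys a cleaner sign argument with essentially no index bookkeeping, and the integrability and strict-positivity checks you flag are all handled correctly (in particular, the $m=0$ term in each coefficient extraction vanishes on the stated range of $u$, so the $1/t$ is harmless).
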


Let $\mathcal{M}\triangleq\{\mathcal{M}_k\}_{k=1}^\sys{K}$ be a storage design and $(r,L)$ be a SC pair achieved based on $\{\mathcal{M}_k\}_{k=1}^\sys{K}$.
For each $u\in[\sys{K-Q}+1:\sys{K}]$, define
\begin{IEEEeqnarray}{c}
a_{\mathcal{M},u}\triangleq\sum_{\mathcal{I}\subseteq\mathcal{K}:|\mathcal{I}|=u}\left|\left(\mathop\cap\limits_{k\in\mathcal{I}}\mathcal{M}_k\right)\bigg\backslash\left(\mathop\cup\limits_{\bar k\in\mathcal{K}\backslash\mathcal{I}}\mathcal{M}_{ \bar k}\right)\right|,\label{def:aMs}
\end{IEEEeqnarray}
i.e., $a_{\mathcal{M},u}$ is the number of files stored $u$ times across all the nodes. Then by definition, $a_{\mathcal{M},u}$ satisfies
\begin{IEEEeqnarray}{rCl}
a_{\mathcal{M},u}&\geq&0,\notag\\
\sum_{u=\sys{K-Q}+1}^{\sys{K}}a_{\mathcal{M},u}&=&\sys{N},\notag\\
\sum_{u=\sys{K-Q}+1}^{\sys{K}}ua_{\mathcal{M},u}&=&\overline{r}\sys{N}.\label{constraint:s}
\end{IEEEeqnarray}

For any  $\mathcal{Q}\in\mathbf{\Omega}_{\sys{K}}^{\sys{Q}}$  and any
$l\in[\sys{Q}]$, define
\begin{IEEEeqnarray}{c}
b_{\mathcal{M},l}^{\mathcal{Q}}\triangleq\sum_{\mathcal{I}\subseteq\mathcal{Q}:|\mathcal{I}|=l}\left|\left(\mathop\cap\limits_{k\in\mathcal{I}}\mathcal{M}_k\right)\bigg\backslash\left(\mathop\cup\limits_{\bar{k}\in\mathcal{Q}\backslash\mathcal{I}}\mathcal{M}_{\bar k}\right)\right|,\notag
\end{IEEEeqnarray}
i.e., $b_{\mathcal{M},l}^{\mathcal{Q}}$ is the number of files stored
exactly $l$ times in the nodes of \mw{set $\mathcal{Q}$}. Since any file that is
stored $u$ times across all the nodes has $l$ occurrences in exactly
$C_u^l\cdot C_{\sys{K}-u}^{\sys{Q}-l}$ subsets \mw{
$\mathcal{Q}$ of size} $\mathfrak{\sys{Q}}$, i.e.,
\begin{IEEEeqnarray}{rCl}
&&\sum_{\mathcal{Q}\in\mathbf{\Omega}_{\sys{K}}^{\sys{Q}}}\mathbbm{1}(w_n~\textnormal{is stored at exactly $l$ nodes of $\mathcal{Q}$})\notag\\
&=&\sum_{u=\max\{l,\sys{K}-\sys{Q}+1\}}^{\sys{K}-\sys{Q}+l}\mathbbm{1}(w_n~\textnormal{is stored at exactly $u$ nodes of $\mathcal{K}$})\cdot C_{u}^l\cdot C_{\sys{K}-u}^{\sys{Q}-l},\quad \forall\,n\in[\sys{N}].\notag
\end{IEEEeqnarray}
Summing over $n\in[\sys{N}]$, we obtain
\begin{IEEEeqnarray}{c}
\sum_{\mathcal{Q}\in\mathbf{\Omega}_\sys{K}^{\sys{Q}}}b_{\mathcal{M},l}^{\mathcal{Q}}=\sum_{u=\max\{l,\sys{K-Q}+1\}}^{\sys{K-Q}+l}a_{\mathcal{M},u}\cdot C_u^l\cdot C_{\sys{K}-u}^{\sys{Q}-l}.\label{eqn:sumequality}
\end{IEEEeqnarray}

\mw{We can now apply the result in \cite[Lemma 1]{Li2018Tradeoff} of the system without stragglers, to  lower bound
the  communication load for any realization of the active set $\mathbf{Q}=\mathcal{Q}$:}
\begin{IEEEeqnarray}{c}
\mw{\frac{\sum_{k\in\mathcal{Q}}\,|\,X_k^\mathcal{Q}|}{\sys{NDV}}}\geq\sum_{l=1}^{\sys{Q}}\frac{b_{\mathcal{M},l}^{\mathcal{Q}}}{\sys{N}}\frac{\sys{Q}-l}{\sys{Q}l}.\notag
\end{IEEEeqnarray}

\mw{The average communication load over the random realization of the active set $\mathbf{Q}$ is then obtained as:}
\begin{IEEEeqnarray}{rCl}
\overline{L}&=& \mathbf{E}_{\mathbf{Q}}\left[\frac{\sum_{k\in\mathbf{Q}}\,|\,X_k^\mathbf{Q}|}{\sys{NDV}}\right]\notag\\
&=&\sum_{\mathcal{Q}\in\mathbf{\Omega}_{\sys{K}}^{\sys{Q}}} \frac{\sum_{k\in\mathcal{Q}}\,|\,X_k^\mathcal{Q}|}{\sys{NDV}}\,\cdot\textnormal{Pr}\{\mathbf{Q}=\mathcal{Q}\}\IEEEeqnarraynumspace\notag\\
&\geq&\frac{1}{C_{\sys{K}}^\sys{Q}}\sum_{\mathcal{Q}\in\mathbf{\Omega}_{\sys{K}}^{\sys{Q}}}\sum_{l=1}^{\sys{Q}}\frac{b_{\mathcal{M},l}^{\mathcal{Q}}}{\sys{N}}\frac{\sys{Q}-l}{\sys{Q}l}\notag\\
&=&\frac{1}{C_\sys{K}^\sys{Q}}\sum_{l=1}^\sys{Q}\left(\sum_{\mathcal{Q}\in\mathbf{\Omega}_{\sys{K}}^{\sys{Q}}}\frac{b_{\mathcal{M},l}^{\mathcal{Q}}}{\sys{N}}\right)\frac{\sys{Q}-l}{\sys{Q}l}\notag\\
&\overset{(a)}{=}& \frac{1}{C_\sys{K}^\sys{Q}}\sum_{l=1}^{\sys{Q}}\left(\sum_{u=\max\{l,\sys{K}-\sys{Q}+1\}}^{\sys{K}-\sys{Q}+l}\frac{a_{\mathcal{M},u}}{\sys{N}} C_u^l C_{\sys{K}-u}^{\sys{Q}-l}\right)\frac{\sys{Q}-l}{\sys{Q}l}\IEEEeqnarraynumspace\label{eq:double_sum}\\
&\overset{(b)}{=}& \frac{1}{C_\sys{K}^\sys{Q}} \sum_{u=\sys{K}-\sys{Q}+1}^\sys{K}\frac{a_{\mathcal{M},u}}{\sys{N}}\sum_{l=u+\sys{Q}-\sys{K}}^{\min\{u,\sys{Q}\}}{C_u^l C_{\sys{K}-u}^{\sys{Q}-l}} \frac{\sys{Q}-l}{\sys{Q}l}\notag\\
&\overset{(c)}{=}&\frac{1}{C_\sys{K}^\sys{Q}}\sum_{u=\sys{K}-\sys{Q}+1}^\sys{K}\frac{a_{\mathcal{M},u}}{\sys{N}}\cdot Z_\sys{K}^\sys{Q}(u)\notag\\
&\overset{(d)}{\geq}&\frac{1}{C_\sys{K}^\sys{Q}}\cdot Z_{\sys{K}}^\sys{Q}\left(\sum_{u=\sys{K}-\sys{Q}+1}^\sys{K}\frac{ua_{\mathcal{M},u}}{\sys{N}}\right)\label{convex:inequality}\\
&\overset{(e)}{=}&\frac{Z_\sys{K}^\sys{Q}(\overline{r})}{C_\sys{K}^\sys{Q}}\notag\\
&\overset{(f)}{\geq}&\frac{Z_\sys{K}^\sys{Q}(r+\epsilon)}{C_\sys{K}^\sys{Q}},\notag
\end{IEEEeqnarray}
where $(a)$ follows from \eqref{eqn:sumequality}; \mw{$(b)$ holds because the inner summation in \eqref{eq:double_sum} only includes summation indices $u\in[\sys{K}-\sys{Q}:\sys{K}]$ and it includes  the summation index $u\in\{ \sys{K}-\sys{Q}+1, \ldots, \sys{K}\}$ if, and only if,  the outer summation index  $l$ satisfies $l \leq u$ and $l \geq u+\sys{Q}-\sys{K}$;
}$(c)$ follows from \eqref{eqn:ZKQ}; $(d)$ follows from Lemma \ref{lemma:Z}; $(e)$ follows from \eqref{constraint:s}; and $(f)$ follows from the fact $\overline{r}\leq r+\epsilon$.
Since $\epsilon$ can be \sheng{arbitrarily} close to zero, we conclude
\begin{IEEEeqnarray}{c}
L\geq\frac{Z_\sys{K}^\sys{Q}(r)}{C_\sys{K}^\sys{Q}}.\notag
\end{IEEEeqnarray}
In particular, when $r\in[\sys{K}-\sys{Q}+1:\sys{K}]$, by \eqref{eqn:ZKQ},
\begin{IEEEeqnarray}{rCl}
L&\geq& \sum_{l=r+\sys{Q}-\sys{K}}^{\min\{r,\sys{Q}\}}\frac{\sys{Q}-l}{\sys{Q}l}\frac{C_{r}^l C_{\sys{K}-r}^{\sys{Q}-l}}{C_\sys{K}^\sys{Q}}\notag\\
&\overset{(a)}{=}& \sum_{l=r+\sys{Q}-\sys{K}}^{\min\{r,\sys{Q}-1\}}\frac{\sys{Q}-l}{\sys{Q}l}\frac{C_{r}^l C_{\sys{K}-r}^{\sys{Q}-l}}{C_\sys{K}^\sys{Q}}\notag\\
&=&\sum_{l=r+\mathsf{Q-K}}^{\min\{r,\mathsf{Q}-1\}}\frac{\mathsf{Q}-l}{\mathsf{Q}l}\cdot\frac{\frac{\mathsf{Q}!}{l!(\mathsf{Q}-l)!}\cdot\frac{(\mathsf{K-Q})!}{(r-l)!(\mathsf{K-Q}-r+l)!}}{\frac{\mathsf{K}!}{r!(\mathsf{K}-r)!}}\notag\\
&=&\left(1-\frac{r}{\mathsf{K}}\right)\cdot\sum_{l=r+\mathsf{Q-K}}^{\min\{r,\mathsf{Q}-1\}}\frac{1}{l}\cdot\frac{\frac{r!}{l!(r-l)!}\cdot\frac{(\mathsf{K}-r-1)!}{(\mathsf{Q}-l-1)!(\mathsf{K}-r-\mathsf{Q}+l)!}}{\frac{(\mathsf{K}-1)!}{(\mathsf{Q}-1)!(\mathsf{K-Q})!}}\notag\\
&\overset{(b)}{=}&\left(1-\frac{r}{\sys{K}}\right)\sum_{l=r+\sys{Q}-\sys{K}}^{\min\{r,\sys{Q}-1\}}\frac{1}{l}\frac{C_{r}^l C_{\sys{K}-r-1}^{\sys{Q}-l-1}}{C_{\sys{K}-1}^{\sys{Q}-1}},\notag
\end{IEEEeqnarray}
where $(a)$ holds since for $l=\sys{Q}$, the term in the summation is
zero. This establishes the desired converse proof.



\section{Optimality of File Complexity (Proof of Theorem \ref{thm:optimalF})}\label{sec:proofofthmF}
In order to prove Theorem \ref{thm:optimalF}, we need to first derive several lemmas.

\subsection{Preliminaries}

\begin{lemma}\label{cor:rtime} If a coded computing scheme achieves
  the fundamental SC tradeoff pair $\left(r,L_{\sys{K},\sys{Q}}^*(r)\right)$ for any integer
  $r\in[\sys{K}-\sys{Q}+1:\sys{K}]$, then each file is stored exactly
  $r$ times across the nodes.
\end{lemma}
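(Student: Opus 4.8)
The plan is to revisit the chain of inequalities \eqref{convex:inequality} from the converse proof of Theorem~\ref{thm:limits} and to argue that, for a scheme that attains $\big(r,L_{\sys{K},\sys{Q}}^*(r)\big)$ with $r$ an integer, every inequality in that chain is tight; tightness of the convexity step~$(d)$ will then force the storage profile $\{a_{\mathcal M,u}\}$ defined in \eqref{def:aMs} to be concentrated at $u=r$, which is exactly the statement that every file is stored $r$ times.

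First I would pin down the storage load. For any storage design $\mathcal M$ the quantities $a_{\mathcal M,u}$ satisfy \eqref{constraint:s}, and the converse shows $\overline L\ge Z_\sys{K}^\sys{Q}(\overline r)/C_\sys{K}^\sys{Q}$, whereas $L_{\sys{K},\sys{Q}}^*(r)=Z_\sys{K}^\sys{Q}(r)/C_\sys{K}^\sys{Q}$ by \eqref{eqn:ZKQ} and \eqref{optimal:LQ}. Since $Z_\sys{K}^\sys{Q}$ is strictly decreasing by Lemma~\ref{lemma:Z}, a scheme with storage load $\overline r\le r$ and communication load $L_{\sys{K},\sys{Q}}^*(r)$ must have $\overline r=r$; thus $\sum_u u\,a_{\mathcal M,u}=r\sys{N}$ with $r$ an integer, and every inequality in \eqref{convex:inequality}, in particular the convexity step~$(d)$, holds with equality.

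Next I would use the strict convexity of the integer sequence $\{Z_\sys{K}^\sys{Q}(u)\}_{u=\sys{K}-\sys{Q}+1}^{\sys{K}}$ from Lemma~\ref{lemma:Z}. The two boundary cases are immediate: if $r=\sys{K}-\sys{Q}+1$, then \eqref{eq:tn} forces $a_{\mathcal M,u}=0$ for all $u<r$, and combined with $\sum_u a_{\mathcal M,u}=\sys{N}$ and $\sum_u u\,a_{\mathcal M,u}=r\sys{N}$ this yields $a_{\mathcal M,r}=\sys{N}$; the case $r=\sys{K}$ is symmetric. For $r\in[\sys{K}-\sys{Q}+2:\sys{K}-1]$ I would pick a real number $m$ strictly between the left slope $Z_\sys{K}^\sys{Q}(r)-Z_\sys{K}^\sys{Q}(r-1)$ and the right slope $Z_\sys{K}^\sys{Q}(r+1)-Z_\sys{K}^\sys{Q}(r)$ — which exists precisely because the sequence is strictly convex — and check, by telescoping the monotone slope increments, that
\begin{IEEEeqnarray}{c}
Z_\sys{K}^\sys{Q}(u)\;\ge\;Z_\sys{K}^\sys{Q}(r)+m\,(u-r),\qquad u\in[\sys{K}-\sys{Q}+1:\sys{K}],\notag
\end{IEEEeqnarray}
with equality if and only if $u=r$. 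Lower bounding the right-hand side of step~$(c)$ in \eqref{convex:inequality} with this affine minorant and using $\overline r=r$ gives $\overline L\ge\frac{1}{C_\sys{K}^\sys{Q}}\sum_u\frac{a_{\mathcal M,u}}{\sys{N}}\big(Z_\sys{K}^\sys{Q}(r)+m(u-r)\big)=\frac{Z_\sys{K}^\sys{Q}(r)}{C_\sys{K}^\sys{Q}}=L_{\sys{K},\sys{Q}}^*(r)$. Hence equality throughout forces $\sum_u\frac{a_{\mathcal M,u}}{\sys{N}}\big(Z_\sys{K}^\sys{Q}(u)-Z_\sys{K}^\sys{Q}(r)-m(u-r)\big)=0$; every summand is nonnegative and is strictly positive for $u\neq r$, so $a_{\mathcal M,u}=0$ for all $u\neq r$ and $a_{\mathcal M,r}=\sys{N}$, i.e.\ each file is stored exactly $r$ times.

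The only non-routine point is making the Jensen equality condition rigorous for the piecewise-linear $Z_\sys{K}^\sys{Q}$: one has to rule out a storage profile split over two adjacent integers that straddle $r$ (on which $Z_\sys{K}^\sys{Q}$ is affine), and this is exactly where the integrality of $\overline r=r$ together with strict convexity of the sequence at the breakpoint $u=r$ enters. The affine-minorant argument above handles this cleanly; equivalently one could invoke the standard equality characterization of Jensen's inequality for strictly convex sequences.
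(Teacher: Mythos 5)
Your proof is correct and takes essentially the same approach as the paper: establish that every step in the chain \eqref{convex:inequality} is tight and then use strict convexity of the integer sequence $\{Z_\sys{K}^\sys{Q}(u)\}$ to conclude the storage profile $\{a_{\mathcal{M},u}\}$ is concentrated at $u=r$. Your version is somewhat more careful than the paper's: the paper simply asserts that strict convexity of the sequence makes equality in step~$(d)$ equivalent to concentration at $r$, while you explicitly justify $\overline r = r$ via the strict monotonicity of $Z_\sys{K}^\sys{Q}$, and you spell out the strict-convexity equality condition via an affine minorant, which handles the piecewise-linearity of $Z_\sys{K}^\sys{Q}(\cdot)$ on the reals (the potential loophole of a split over two adjacent integers straddling $r$) — a subtlety the paper leaves implicit.
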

\begin{IEEEproof} According to Lemma~\ref{lemma:Z}, the sequence
  $\big\{Z_\sys{K}^\sys{Q}(u)\big\}_{u=\sys{K}-\sys{Q}+1}^\sys{K}$ is
  strictly convex.  Thus for the integer $r =
  \sum_{u=\sys{K}-\sys{Q}+1}^\sys{K}\frac{ua_{\mathcal{M},u}}{\sys{N}}$,
  the equality in \eqref{convex:inequality} holds if, and only if,
\begin{IEEEeqnarray}{rCl}
\frac{a_{\mathcal{M},r}}{\sys{N}}&=&1,\notag\\
\frac{a_{\mathcal{M},u}}{\sys{N}}&=&0,\quad u\in [\sys{K}-\sys{Q}+1:\sys{K}]\backslash\{r\}.\notag
\end{IEEEeqnarray}
Therefore, by definition of $a_{\mathcal{M},u}$ in \eqref{def:aMs}, this indicates that each file is stored exactly $r$ times across the system.
\end{IEEEproof}

\begin{lemma}\label{lemma:lowerbound:F}
In a $g$-regular $\sys{(K,F,T,S)}$ PDA, s.t., $\sys{K}\geq
g\geq 2$, if there are exactly $g-1$ $``*"$s in each row,
then $\sys{F}\geq C_{\sys{K}}^{g-1}$.
\end{lemma}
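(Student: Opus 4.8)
The plan is to show that a $g$-regular PDA in which every row contains exactly $g-1$ stars is, up to relabeling, forced to be (a sub-block of) the MAN-PDA $\bm P_{g-1}$, so that $\sys{F} \geq C_\sys{K}^{g-1}$. First I would fix a row $j$ and let $\mathcal{S}_j \subseteq [\sys{K}]$ be the set of the exactly $g-1$ columns in which row $j$ has a $``*"$. The key structural claim is that the map $j \mapsto \mathcal{S}_j$ is injective and that distinct rows give distinct $(g-1)$-subsets; this already gives $\sys{F} \le C_\sys{K}^{g-1}$ is \emph{not} what we want — rather, I would argue the reverse inclusion, namely that \emph{every} $(g-1)$-subset of $[\sys{K}]$ must appear as some $\mathcal{S}_j$, which yields $\sys{F} \ge C_\sys{K}^{g-1}$.

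The heart of the argument is a counting/double-counting step on occurrences of ordinary symbols. Since the PDA is $g$-regular, each ordinary symbol $s$ occurs exactly $g$ times, say in cells $(j_1,k_1),\dots,(j_g,k_g)$ with the $k_i$ pairwise distinct (PDA property~a) and, by PDA property~b), for every pair $i \ne i'$ the cell $(j_i, k_{i'})$ is a $``*"$. Thus row $j_i$ contains $``*"$s in all the columns $\{k_1,\dots,k_g\}\setminus\{k_i\}$, which is a set of size $g-1$; since row $j_i$ has \emph{exactly} $g-1$ stars, this forces $\mathcal{S}_{j_i} = \{k_1,\dots,k_g\}\setminus\{k_i\}$. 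Hence the $g$ rows carrying symbol $s$ have star-sets equal to the $g$ distinct $(g-1)$-subsets of a common $g$-set $\mathcal{G}_s \triangleq \{k_1,\dots,k_g\}$, and in particular the ordinary entry in row $j_i$, column $k_i$ sits exactly where $\mathcal{S}_{j_i}$ misses $k_i$ inside $\mathcal{G}_s$. Conversely, I would show that for any row $j$ and any column $k \notin \mathcal{S}_j$, the ordinary symbol $a_{j,k}$ determines the $g$-set $\mathcal{S}_j \cup \{k\}$, and that all $g$ rows whose star-set is a $(g-1)$-subset of this $g$-set carry that same symbol in the complementary column — i.e. the PDA is exactly $\bm P_{g-1}$ with column set $[\sys{K}]$ and rows indexed by a \emph{subset} of the $(g-1)$-subsets of $[\sys{K}]$.

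To finish I need every $(g-1)$-subset to actually occur as a row. Suppose some $(g-1)$-subset $\mathcal{T}$ does not appear among the $\mathcal{S}_j$'s. Pick any row $j$ with $\mathcal{S}_j$ sharing $g-2$ elements with $\mathcal{T}$ (so $\mathcal{S}_j = (\mathcal{T}\setminus\{a\})\cup\{b\}$ for some $a\in\mathcal{T}$, $b\notin\mathcal{T}$), and consider the column $a \notin \mathcal{S}_j$: the ordinary symbol $a_{j,a}=s$ then has $\mathcal{G}_s = \mathcal{S}_j\cup\{a\} = \mathcal{T}\cup\{b\}$, and by the structural claim each of the $g$ rows carrying $s$ has star-set a $(g-1)$-subset of $\mathcal{G}_s$; one of these subsets is exactly $\mathcal{T}$ itself, contradicting our assumption — \emph{provided} such a row $j$ exists. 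Existence of $j$ follows because ordinary symbols occur (each symbol is used, and $\sys{K}\ge g$ guarantees the column $a\notin\mathcal{S}_j$ exists); I would spell this out by starting from \emph{any} existing row and walking along shared-$(g-2)$-set neighbours, since the ``Johnson graph'' on $(g-1)$-subsets of $[\sys{K}]$ is connected. This forces all $C_\sys{K}^{g-1}$ subsets to occur, hence $\sys{F}\ge C_\sys{K}^{g-1}$.

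The main obstacle I anticipate is the last connectivity/propagation step: making rigorous that the presence of \emph{one} row forces, via the regularity and the exact $(g-1)$-star condition, the presence of \emph{all} $(g-1)$-subsets as rows. The subtle point is that a priori the PDA could consist of several ``disconnected'' MAN-like blocks on different $g$-sets; one must use that the same $\sys{K}$ columns are shared and that the symbol occurring in a given cell ties together a full $g$-set of columns, to rule this out. I expect this to be handled cleanly by the Johnson-graph connectivity argument sketched above, but it is where the care is needed.
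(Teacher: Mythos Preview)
Your proposal is correct. The paper itself gives no argument here: its entire proof of Lemma~\ref{lemma:lowerbound:F} is the sentence ``the conclusion follows directly from \cite[Lemma~2]{Yan2017PDA},'' so there is no in-paper proof to compare against.

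Your direct combinatorial argument is sound and is in fact the standard way this result is proved. The essential steps are exactly as you outline: (i) for any ordinary symbol $s$ with its $g$ occurrences in columns $\mathcal{G}_s=\{k_1,\dots,k_g\}$, PDA property~(b) together with the hypothesis that each row has \emph{exactly} $g-1$ stars forces the star-set of the $i$-th row carrying $s$ to equal $\mathcal{G}_s\setminus\{k_i\}$; (ii) consequently, if a row with star-set $\mathcal{S}$ exists, then for every $k\notin\mathcal{S}$ all $(g-1)$-subsets of $\mathcal{S}\cup\{k\}$ also occur as star-sets of rows, i.e.\ the family of realized star-sets is closed under adjacency in the Johnson graph $J(\sys{K},g-1)$; (iii) since $J(\sys{K},g-1)$ is connected for $2\le g\le\sys{K}$ and at least one row exists, every $(g-1)$-subset of $[\sys{K}]$ occurs, giving $\sys{F}\ge C_\sys{K}^{g-1}$. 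The ``obstacle'' you flag about possibly disconnected MAN-like blocks is not a real difficulty: once you phrase step~(ii) as closure under Johnson adjacency, connectivity of the Johnson graph disposes of it in one line, and the contradiction-with-a-chosen-neighbor detour can be dropped.
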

\begin{proof}
With Definition \ref{def:PDA} (the definition of PDAs), the conclusion follows directly from
\cite[Lemma 2]{Yan2017PDA}.
\end{proof}

For each $u\in[\sys{K}]$, define
\begin{IEEEeqnarray}{c}
U_\sys{K}^\sys{Q}(u)\triangleq C_{\sys{K}-u}^{\sys{Q}-1}+\sum_{l=\max\{1,u-\sys{K+Q}-1\}}^{\min\{u,\sys{Q}\}-1}\frac{C_{u-1}^l\cdot C_{\sys{K}-u}^{\sys{Q}-l-1}}{l}.\label{def:UKQ}
\end{IEEEeqnarray}

\begin{lemma}\label{lem:UKQ}
When $\sys{Q}\geq 3$, the subsequence $\{U_\sys{K}^\sys{Q}(u)\}_{u=2}^\sys{K}$  strictly decreases with $u\in[2:\sys{K}]$.
\end{lemma}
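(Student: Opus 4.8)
The plan is to show that the difference $U_{\sys K}^{\sys Q}(u) - U_{\sys K}^{\sys Q}(u+1)$ is strictly positive for every $u\in[2:\sys K-1]$, by comparing the two sums term by term. First I would rewrite the expression \eqref{def:UKQ} for $U_{\sys K}^{\sys Q}(u)$ in a form that absorbs the leading term $C_{\sys K-u}^{\sys Q-1}$ into the sum: since $C_{\sys K-u}^{\sys Q-1}$ can be written with the index $l=0$ convention absent the $1/l$ factor, it is cleaner to note that $L_{\bm P_u}$ from Corollary~\ref{corollary:MANPDA} is, up to the factor $(1-u/\sys K)$, essentially $\tfrac1{C_{\sys K-1}^{\sys Q-1}}$ times $\sum_{l}\tfrac1l \tfrac{C_u^l C_{\sys K-u-1}^{\sys Q-l-1}}{1}$, and relate $U_{\sys K}^{\sys Q}(u)$ to the quantity $Z_{\sys K}^{\sys Q}(u)$ already shown in Lemma~\ref{lemma:Z} to be strictly decreasing. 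Concretely, I expect that a Pascal-type identity $C_{u-1}^l C_{\sys K-u}^{\sys Q-l-1} + \text{(shift)}$ lets one express $C_{\sys K-1}^{\sys Q-1}\,U_{\sys K}^{\sys Q}(u)$ as a positive combination of the $C_{\sys K}^{\sys Q}\,Z_{\sys K}^{\sys Q}(v)/(\sys Q-\cdot)$ type terms, so that monotonicity of $U$ would follow from monotonicity of $Z$. This is the route I would try first because Lemma~\ref{lemma:Z} is already available.

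If that reduction is not clean, the fallback is a direct combinatorial argument. Fix $u$ and write $\Delta(u)\triangleq C_{\sys K-1}^{\sys Q-1}\bigl(U_{\sys K}^{\sys Q}(u)-U_{\sys K}^{\sys Q}(u+1)\bigr)$. Using $C_{\sys K-u}^{\sys Q-1}-C_{\sys K-u-1}^{\sys Q-1}=C_{\sys K-u-1}^{\sys Q-2}$ for the leading terms, and the Pascal identities $C_{u-1}^l=C_{u-2}^l+C_{u-2}^{l-1}$ together with $C_{\sys K-u}^{\sys Q-l-1}=C_{\sys K-u-1}^{\sys Q-l-1}+C_{\sys K-u-1}^{\sys Q-l-2}$ applied to the summand $\tfrac1l C_{u-1}^l C_{\sys K-u}^{\sys Q-l-1}$ of $U_{\sys K}^{\sys Q}(u)$, I would re-index the sum in $U_{\sys K}^{\sys Q}(u+1)$, namely $\tfrac1l C_u^l C_{\sys K-u-1}^{\sys Q-l-1}$, so that most contributions cancel in pairs against the expanded summand of $U_{\sys K}^{\sys Q}(u)$. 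The surviving terms should all carry a positive sign (they correspond to moving one node from the "stored" side to the "not stored" side of a file, which in the caching/computing interpretation always shrinks the coded-delivery savings), and at least one of them is strictly positive for $\sys Q\geq 3$, giving $\Delta(u)>0$. I would also need to check the boundary behaviour of the summation limits $\max\{1,u-\sys K+\sys Q-1\}$ and $\min\{u,\sys Q\}-1$: when $u$ increases by one either limit can shift, and the shifted-out term must be accounted for separately — but since $C_{u-1}^0=1$ and $C_{\sys K-u}^{\sys Q-l-1}$ vanishes outside the valid range, these edge terms are either zero or contribute with the correct sign.

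The step I expect to be the main obstacle is precisely this bookkeeping of the summation limits combined with the re-indexing: the ranges $[\max\{1,u-\sys K+\sys Q-1\}:\min\{u,\sys Q\}-1]$ for index $u$ versus $[\max\{1,u+1-\sys K+\sys Q-1\}:\min\{u+1,\sys Q\}-1]$ for index $u+1$ do not align, so the term-by-term cancellation requires carefully splitting into the regimes $u+\sys Q\leq\sys K$, $u<\sys Q$, etc. I would handle this by treating $U_{\sys K}^{\sys Q}(u)$ with the understanding that $C_a^b=0$ whenever $b<0$ or $b>a$ (as the paper's notation already stipulates), which lets me sum $l$ over the full range $[1:\sys Q-1]$ in every case and thereby make the limits uniform; the hypothesis $\sys Q\geq 3$ ensures this range is nonempty and that the one strictly-positive surviving term (e.g. the $l=1$ contribution $C_{\sys K-u-1}^{\sys Q-2}>0$ coming from the leading-term difference, which needs $\sys Q\geq 3$ for $C_{\sys K-u-1}^{\sys Q-2}$ to be the "second" binomial rather than degenerate) does not get cancelled. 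Finally, strict decrease of the whole subsequence follows since $\Delta(u)>0$ for each $u\in[2:\sys K-1]$.
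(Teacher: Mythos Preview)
Your fallback approach is essentially what the paper does: compute $U_{\sys K}^{\sys Q}(u+1)-U_{\sys K}^{\sys Q}(u)$ directly, expand via the Pascal identities $C_u^l=C_{u-1}^l+C_{u-1}^{l-1}$ and $C_{\sys K-u}^{\sys Q-l-1}=C_{\sys K-u-1}^{\sys Q-l-1}+C_{\sys K-u-1}^{\sys Q-l-2}$, cancel, and re-index. The paper also uses the convention $C_a^b=0$ out of range to make the limits uniform, exactly as you suggest.

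One concrete correction, though: your guess about \emph{which} term survives is wrong. The leading-term difference $C_{\sys K-u}^{\sys Q-1}-C_{\sys K-u-1}^{\sys Q-1}=C_{\sys K-u-1}^{\sys Q-2}$ does \emph{not} survive; after the Pascal expansion and one index shift, the $l=1$ contribution of $\sum_l \tfrac1l C_{u-1}^{l-1}C_{\sys K-u-1}^{\sys Q-l-1}$ is exactly $C_{\sys K-u-1}^{\sys Q-2}$ and cancels it. What remains is
\[
U_{\sys K}^{\sys Q}(u+1)-U_{\sys K}^{\sys Q}(u)
= -\sum_{l=\max\{2,\,u-\sys K+\sys Q\}}^{\min\{u,\sys Q-1\}}\frac{C_{u-1}^{l-1}\,C_{\sys K-u-1}^{\sys Q-l-1}}{l(l-1)},
\]
and strict negativity for $u\in[2:\sys K-1]$ follows because the range $[\max\{2,u-\sys K+\sys Q\}:\min\{u,\sys Q-1\}]$ is nonempty precisely when $u\geq 2$ and $\sys Q\geq 3$. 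So the role of the hypothesis $\sys Q\geq 3$ is to guarantee $l=2$ (or higher) sits in the range, not to keep $C_{\sys K-u-1}^{\sys Q-2}$ from degenerating.

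Your first route, reducing to Lemma~\ref{lemma:Z}, is not used in the paper and does not look straightforward: $Z_{\sys K}^{\sys Q}$ involves $C_u^l C_{\sys K-u}^{\sys Q-l}$ with weight $\tfrac{\sys Q-l}{\sys Q l}$, while $U_{\sys K}^{\sys Q}$ involves $C_{u-1}^l C_{\sys K-u}^{\sys Q-l-1}$ with weight $\tfrac1l$, and a clean positive-combination identity between them is not apparent. The direct computation is short enough that the detour is unnecessary.
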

\begin{IEEEproof}
For each $u \in[2:\sys{K}-1]$, by \eqref{def:UKQ},
\begin{IEEEeqnarray}{rCl}
&&U_\sys{K}^\sys{Q}(u+1)-U_\sys{K}^\sys{Q}(u)\notag\\
&=&-C_{\sys{K}-u-1}^{\sys{Q}-2}+\sum_{l=\max\{1,u-\sys{K}+\sys{Q}\}}^{\min\{u,\sys{Q}-1\}}\frac{C_u^l\cdot C_{\sys{K}-u-1}^{\sys{Q}-l-1}}{l}-\sum_{l=\max\{1,u-\sys{K}+\sys{Q}-1\}}^{\min\{u,\sys{Q}\}-1}\frac{C_{u-1}^l\cdot C_{\sys{K}-u}^{\sys{Q}-l-1}}{l}\notag\\
&\overset{(a)}{=}&-C_{\sys{K}-u-1}^{\sys{Q}-2}+\sum_{l=\max\{1,u-\sys{K}+\sys{Q}\}}^{\min\{u,\sys{Q}-1\}}\frac{(C_{u-1}^l+C_{u-1}^{l-1})\cdot C_{\sys{K}-u-1}^{\sys{Q}-l-1}}{l}\notag\\
&&\qquad\qquad-\sum_{l=\max\{1,u-\sys{K}+\sys{Q}-1\}}^{\min\{u,\sys{Q}\}-1}\frac{C_{u-1}^l\cdot (C_{\sys{K}-u-1}^{\sys{Q}-l-1}+C_{\sys{K}-u-1}^{\sys{Q}-l-2})}{l}\label{sum:CKg}\\
&\overset{(b)}{=}&-C_{\sys{K}-u-1}^{\sys{Q}-2}+\sum_{l=\max\{1,u-\sys{K}+\sys{Q}\}}^{\min\{u,\sys{Q}\}-1}\frac{C_{u-1}^l\cdot C_{\sys{K}-u-1}^{\sys{Q}-l-1}}{l}+\sum_{l=\max\{1,u-\sys{K}+\sys{Q}\}}^{\min\{u,\sys{Q}-1\}}\frac{C_{u-1}^{l-1}\cdot C_{\sys{K}-u-1}^{\sys{Q}-l-1}}{l}\notag\\
&&\qquad\qquad-\sum_{l=\max\{1,u-\sys{K}+\sys{Q}\}}^{\min\{u,\sys{Q}\}-1}\frac{C_{u-1}^l\cdot C_{\sys{K}-u-1}^{\sys{Q}-l-1}}{l}-\sum_{l=\max\{1,u-\sys{K}+\sys{Q}-1\}}^{\min\{u-1,\sys{Q}-2\}}\frac{C_{u-1}^l\cdot C_{\sys{K}-u-1}^{\sys{Q}-l-2}}{l}\notag\\
&=&-C_{\sys{K}-u-1}^{\sys{Q}-2}+\sum_{l=\max\{1,u-\sys{K}+\sys{Q}\}}^{\min\{u,\sys{Q}-1\}}\frac{C_{u-1}^{l-1}\cdot C_{\sys{K}-u-1}^{\sys{Q}-l-1}}{l}-\sum_{l=\max\{1,u-\sys{K}+\sys{Q}-1\}}^{\min\{u-1,\sys{Q}-2\}}\frac{C_{u-1}^l\cdot C_{\sys{K}-u-1}^{\sys{Q}-l-2}}{l}\notag\\
&\overset{(c)}{=}&-C_{\sys{K}-u-1}^{\sys{Q}-2}+\sum_{l=\max\{1,u-\sys{K}+\sys{Q}\}}^{\min\{u,\sys{Q}-1\}}\frac{C_{u-1}^{l-1}\cdot C_{\sys{K}-u-1}^{\sys{Q}-l-1}}{l}-\sum_{l=\max\{2,u-\sys{K}+\sys{Q}\}}^{\min\{u,\sys{Q}-1\}}\frac{C_{u-1}^{l-1}\cdot C_{\sys{K}-u-1}^{\sys{Q}-l-1}}{l-1}\notag\\
&=&-\sum_{l=\max\{2,u-\sys{K}+\sys{Q}\}}^{\min\{u,\sys{Q}-1\}}\frac{C_{u-1}^{l-1}\cdot C_{\sys{K}-u-1}^{\sys{Q}-l-1}}{l(l-1)}\label{eq:negative}\\
&\leq&0,\notag
\end{IEEEeqnarray}
where in $(a)$, we used \eqref{eqn:combination}; in $(b)$, we separate the two summations in \eqref{sum:CKg}  and eliminate the
indices of zero terms in the separated summations; and in $(c)$, we used
the variable change $l'= l + 1$. Moreover, if $u\geq 2$ and $\sys{Q}\geq
3$, from \eqref{eq:negative},
$U_\sys{K}^\sys{Q}(u+1)-U_\sys{K}^\sys{Q}(u)\leq -\frac{C_{u-1}^l
C_{\sys{K}-u-1}^{\sys{Q}-2}}{2}<0$,
i.e., $U_\sys{K}^\sys{Q}(u)$ is strictly decreasing when $u\geq 2$.
\end{IEEEproof}

\subsection{Proof of Theorem \ref{thm:optimalF}}
Define the set
\begin{IEEEeqnarray}{c}
 \mathcal{E}\triangleq \{(\sys{Q},r):\sys{Q}\in[\sys{K}], r\in[\sys{K}-\sys{Q}+1:\sys{K}]\},\notag
\end{IEEEeqnarray}
and partition  it into three subsets
\begin{IEEEeqnarray}{rCl}
\mathcal{E}_1&\triangleq&\{(\sys{Q},r):\sys{Q}\in[\sys{K}],r=\sys{K}\},\notag\\
\mathcal{E}_2&\triangleq&\{(\sys{Q},r):\sys{Q}\in\{2,\sys{K}\},r=\sys{K}-\sys{Q}+1\},\notag\\
\mathcal{E}_3&\triangleq&\{(\sys{Q},r):\sys{Q}\in[3:\sys{K}],r\in[\max\{\sys{K}-\sys{Q}+1,2\}:\sys{K}-1]\}.\notag
\end{IEEEeqnarray}

 Notice that, if $(\sys{Q},r)\in\mathcal{E}_1$, the bound $\sys{F}\geq C_\sys{K}^{\sys{K}}=1$ is trivial. The case $(\sys{Q},r)\in\mathcal{E}_2$ is excluded.  Therefore,
in the rest of the proof, we assume $(\sys{Q},r)\in\mathcal{E}_3$, i.e., $\sys{Q}\in[3:\sys{K}]$ and $r\in[\max\{\sys{K}-\sys{Q}+1,2\}:\sys{K}-1]$.

 Let $\bm{A}$ be a $\sys{(K,F,T,S)}$ Comp-PDA that achieves the optimal
 tradeoff point $(r, L_{\sys{K},\sys{Q}}^*(r))$.
Recall that each row in a Comp-PDA is associated to a file batch, and a
 $``*''$ symbole in that row and column $k$ indicates that the file batch is
 stored at node $k$. According to Lemma~\ref{cor:rtime}, each file is
 stored exactly $r$ times across the nodes, i.e., there are exactly $r$
 $``*''$ symbols in each row of $\bm A$.

 Let $\theta_{g'}$ be the fraction of ordinary entries  occurring  $g'$ times
 in the Comp-PDA $\bm A$, for all $g'\in[\sys{K}]$. Since there are $r$
 $``*''$ symbols in each row,
 from the PDA properties a) and b) in Definition \ref{def:PDA},  any ordinary symbol cannot appear more than $r+1$ times, i.e.,
 \begin{IEEEeqnarray}{c}
 \sys{\theta}_{g'}=0,\quad \forall\; g'\in[r+2:\sys{K}].\label{S:zero}
 \end{IEEEeqnarray}
 Therefore,
\begin{IEEEeqnarray}{c}
\sum_{g'=1}^\sys{r+1}\theta_{g'}=1.\label{sum:theta_g}
\end{IEEEeqnarray}
From \eqref{LA:PDA}, \eqref{def:UKQ}, and \eqref{S:zero}, the communication load of $\bm A$ has the form
\begin{IEEEeqnarray}{rCl}
L_{\bm A}&=&\left(1-\frac{r}{\sys{K}}\right)\cdot\frac{1}{C_{\sys{K}-1}^{\sys{Q}-1}}\cdot\sum_{g'=1}^{r+1}\theta_{g'}\cdot U_\sys{K}^\sys{Q}(g')\notag\\
&\overset{(a)}{\geq}&\left(1-\frac{r}{\sys{K}}\right)\cdot\frac{1}{C_{\sys{K}-1}^{\sys{Q}-1}}\cdot\left(\sum_{g'=1}^{r+1}\theta_{g'}\right)\cdot U_\sys{K}^\sys{Q}(r+1)\label{equality:ahold}\\
&\overset{(b)}{=}&\left(1-\frac{r}{\sys{K}}\right)\cdot\frac{1}{C_{\sys{K}-1}^{\sys{Q}-1}}\cdot U_\sys{K}^\sys{Q}(r+1)\notag\\
&\overset{(c)}{=}&L_{\sys{K},\sys{Q}}^*(r),\notag
\end{IEEEeqnarray}
where $(a)$ follows \mw{since by  Lemma \ref{lem:UKQ}  the sequence $\{U_\sys{K}^\sys{Q}(u)\}_{u=2}^\sys{K}$ is decreasing and because}  $U_\sys{K}^\sys{Q}(1)=U_\sys{K}^\sys{Q}(2)=C_{\sys{K}-1}^{\sys{Q}-1}$; $(b)$ follows from \eqref{sum:theta_g}; and $(c)$ follows from Theorem \ref{thm:limits} and \eqref{def:UKQ}.
By our assumption $L_{\bm A}=L_{\sys{K},\sys{Q}}^*(r)$, the equality in \eqref{equality:ahold} must hold.
Since $r+1\geq 3$ and the sequence
$\{U_\sys{K}^\sys{Q}(u)\}_{u=2}^\sys{K}$ strictly decreases, equality in
\eqref{equality:ahold} implies that
\begin{IEEEeqnarray}{rCl}
\theta_{g'}=0,\quad \forall\; g'\in[r].\label{S:zero2}
\end{IEEEeqnarray}

Combining \eqref{S:zero} and \eqref{S:zero2}, we conclude that $\bm A$
is a $(r+1)$-regular PDA, and each row has exactly $r$ $``*''$ symbols.
Applying Lemma~\ref{lemma:lowerbound:F}, we complete the proof.

\section{Conclusion}\label{sec:conclusion}
In this work, we have explained
how to convert any Comp-PDA with at least $\sys{K}-\sys{Q}+1$
$``*''$ symbols in each row into a coded computing scheme for a MapReduce
system with $\sys{Q}$ non-straggling nodes.
We have further characterized the optimal storage-communication~(SC)
tradeoff for this system. The Comp-PDA framework allows us to design
\sheng{universal} coded computing schemes with  small file complexities compared to the ones (the MAN-PDAs) achieving the fundamental SC tradeoff.

In our setup, for a given integer storage load $r$, the size of active
set $\sys{Q}$ has to be no less than $\sys{K}-r+1$, since we  exclude
outage events (See Footnote \ref{footnote1}). With a given Comp-PDA, the key to \sheng{obtaining} a coded computing scheme for a given active set is that the subarray formed by the columns corresponding to the active set is still a Comp-PDA. 
\qifa{In fact, for the constructions in P1) and P2), it allows to construct coded computing schemes for some (but not all) active sets if the active set size $\sys{Q}$ satisfies $\lceil\frac{\sys{K}}{r}\rceil\leq\sys{Q}\leq \sys{K}-\sys{r}$.}
\begin{appendices}

\section{Proof of Lemma~\ref{lemma:Z}}\label{sec:claimZ}

We shall prove the first statement of the lemma that the sequence
$\big\{Z_\sys{K}^\sys{Q}(u)\big\}_{u=\sys{K}-\sys{Q}+1}^\sys{K}$ is
strictly convex and decreasing, i.e.,
\begin{IEEEeqnarray}{rCl}
Z_\sys{K}^\sys{Q}(u+1)-Z_\sys{K}^\sys{Q}(u)&<&0,\quad\forall\;u\in[\sys{K}-\sys{Q}+1:\sys{K}-1],\notag\\
Z_\sys{K}^\sys{Q}(u+1)-Z_\sys{K}^\sys{Q}(u)&>&Z_\sys{K}^\sys{Q}(u)-Z_\sys{K}^\sys{Q}(u-1),\quad\forall\;u\in[\sys{K}-\sys{Q}+2:\sys{K}-1].\notag
\end{IEEEeqnarray}
The second statement of the lemma on the piecewise linear function is an immediate consequence of the first one.

By \eqref{eqn:ZKQ},
\begin{IEEEeqnarray}{rCl}
Z_\sys{K}^\sys{Q}(u)
&=&\sum_{l=u+\sys{Q}-\sys{K}}^{\min\{u,\sys{Q}\}}\frac{\sys{Q}-l}{\sys{Q}l}\cdot C_u^l\cdot C_{\sys{K}-u}^{\sys{Q}-l}\notag\\
&=&\sum_{l=u+\sys{Q}-\sys{K}}^{\min\{u,\sys{Q}\}}\frac{C_u^l\cdot C_{\sys{K}-u}^{\sys{Q}-l}}{l}-\sum_{l=u+\sys{Q}-\sys{K}}^{\min\{u,\sys{Q}\}}\frac{C_u^l\cdot C_{\sys{K}-u}^{\sys{Q}-l}}{\sys{Q}}~\IEEEeqnarraynumspace\notag\\
&\overset{(a)}{=}&\sum_{l=u+\sys{Q}-\sys{K}}^{\min\{u,\sys{Q}\}}\frac{C_u^l\cdot C_{\sys{K}-u}^{\sys{Q}-l}}{l}-\frac{ C_\sys{K}^\sys{Q}}{\sys{Q}},\notag
\end{IEEEeqnarray}
where in $(a)$, we used the identity
$\sum_{l=s+\sys{Q}-\sys{K}}^{\min\{u,\sys{Q}\}}C_u^l\cdot
C_{\sys{K}-u}^{\sys{Q}-l}=C_\sys{K}^\sys{Q}$.
Then,
\begin{IEEEeqnarray}{rCl}
&&Z_\sys{K}^\sys{Q}(u+1)-Z_\sys{K}^\sys{Q}(u)\notag\\
&=&\sum_{l=u+1+\sys{Q}-\sys{K}}^{\min\{u+1,\sys{Q}\}}\frac{C_{u+1}^lC_{\sys{K}-u-1}^{\sys{Q}-l}}{l}-\sum_{l=u+\sys{Q}-\sys{K}}^{\min\{u,\sys{Q}\}}\frac{C_u^l C_{\sys{K}-u}^{\sys{Q}-l}}{l}\notag\\
&\overset{(a)}{=}&\sum_{l=u+1+\sys{Q}-\sys{K}}^{\min\{u+1,\sys{Q}\}} \frac{\left(C_u^l+C_u^{l-1}\right)\cdot C_{\sys{K}-u-1}^{\sys{Q}-l}}{l}-\sum_{l=u+\sys{Q}-\sys{K}}^{\min\{u,\sys{Q}\}} \frac{C_{u}^l\cdot\left(C_{\sys{K}-u-1}^{\sys{Q}-l}+C_{\sys{K}-u-1}^{\sys{Q}-l-1}\right)}{l}\IEEEeqnarraynumspace\label{step:sum1}\\
&\overset{(b)}{=}&\sum_{l=\sys{Q}+u+1-\sys{K}}^{\min\{u,\sys{Q}\}}\frac{ C_u^l\cdot C_{\sys{K}-u-1}^{\sys{Q}-l}}{l}+\sum_{l=\sys{Q}+u+1-\sys{K}}^{\min\{u+1,\sys{Q}\}}\frac{ C_u^{l-1}\cdot C_{\sys{K}-u-1}^{\sys{Q}-l}}{l}\notag\\
&&\quad-\sum_{l=\sys{Q}+u+1-\sys{K}}^{\min\{u,\sys{Q}\}}\frac{C_{u}^l\cdot C_{\sys{K}-u-1}^{\sys{Q}-l}}{l}-\sum_{l=\sys{Q}+u-\sys{K}}^{\min\{u,\sys{Q}-1\}}\frac{C_u^l\cdot C_{\sys{K}-u-1}^{\sys{Q}-l-1}}{r}\IEEEeqnarraynumspace\notag\\
&=&\sum_{l=u+1+\sys{Q}-\sys{K}}^{\min\{u+1,\sys{Q}\}}\frac{C_u^{l-1} C_{\sys{K}-u-1}^{\sys{Q}-l}}{l}-\sum_{l=u+\sys{Q}-\sys{K}}^{\min\{u,\sys{Q}-1\}}\frac{C_u^l C_{\sys{K}-u-1}^{\sys{Q}-l-1}}{l}\IEEEeqnarraynumspace\notag\\
&\overset{(c)}{=}&\sum_{l=u+\sys{Q}-\sys{K}}^{\min\{u,\sys{Q}-1\}}\frac{C_u^l C_{\sys{K}-u-1}^{\sys{Q}-l-1}}{l+1}-\sum_{l=u+\sys{Q}-\sys{K}}^{\min\{u,\sys{Q}-1\}}\frac{C_u^l C_{\sys{K}-u-1}^{\sys{Q}-l-1}}{l}\IEEEeqnarraynumspace\notag\\
&=&-\sum_{l=u+\sys{Q}-\sys{K}}^{\min\{u,\sys{Q}-1\}}\frac{C_u^l C_{\sys{K}-u-1}^{\sys{Q}-l-1}}{l(l+1)}\label{step:continue}\\
&<&0,\notag
\end{IEEEeqnarray}
where in $(a)$, we applied the identity
\begin{IEEEeqnarray}{c}
C_{n+1}^{m+1}=C_n^{m+1}+C_n^m;\label{eqn:combination}
\end{IEEEeqnarray}
in $(b)$, we separated the two summations of \eqref{step:sum1} into four summations and eliminated indices of zero terms in the separated summations; and in $(c)$,  we used the change of variable $l'=l-1$  in the first summation.
 Finally, from \eqref{step:continue}, for
 $u\in[\sys{K}-\sys{Q}+2:\sys{K}-1]$, we have
\begin{IEEEeqnarray}{rCl}
&&\left(Z_\sys{K}^\sys{Q}(u+1)-Z_\sys{K}^\sys{Q}(u)\right)-\left(Z_\sys{K}^\sys{Q}(u)-Z_{\sys{K}}^\sys{Q}(u-1)\right)\IEEEeqnarraynumspace\notag\\
&=&\sum_{l=\sys{Q}+u-1-\sys{K}}^{\min\{u,\sys{Q}\}-1}\frac{C_{u-1}^{l}C_{\sys{K}-u}^{\sys{Q}-l-1}}{l(l+1)}-\sum_{l=\sys{Q}+u-\sys{K}}^{\min\{u,\sys{Q}-1\}}\frac{C_{u}^lC_{\sys{K}-u-1}^{\sys{Q}-l-1}}{l(l+1)}\IEEEeqnarraynumspace\notag\\
&\overset{(a)}{=}&\sum_{l=\sys{Q}+u-1-\sys{K}}^{\min\{u,\sys{Q}\}-1}\frac{C_{u-1}^l\cdot \left(C_{\sys{K}-u-1}^{\sys{Q}-l-1}+C_{\sys{K}-u-1}^{\sys{Q}-l-2}\right)}{l(l+1)}-\sum_{l=\sys{Q}+u-\sys{K}}^{\min\{u,\sys{Q}-1\}}\frac{\left(C_{u-1}^l+C_{u-1}^{l-1}\right)\cdot C_{\sys{K}-u-1}^{\sys{Q}-l-1}}{l(l+1)}\label{eqn:sum2}\\
&\overset{(b)}{=}&\sum_{l=\sys{Q}+u-\sys{K}}^{\min\{u-1,\sys{Q}-1\}}\frac{C_{u-1}^l\cdot C_{\sys{K}-u-1}^{\sys{Q}-l-1}}{l(l+1)}+\sum_{l=\sys{Q}+u-1-\sys{K}}^{\min\{u-1,\sys{Q}-2\}}\frac{C_{u-1}^lC_{\sys{K}-u-1}^{\sys{Q}-l-2}}{l(l+1)}\notag\\
&&\quad\quad-\sum_{l=\sys{Q}+u-\sys{K}}^{\min\{u-1,\sys{Q}-1\}}\frac{C_{u-1}^lC_{\sys{K}-u-1}^{\sys{Q}-l-1}}{l(l+1)}-\sum_{l=\sys{Q}+u-\sys{K}}^{\min\{u,\sys{Q}-1\}}\frac{C_{u-1}^{l-1}C_{\sys{K}-u-1}^{\sys{Q}-l-1}}{l(l+1)}\IEEEeqnarraynumspace\notag\\
&=&\sum_{l=\sys{Q}+u-1-\sys{K}}^{\min\{u-1,\sys{Q}-2\}}\frac{C_{u-1}^l C_{\sys{K}-u-1}^{\sys{Q}-l-2}}{l(l+1)}-\sum_{l=\sys{Q}+u-\sys{K}}^{\min\{u,\sys{Q}-1\}}\frac{C_{u-1}^{l-1} C_{\sys{K}-u-1}^{\sys{Q}-l-1}}{l(l+1)}\IEEEeqnarraynumspace\notag\\
&\overset{(c)}{=}&\sum_{l=\sys{Q}+u-\sys{K}}^{\min\{u,\sys{Q}-1\}}\frac{C_{u-1}^{l-1}C_{\sys{K}-u-1}^{\sys{Q}-l-1}}{(l-1)l}-\sum_{l=\sys{Q}+u-\sys{K}}^{\min\{u,\sys{Q}-1\}}\frac{C_{u-1}^{l-1}C_{\sys{K}-u-1}^{\sys{Q}-l-1}}{l(l+1)}\notag\\
&=&\sum_{l=u+\sys{Q}-\sys{K}}^{\min\{u,\sys{Q}-1\}}\frac{2C_{u-1}^{l-1}C_{\sys{K}-u-1}^{\sys{Q}-l-1}}{(l-1)l(l+1)}\notag\\
&>&0,\notag
\end{IEEEeqnarray}
where in $(a)$ we applied the identity \eqref{eqn:combination};
in  $(b)$, we separated the two summations in \eqref{eqn:sum2} and
eliminated the indices of zero terms in the separated summations; and in
$(c)$, we used the change of variable $l'=l+1$. 

\section{Proof of Proposition \ref{proposition}}\label{app:B}

By \eqref{optimal:LQ}, \eqref{LQ} and \eqref{def:UKQ}, we have
\begin{IEEEeqnarray}{rCl}
L_{\sys{K},\sys{Q}}(r)&=&\left(1-\frac{r}{\sys{K}}\right)\cdot\frac{1}{C_{\sys{K}-1}^{\sys{Q}-1}}\cdot U_\sys{K}^\sys{Q}(r),\notag\\
L_{\sys{K},\sys{Q}}^*(r)&=&\left(1-\frac{r}{\sys{K}}\right)\cdot\frac{1}{C_{\sys{K}-1}^{\sys{Q}-1}}\cdot U_\sys{K}^\sys{Q}(r+1).\notag
\end{IEEEeqnarray}
Combining these equalities with \eqref{eq:negative}, we obtain
\begin{IEEEeqnarray}{rCl}
  L_{\sys{K},\sys{Q}}(r)-L_{\sys{K},\sys{Q}}^*(r)&=&-\left(1-\frac{r}{\sys{K}}\right)\cdot\frac{1}{C_{\sys{K}-1}^{\sys{Q}-1}}\cdot\left( U_\sys{K}^\sys{Q}(r+1)- U_\sys{K}^\sys{Q}(r)\right)\notag\\
  &=&\left(1-\frac{r}{\sys{K}}\right)\cdot\frac{1}{C_{\sys{K}-1}^{\sys{Q}-1}}\cdot\sum_{l=\max{\{2,r+\sys{Q}-\sys{K}\}}}^{\min\{r,\sys{Q}-1\}}\frac{ C_{r-1}^{l-1}\cdot C_{\sys{K}-r-1}^{\sys{Q}-l-1}}{l(l-1)}\notag\\
  &\overset{(a)}{=}&\left(1-\frac{r}{\sys{K}}\right)\cdot\frac{1}{C_{\sys{K}-1}^{\sys{Q}-1}}\cdot\frac{1}{r}\cdot\sum_{l=\max{\{2,r+\sys{Q}-\sys{K}\}}}^{\min\{r,\sys{Q}-1\}}\frac{ C_{r}^{l}\cdot C_{\sys{K}-r-1}^{\sys{Q}-l-1}}{l-1},\label{step:m}
\end{IEEEeqnarray}
where in $(a)$, we used the identity $C_{r-1}^{l-1}=\frac{l}{r}\cdot
C_r^l$.
Therefore, with \eqref{optimal:LQ} and \eqref{step:m},
\begin{IEEEeqnarray}{rCl}
\frac{L_{\sys{K},\sys{Q}}(r)-L_{\sys{K},\sys{Q}}^*(r)}{L_{\sys{K},\sys{Q}}^*(r)}
&=&\frac{1}{r}\cdot\frac{\sum_{l=\max{\{2,r+\sys{Q}-\sys{K}\}}}^{\min\{r,\sys{Q}-1\}}\frac{1}{l-1}\cdot
C_{r}^{l}\cdot
C_{\sys{K}-r-1}^{\sys{Q}-l-1}}{\sum_{l=r+\sys{Q}-\sys{K}}^{\min\{r,\sys{Q}-1\}}\frac{1}{l}\cdot
C_{r}^l\cdot C_{\sys{K}-r-1}^{\sys{Q}-l-1}}\notag\\
&\leq&\frac{1}{r}\cdot\frac{\sum_{l=\max{\{2,r+\sys{Q}-\sys{K}\}}}^{\min\{r,\sys{Q}-1\}}\frac{1}{l-1}\cdot C_{r}^{l}\cdot C_{\sys{K}-r-1}^{\sys{Q}-l-1}}{\sum_{l=\max{\{2,r+\sys{Q}-\sys{K}\}}}^{\min\{r,\sys{Q}-1\}}\frac{1}{l}\cdot C_{r}^l\cdot C_{\sys{K}-r-1}^{\sys{Q}-l-1}}\notag\\
&=&\frac{1}{r}\cdot\frac{\sum_{l=\max{\{2,r+\sys{Q}-\sys{K}\}}}^{\min\{r,\sys{Q}-1\}}\frac{l}{l-1}\cdot\frac{1}{l} \cdot C_{r}^{l}\cdot C_{\sys{K}-r-1}^{\sys{Q}-l-1}}{\sum_{l=\max{\{2,r+\sys{Q}-\sys{K}\}}}^{\min\{r,\sys{Q}-1\}}\frac{1}{l}\cdot C_{r}^l\cdot C_{\sys{K}-r-1}^{\sys{Q}-l-1}}\notag\\
&\overset{(a)}{\leq}&\frac{2}{r},\notag
\end{IEEEeqnarray}
where in $(a)$, we used the fact $\frac{l}{l-1}\leq 2$ for any $l\geq 2$.
To prove the second part, we first note
that, by Corollary \ref{cor}, the number of batches required by
constructions P1) and P2) is
 \begin{IEEEeqnarray}{c}
 \sys{F}=\frac{1}{c}\cdot q^{\frac{\sys{K}}{q}}.\label{eqn:Fi}
 \end{IEEEeqnarray}
 On the other hand, to achieve the fundamental SC tradeoff, the number of
 required batches is
 \begin{IEEEeqnarray}{rCl}
 \sys{F}^*&=&C_{\sys{K}}^{r} \notag\\
 &=&\frac{\sys{K}!}{r!(\sys{K}-r)!}\notag\\
 &\overset{(a)}{\geq}&\frac{\sqrt{2\pi }\sys{K}^{\sys{K}+\frac{1}{2}}e^{-\sys{K}}}{e\sqrt{2\pi }r^{r+\frac{1}{2}}e^{-r}\cdot e\sqrt{2\pi}(\sys{K}-r)^{\sys{K}-r+\frac{1}{2}}e^{-(\sys{K}-r)}}\notag\\
 &=&\frac{1}{e^2}\cdot\sqrt{\frac{\sys{K}}{2\pi r(\sys{K}-r)}}\cdot\left(\frac{\sys{K}}{r}\right)^r\left(\frac{\sys{K}}{\sys{K}-r}\right)^{\sys{K}-r}\notag\\
   &=&\frac{1}{e^2}\cdot\frac{q}{\sqrt{2\pi (q-1)\sys{K}}}\cdot q^{\frac{\sys{K}}{q}}\cdot\left(\frac{q}{q-1}\right)^{\sys{K}(1-\frac{1}{q}) },\label{eqn:Fi2}
 \end{IEEEeqnarray}
 where $(a)$ follows by applying  Stirling's approximation $\sqrt{2\pi}n^{n+\frac{1}{2}}e^{-n}\leq n!\leq e\sqrt{2\pi}n^{n+\frac{1}{2}}e^{-n}$ to both  the numerator and the denominator.
 Taking
 the ratio $\frac{\sys{F}}{\sys{F}^*}$ using \eqref{eqn:Fi} and
 \eqref{eqn:Fi2}, we complete the proof of the second part.






\end{appendices}

\end{document}